\newtheorem*{rep@theorem}{\rep@title}
\newcommand{\newreptheorem}[2]{%
\newenvironment{rep#1}[1]{%
 \def\rep@title{#2 \ref{##1}}%
 \begin{rep@theorem}}%
 {\end{rep@theorem}}}
\newcommand{\mydriver}{hypertex}
 \renewcommand{\mydriver}{pdftex}
\setlist[enumerate]{itemsep=3pt,topsep=3pt}
\newcommand{\Thr}{\ensuremath{T}}
\newcommand{\E}{\textrm{E}}
\newcommand{\poly}{\textrm{poly}}
\newcommand{\nis}{\ensuremath{\mathrm{nis}}}
\newcommand{\nscc}{\ensuremath{\mathrm{nscc}}}
\newcommand{\cnt}{\ensuremath{\mathrm{cnt}}}
\newcommand{\Insert}{\ensuremath{\mathbf{Insert}}}
\newcommand{\Delete}{\ensuremath{\mathbf{Delete}}}
\newcommand{\Query}{\ensuremath{\mathbf{Query}}}
\theoremstyle{plain}
\newtheorem{theorem}{Theorem}[section]%
\newtheorem{lemma}[theorem]{Lemma}
\newtheorem{corollary}[theorem]{Corollary}
\newtheorem{claim}[theorem]{Claim}
\theoremstyle{definition}
\newcommand{\ncc}{\ensuremath{\mathrm{ncc}}}
\newcommand{\err}{{\textsc{Err}}}
\newcommand{\NewL}[1]{L_{#1,\textrm{new}}}
\newcommand{\OldL}[1]{L_{#1,\textrm{old}}}
\newcommand{\RNewL}[1]{R_{#1,L_\textrm{new}}}
\newcommand{\ROldL}[1]{R_{#1,L_\textrm{old}}}
\newcommand{\GoodL}[1]{L_{#1}^{\textrm{<}}}
\newcommand{\GoodNewL}[1]{L_{#1,\textrm{new}}^{\textrm{<}}}
\newcommand{\GoodOldL}[1]{L_{#1,\textrm{old}}^{\textrm{<}}}
\newcommand{\medianN}[1]{m_{#1,{\textrm{new}}}}
\newcommand{\medianO}[1]{m_{#1,L_{\textrm{old}}}}
\newcommand{\NewN}[1]{N_{\textrm{new}}(#1)}
\newcommand{\OldN}[1]{N_{\textrm{old}}(#1)}
\newcommand{\FirstLower}{{\mathcal{F}}}
\newcommand{\ColorM}[2]{\mathcal{M}_{#1}(#2)}
\newcommand{\ColorU}[2]{\ensuremath{\mathcal{U}_{#1}(#2)}}
\newcommand{\ColorB}[1]{\mathcal{B}_{#1}}
\newcommand{\ColorH}[1]{\ensuremath{\mathcal{C}_{#1}(H)}}
\newcommand{\ColorLL}[1]{\ensuremath{\mathcal{C}_{#1}(L)}}
\newcommand{\ColorL}[1]{\ensuremath{\mathcal{C}_{#1}(\overline H)}}
\newcommand{\junk}[1]{{}}
\newcommand{\pnew}[1]{#1}
\newcommand{\Ra}[1]{\mathcal{R}{(#1)}}
\providecommand{\abs}[1]{\lvert#1\rvert}
\title{\Large Constant-Time Dynamic $(\Delta+1)$-Coloring and Weight Approximation for Minimum Spanning Forest: Dynamic Algorithms Meet Property Testing}
\author{
	Monika Henzinger\footnote{University of Vienna, Faculty of Computer Science, Vienna, Austria. E-mail: \texttt{monika.henzinger@univie.ac.at}. The research leading to these results has received funding from the European
		Research Council under the European Union's Seventh Framework Programme
		(FP/2007-2013) / ERC Grant Agreement no. 340506.}
	\and 
	Pan Peng\footnote{Department of Computer Science, University of Sheffield, Sheffield, UK. Email: \texttt{p.peng@sheffield.ac.uk}.}
}
\date{}
\begin{document}
	\begin{titlepage}
		\maketitle
		\thispagestyle{empty}

		\begin{abstract}
With few exceptions (namely, algorithms for maximal matching, $2$-approximate vertex cover, and certain
constant-stretch spanners), all known fully dynamic algorithms  in general graphs require (amortized) $\Omega(\log n)$ update/query time. 
Showing for the first time that techniques from property testing can lead to constant-time fully dynamic graph algorithms we prove the following results:

(1) We give a fully dynamic (Las-Vegas style) algorithm with {\em constant expected amortized}
time per update that maintains a proper $(\Delta+1)$-vertex coloring of a graph with maximum degree at most $\Delta$. This improves upon the previous $O(\log \Delta)$-time algorithm by Bhattacharya et al. (SODA 2018). 
We show that our result does not only have optimal running time, but is also optimal in the sense that  already deciding whether a $\Delta$-coloring  exists 
	in a dynamically changing graph with maximum degree at most $\Delta$ takes $\Omega(\log n)$ time per operation.

(2) We give two fully dynamic algorithms that maintain a $(1+\varepsilon)$-approximation of the weight $M$ of the minimum spanning forest of a graph $G$ with edges weights in $[1,W]$. Our \emph{deterministic} algorithm takes $O({W^2 \log W}/{\varepsilon^3})$ {\em worst-case} time, which is constant if both $W$ and $\varepsilon$ are constant.
This is somewhat surprising as a lower bound by Patrascu and Demaine (SIAM J. Comput. 2006) shows that it takes $\Omega(\log n)$ time per operation to maintain the {\em exact}
weight of the MSF that holds even for $W=1$. 
Our randomized (Monte-Carlo style) algorithm works with high probability and runs in \emph{worst-case} $O(\frac{1}{\varepsilon^4}\log^2(\frac{1}{\varepsilon}))$ time if $W=  O({(m^*)^{1/3}}/{\log^3 n})$, where $m^*$ is the minimum number of edges in the graph throughout all the updates. It works even against an adaptive adversary.

 \end{abstract}
\end{titlepage}
\section{Introduction}
A (fully) dynamic graph algorithm is a data structure that provides information about a graph property while the graph is being modified by {\em edge updates} such as edge insertions or deletions. When designing a dynamic graph algorithm the goal is to minimize the time per update or query operation.
The  lower bounds of Patrascu and Demaine~\cite{PatrascuD06} 
showed that in the cell-probe model many fundamental graph properties, such as asking whether the graph is connected, require $\Omega(\log n)$ time per operation, where $n$ is the number of nodes in the graph. Their lower bound technique also gives logarithmic time lower bounds for further dynamic problems such as higher types of connectivity, planarity and  bipartiteness testing, and minimum spanning forest, and it is an open research question for which other dynamic graph problems non-constant time lower bounds exist.

Furthermore, there are only very few graph problems for which it is known that no such lower bounds can exist. These are the following problems, which all have constant-time algorithms:
maintaining (a) a maximal matching (randomized)~\cite{Sol16:matching}, (b) a $(2+\varepsilon)$-approximate vertex cover (deterministic)~\cite{BhattacharyaK2019}, and (c)
 a $(2k-1)$-stretch spanner of size $O(n^{1+\frac{1}{k}}\log^2 n)$ for \emph{constant} $k$ (randomized)~\cite{BKS12:spanner}. 
All these are \emph{amortized} time bounds and all these algorithms maintain a sophisticated hierarchical graph decomposition, which makes them rather impractical.

Techniques from distributed, streaming, and online algorithms have been used  in the past to design efficient dynamic graph algorithms (see also the {\em Related Work Section} in Section~\ref{sec:related_work}). 
However, we are not aware of any dynamic graph algorithm in general graphs that exploits techniques from {\em sublinear-time algorithms  and property testing} 
and one goal of this paper is to push forward the 
study of the promising connection between these two fields.
Intuitively, in both fields, dynamic graph algorithms and property testing, we try to find out information about a graph using as little (time) resources as possible and, thus, we want to probe only very few ``places'' in the graph. As we show this intuition can indeed be exploited to achieve new constant-time dynamic graph algorithms.

\subsection{Our Contributions}
Our first and main contribution is a new randomized\footnote{For randomized dynamic algorithms, we assume the (almost) standard \emph{oblivious adversary}: the adversary who fixes the sequence of edge insertions
and deletions is \emph{oblivious} to the randomness in our algorithm.} dynamic algorithm for vertex coloring. Given a graph let $\Delta$ be an upper bound on the maximum degree in the graph. 
A {\em proper coloring} assigns to each vertex an integer value, called {\em color}, such that the endpoints of every edge have a different color. A $(\Delta+1)$-vertex coloring is a proper coloring that uses only colors from the range $[1, \dots, \Delta+1]$. %
{It was known that a proper $(\Delta+1)$-vertex coloring in a (static) graph with maximum degree at most $\Delta$ always exists and can be found in linear time by a simple greedy algorithm. Dynamically maintaining a proper $(\Delta+1)$-vertex coloring was investigated only very recently by Bhattacharya et al.~\cite{BCHN18:coloring}, who observed a trivial algorithm with $O(\Delta)$ worst-case update time (which simply scans the whole neighborhood once an edge is inserted between two nodes of the same color), and gave a randomized algorithm\footnote{In \cite{BCHN18:coloring}, the authors also provided a deterministic algorithm that maintains a $(\Delta+o(\Delta))$-vertex coloring with $O(\poly\log \Delta)$ amortized update time.} for this problem with $O(\log \Delta)$ expected amortized time per operation. Note that if $\Delta$ is polynomial in $n$, their algorithm takes time
$O(\log n)$. In this paper, we improve upon their algorithm and prove the following result.} We call a dynamic graph \emph{$\Delta$-bounded} if throughout the updates, the graph has maximum degree at most $\Delta$. 
\begin{theorem}\label{thm:coloring}
	There exists a fully dynamic algorithm for maintaining a proper $(\Delta+1)$-vertex coloring for a $\Delta$-bounded graph with $O(1)$ expected amortized update time.%
\end{theorem}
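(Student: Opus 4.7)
The plan is to maintain, for each vertex $v$, its current color $c(v)\in[1,\Delta+1]$, a count array $N_v[1..\Delta+1]$ with $N_v[c]$ equal to the number of neighbors of $v$ colored $c$, and an array-based \emph{free list} $F_v=\{c:N_v[c]=0\}$ that supports uniform sampling, insertion, and deletion in $O(1)$ time (using the standard ``array + index-pointer'' trick). Since $\deg(v)\le\Delta$, we always have $|F_v|\ge 1$.

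On an edge insertion $(u,v)$, first update $N_u,N_v$ and the lists $F_u,F_v$ in $O(1)$ time. If $c(u)\ne c(v)$ we are done; otherwise recolor the endpoint of smaller degree, say $v$, by drawing $c'$ uniformly from $F_v$, setting $c(v)\leftarrow c'$, and propagating the change to each neighbor $w$ of $v$ (decrement $N_w[\text{old}]$, restoring the old color to $F_w$ if it drops to $0$; increment $N_w[c']$, removing $c'$ from $F_w$ if it becomes $1$). Edge deletions are analogous but never trigger a recoloring. Correctness is immediate from $c'\in F_v$. The per-update cost is $O(1)$ plus, when a recoloring is triggered at $v$, an additional $O(\deg(v))$ for propagation.

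The heart of the proof is then an amortized analysis showing that over any sequence of $T$ updates the \emph{expected} total recoloring work is $O(T)$, yielding the claimed $O(1)$ expected amortized update time. The argument uses a token scheme exploiting the oblivious adversary: each edge insertion deposits $O(1)$ tokens at each endpoint, and each recoloring of $v$ is charged to tokens stored at the $\deg(v)$ edges currently incident to $v$. The key probabilistic fact is that whenever $v$ was most recently (re)colored, $c(v)$ was drawn uniformly at random from $F_v$ at that moment; because the adversary is oblivious to this randomness, for any later edge operation involving $v$ the probability (over the algorithm's coins) that the other endpoint's color collides with $c(v)$ is bounded in terms of $1/|F_v^{\text{old}}|$. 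In the regime $\deg(v)\le\Delta/2$, we have $|F_v|\ge\Delta/2$, so each incident insertion causes a conflict with probability $O(1/\Delta)$ and the expected recoloring work charged to $v$ per insertion is $O(\deg(v)/\Delta)=O(1)$.

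The main obstacle is the high-degree regime $\deg(v)\approx\Delta$, where $|F_v|$ may be as small as $1$ and individual conflicts are not rare. The plan here is to observe that any vertex with degree $d$ must have absorbed at least $d$ incident-edge insertions since it had degree $0$; these deposit $\Omega(d)$ unspent tokens at $v$, which is exactly enough to pay for one $O(d)$ recoloring. After each recoloring, $v$'s color is refreshed with independent randomness, so—by the oblivious-adversary argument—the expected number of additional incident-edge insertions needed to trigger the next recoloring is $\Omega(|F_v|)$, and one can combine this with a potential function tracking both the current degree and a ``freshness'' term (essentially the number of incident insertions since the last recoloring of $v$) to show the potential drops in expectation by enough to pay for each recoloring. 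Setting up this potential so that it decreases in expectation under insertions, deletions, and recolorings simultaneously—while respecting that the adversary commits to the update sequence without seeing the algorithm's coins—is the technical crux and the step where the property-testing viewpoint (sampling a single free color rather than probing all neighbors) plugs into standard dynamic-graph amortization.
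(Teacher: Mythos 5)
Your scheme is genuinely different from the paper's (which assigns static random ranks and samples from blank colors together with colors used by exactly one lower-ranked neighbor, leading to cascading recolorings bounded by a potential-function argument), but your approach has a fatal gap in the high-degree regime, and the paper's cascading machinery is exactly what is needed to close it.

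The core problem is that when you recolor $v$ by sampling uniformly from $F_v$ only, the amount of ``fresh randomness'' injected is proportional to $|F_v|$, which can be as small as $1$ or $2$ — far less than the $\Theta(\deg(v))$ work each recoloring costs. Your token bookkeeping lets each vertex pay for \emph{one} recoloring out of the insertions that built up its degree, but it does not produce new tokens fast enough for subsequent recolorings, precisely because the recoloring does not spread $c(v)$ over $\Omega(\deg(v))$ colors. Concretely, an oblivious adversary can refute the claimed $O(1)$ amortized bound as follows. Fix helper vertices $a_1,\dots,a_{\Delta-1}$ with distinct colors, make $v,u_1,u_2$ each adjacent to all the $a_j$'s (so $\deg(v)=\deg(u_1)=\deg(u_2)=\Delta-1$ and $F_v = F_{u_1}=F_{u_2}=\{A,B\}$ for two colors $A,B$); this setup costs $O(\Delta)$ operations. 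Since the coloring must be proper, $c(v),c(u_1),c(u_2)\in\{A,B\}$, and after conditioning on the (conflict-free) tail of the setup these colors are uniform and independent over $\{A,B\}$. Now the adversary alternately inserts and deletes $(v,u_1)$, then $(v,u_2)$, then $(v,u_1)$, and so on. After any conflict at $v$ (or at $u_i$, depending on your tie-break) the sample set $F$ at the moment of recoloring has size exactly $1$, so the new color is \emph{deterministic}: $c(v)$ becomes the single color in $\{A,B\}$ different from the color of the endpoint just inserted. A short induction then shows that, conditioned on the event $c(u_1)\ne c(u_2)$ (probability $1/2$), \emph{every} subsequent insertion in this cycle causes a conflict and an $\Omega(\Delta)$-cost recoloring. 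Over $T$ cycle operations the expected total work is $\Omega(T\Delta)$ while only $O(\Delta)$ tokens were ever deposited, so the amortized expected cost is $\Omega(\Delta)$, not $O(1)$. The obliviousness of the adversary does not help you here because the recoloring does not reintroduce entropy when $|F_v|$ is tiny.

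This is exactly the obstacle Bhattacharya et al. and the present paper overcome by enlarging the palette to include ``unique'' colors: they prove the resulting sample set always has size $\Omega(\deg(v))$ (or $\Omega(|L_v|)$ in the rank-based version), so a new conflict involving $v$ happens with probability $O(1/|L_v|)$, which does match the $O(|L_v|)$ recoloring cost. The price is that choosing a unique color displaces a neighbor and triggers a recursive recoloring, and bounding the length and total cost of that cascade — via random ranks, the median-based sampling, and the potential $\Phi$ over the recoloring path — is where all the technical work in the paper lives. Your proposal avoids the cascade but, as the example shows, at the price of losing the essential lower bound on the sample set size, and no amount of token/potential bookkeeping can recover $O(1)$ amortized time once the sample set can be $O(1)$ adversarially often.

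Two minor additional remarks: (a) the paper recolors the endpoint that was colored \emph{more recently} (using time stamps), not the smaller-degree endpoint; this choice is part of what makes the probabilistic argument go through, since it ties the conflict probability to the size of the sample set used at the last recoloring. (b) Maintaining $N_v[1..\Delta+1]$ explicitly for every $v$ uses $\Theta(n\Delta)$ space, which is more than the paper needs, although this is not the main issue.
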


Apart from having optimal running time, our result is also optimal in the sense that deciding whether a proper coloring with only $\Delta$ colors exists 
	in a dynamically changing graph (with maximum degree at most $\Delta$) takes at least  $\Omega(\log n)$ time per operation, as we show in Appendix~\ref{sec:lb_delta_coloring}.

Our second contribution is two new dynamic algorithms for maintaining the approximate weight of a {\em minimum spanning forest} ({\em MSF}). 	
Given an edge-weighted graph $G$ an {MSF} is a subgraph of $G$ that forms a spanning forest 
and has minimum weight among all spanning forests of $G$. The weight of an MSF is the sum of the edge weights of the MSF.  A $(1+\varepsilon)$-approximation of the weight $M$ of an MSF is a value $M'$ such that $(1-\varepsilon) \cdot M \le M' \le (1 +\varepsilon) \cdot M$.
We show that a $(1+\varepsilon)$-approximation of the weight of an MSF can be maintained deterministically in a graph with edge weights in the interval $[1, W]$ in worst-case time $O(W^2 \log W/\varepsilon^3)$ per operation.
For constant $W$ and $\varepsilon$ this is a \emph{constant deterministic worst-case} time bound.
This result is somewhat surprising, as the lower bound of $\Omega(\log n)$ by~\cite{PatrascuD06}  applies for maintaining the {\em exact} weight of an MSF even for $W=1$.

We also give a randomized dynamic algorithm that, with high probability, maintains a $(1+\varepsilon)$-approximation of the weight $M$ of the MSF of a dynamic graph in
	worst-case time $O({\log^2(1/{\varepsilon})}/{\varepsilon^4})$ if $W=  O(\max\{1, {(m^*)^{1/3}}/{\log^3 n}\})$, where $m^*$ is the minimum number of edges in the graph throughout all the updates. This is useful if the graph always has $\omega(\poly(\log n))$ edges. 
	Interestingly, our algorithm works against an adaptive adversary, which is an adversary that sees the answers to all query operations before deciding which edge to update next.

Our algorithms (with constant $\varepsilon$ and small $W$) are much faster than the best known 
algorithms for maintaining the {\em exact} weight of an MSF: They 
assume only that $W$ is polynomial in $n$,  but take $O(\log^4 n/\log\log n)$ expected amortized time per operation~\cite{HRW15:MST} (that improves upon \cite{HLT01:connectivity}) 
and $O(n^{o(1)})$ expected worst-case time per operation~\cite{NSW17:MST}. 
{Furthermore, combining the techniques of~\cite{HenzingerK01} and of~\cite{KapronKM13}, one can also maintain a spanning forest whose weight is a $(1+\varepsilon)$-approximate of the weight of an MSF with $O(\log^4n/\varepsilon)$ worst-case update time.}

As we recently learnt, Bhattacharya et al.~\cite{BComm} achieved a randomized dynamic algorithm for $(\Delta+1)$-vertex coloring with constant amortized update time independently.

\subsection{Our Techniques}
 We use the following  techniques in our algorithms.

(1)  {\em $(\Delta+1)$-vertex coloring.} 
We first give a brief overview of the algorithm in~\cite{BCHN18:coloring} that maintains a proper $(\Delta+1)$-vertex coloring for a dynamic graph with maximum degree at most $\Delta$. First note that for edge deletion, there is no need to change an existing proper coloring, denoted by $\chi$. For an edge insertion $(u,v)$, if it does not cause a conflict, i.e., $\chi(u)\neq \chi(v)$, then the coloring remains unchanged. If a conflict occurs (i.e., $\chi(u)=\chi(v)$), then one needs to fix the coloring by recoloring one vertex from $\{u,v\}$, say $u$. Instead of scanning the whole neighborhood of $u$ to find the color (called a \emph{blank} color) that has not been used by any of its neighbors, the algorithm in~\cite{BCHN18:coloring} tries to \emph{sample} a color from the set that contains both blank colors and colors (called \emph{unique} colors) that have been used by exactly one neighbor of $u$. It is observed in \cite{BCHN18:coloring} that such a set has size $\Omega(\Delta)$, which guarantees that a future conflict edge incident to $u$ occurs with low probability (i.e., with probability at most $O(\frac{1}{\Delta})$). On the other hand, if a unique color is chosen, one needs to recolor the corresponding vertex $w$ (which is a neighbor of $u$), again, using a new color sampled from the set of blank and unique colors for $w$. This procedure might cause a cascade and even not terminate at all. %
The dynamic $(\Delta+1)$-vertex coloring algorithm of~\cite{BCHN18:coloring} resolves this problem by maintaining a hierarchical graph decomposition, and when recoloring a node it picks a color randomly out of all colors that are either (i) used by none of the neighbors or (ii) used by at most one of the neighbors {\em on a lower level} in the graph hierarchy.  The resulting algorithm is then shown to having $O(\log \Delta)$ amortized update time for maintaining a proper coloring. However, maintaining such a hierarchical partition is not only complicated, but also inefficient, as it alone already takes $O(\log \Delta)$ amortized update time. 
	
Now we briefly describe our main idea which leads to a constant-time dynamic coloring algorithm.	We show  that an approach based on \emph{assigning random ranks} to vertices outperforms the graph-hierarchy based algorithm: 
	During preprocessing each node $v$ is assigned a random rank  $r(v)$ from $[0,1]$ and a random color (assuming as usual that the initial graph is empty). Let $L_v$ denote the
	set of neighbors of a node $v$ with rank lower than $r(v)$ and let $\GoodL{v}$ denote the set of neighbors of $v$ whose rank is at most the median rank of the nodes in $L_v$.
	When recoloring  $v$, 
	we pick a color randomly out of all colors that are either (i)  used by none of its neighbors (called {\em blank} colors)  or (ii) by at most one neighbor in  $L_v$  {\em and} this node belongs to $\GoodL{v}$. (We show that there are always $\Omega(|L_v|)$ many such colors.)
	In case (ii) this neighbor $w$ must be recolored. 
	This is done with a more refined recoloring procedure 
	that additionally to the above information takes into account which nodes of $L_{w}$ also belong to $N(v)$, the neighborhood of $v$. It randomly samples  a color out of the set which consists of (i) all blank colors and (ii) all
	colors which are used by exactly one node
	in $L_w$ and, depending on the size of the respective sets,  {\em either} are used by a node in $L_{w,\textrm{new}}^{\textrm{<}}$, \emph{or} by a node in $L_{w, \textrm{old}}^{\textrm{<}}$, where $\NewL{w}:=L_{w} \setminus N(v)$ and $\OldL{w}:=L_{w} \cap N(v)$. 
	{More specifically, if $L_{w,\textrm{new}}^{\textrm{<}}$ combined with the blank colors
	has size $\Omega(|L_w|)$, we sample from this set, otherwise from $L_{w, \textrm{old}}^{\textrm{<}}$ combined with the blank colors.}
	This is necessary to guarantee that the new color is chosen randomly from a set of
	$\Omega(|L_w|)$ colors. 
	
	If the color of a node $y$
	in $L_{w,\textrm{new}}^{\textrm{<}}$ or $L_{w,\textrm{old}}^{\textrm{<}}$ 
	was chosen, $y$ will be recolored recursively taking $N(x)$ for {\em all} previously visited nodes $x$ into account. If $y$ was chosen from $L_{w,\textrm{new}}^{\textrm{<}}$, it is called a {\em good} vertex, otherwise a {\em bad} vertex.
	This results in a recoloring of nodes along a random path $P$ in the graph until a blank color is chosen (which is guaranteed to happen if a node $y$ with
	$L_y = \emptyset$ is reached) in total time $O(\sum_{y \in P} |L_y|)$.
	
	However, even though the rank of the next node is at most the median rank of the lower-ranked neighbors of  the previous node, 
	(which, if there were no dependencies between the ranks of the nodes on $P$, would imply that the {\em expected rank} will halve in each step),
	the {\em expected size of $L_y$} is not guaranteed to halve in each step.  
	To deal with this we need to (a) introduce a novel potential function
	$\Phi$ based on the sizes of a suitable subset of $L_y$  for each $y$ on $P$ and (b) {carefully analyze} the expected number of lower-ranked neighbors of the nodes on $P$, dealing with dependencies that arise by the fact that two nodes on $P$ might share neighbors.  
	More specifically, we show that,
	when traversing $P$ from an initial vertex $v$, at every good vertex the {\em expected rank} halves, while at every bad vertex $\Phi$ drops. 
	As (i) $\Phi$ is always non-negative, (ii) $\Phi$ only increases at good vertices, and (iii) the drop of $\Phi$ gives an upper bound of the time spent at bad vertices, we can {\em bound the total time for coloring all the vertices on $P$ by the total time spent at the good vertices on $P$}. At the good vertices, however, the expected rank is halved. 
	Due to our { sampling routine picking colors from neighbors with at most median rank} and with a careful analysis of the dependencies, we show that the total expected time at the good vertices on $P$, i.e. $O(\sum_{y \in P, y {\textrm{:good}}} |L_y|)$,
	forms a geometric series adding up to 
 $O(\alpha \Delta)$, where $\alpha$ is the rank of the initial vertex $v$. Finally, we combine this bound with the fact that for many operations (such as all deletions and many insertions) no recoloring is necessary to show that the expected amortized time per update operation is constant. 
 This 
 depends crucially on the fact 
 that the color of each node $v$ was picked uniformly at random from a set of $\Omega(|L_v|)$ many colors.
 
 Note that the refined sampling routine as well as the analysis that combines a potential function analysis with a careful analysis of the expected size of the sets $L_y$ along a random path $P$ is novel.
	Furthermore, while the idea of assigning random ranks to nodes was used before in the area of property testing (see e.g.~\cite{NO08:constant,HKNO:local,YYI12:constant,ORRR12:VC}) and in dynamic distributed algorithms~\cite{censor2016optimal}, this is, to the best of our knowledge,
	 its first use in  (centralized) dynamic graph algorithms for general graphs.
	The technique  has the advantage that, unlike in a hierarchical graph decomposition where the ordering of nodes by levels might change and needs to be updated, the ordering of nodes by ranks is static and does not create update costs. However, it has the disadvantage that, unlike in the hierarchical graph decomposition of~\cite{BCHN18:coloring},  (1) we do not have a worst-case upper bound on the number of nodes that are ``lower'' in the ordering
	and (2) the length of $P$, which is limited by the  longest strictly decreasing path in the ordering, might be $\Theta(n)$ and not $\Theta(\log \Delta)$ in the worst case, as 
	in~\cite{BCHN18:coloring}. 
We believe that this approach of assigning (static) random ranks to vertices instead of maintaining a hierarchical graph decomposition is of independent interest as
it might lead to constant-time
algorithms for other dynamic graph problems.

(2) {\em $(1+\varepsilon)$-approximation for the weight of an MSF of a graph $G$.} %
Both our deterministic and randomized algorithms for this problem use an approach developed in the area of property testing: Build an efficient algorithm for estimating the number of connected components (CCs) in a graph and apply it to suitable subgraphs of $G$~\cite{CRT05:MST,CS09:MST,AGM12:linear}.
More specifically, we build constant-time dynamic algorithms  that estimate the number of
CCs with appropriate additive error, apply them to $O(\log W/\varepsilon)$ many subgraphs,
and then use an extension of the formula in~\cite{CRT05:MST} to disconnected graphs to estimate the weight of an MSF. 
	
Though the above idea is quite simple, it is non-trivial to dynamically maintain the number CCs with the ``right'' additive error (see \emph{High-Level Ideas} in Section~\ref{sec:highlevel}). Furthermore, the techniques are very different from the fastest dynamic (exact) MSF algorithms: the algorithm of~\cite{HRW15:MST,HLT01:connectivity}
maintains a hierarchical decomposition with $O(\log n)$ levels, the algorithm of~\cite{NSW17:MST} maintains a decomposition of the graph into
expanders and a ``remaining'' part.

\subsection{Other Related Work}\label{sec:related_work}
Partially due to the $\Omega(\log n)$ lower bound for the fundamental problem of testing connectivity \cite{PatrascuD06}, a
large amount of previous research on dynamic graph algorithms has focused on algorithms with polylogarithmic or super-polylogarithmic update time.
Examples include testing $k$-edge (or vertex) connectivity~(see e.g., \cite{EppsteinGIN97,HLT01:connectivity,HenzingerK99}), maintaining minimum spanning tree~(see e.g., \cite{Fred83,EppsteinGIN97,HenzingerK99,HK97:mst,HLT01:connectivity,HRW15:MST,KapronKM13,wulff2017fully,NS17dynamic,NSW17:MST}), and graph coloring~\cite{barenboim2017fully,barba2017dynamic,BCHN18:coloring,SW18:coloring, duan2019dynamic}. There are also studies on \emph{partially dynamic} graph algorithms, including \emph{incremental algorithms} that only allow edge insertions, and \emph{decremental algorithms} that only allow edge deletions throughout all the updates. In contrast to such studies, our work is focusing on \emph{fully dynamic} algorithms, in which both edge insertions and deletions are allowed. 

We remark that the connections of dynamic graph algorithms with other fields that concern ``locality'' has been exploited. 
Such examples include Solomon's work on local algorithms for constructing bounded-degree sparsifiers, which lead to a dynamic algorithm for maintaining a $(1+\varepsilon)$-approximation of vertex cover with constant update time for any constant $\varepsilon>0$ in planar graphs~\cite{solomon2018local}. 
The classic sparsification technique for dynamic graph algorithms in some sense also makes uses of locality~\cite{EppsteinGIN97}. 
There is more work in dynamic graph algorithms that are explicitly or implicitly related to techniques from distributed computation~(e.g., \cite{barenboim2017fully}), not to mention the area distributed dynamic graph algorithms~(see e.g., \cite{censor2016optimal}). 
However, the locality properties used in these techniques are different from ours (i.e., property testing techniques), as the corresponding static property testing algorithm not only just read the local neighborhood of vertices, but just read such neighborhoods of \emph{very few} vertices, while still have provable performance guarantee of the global structure of graphs.  

In the field of constant-time algorithms, an algorithm is given \emph{query access} (e.g., to the adjacency list and/or adjacency matrix) to a fixed input graph (instead of a dynamic graph), and only makes a constant number queries to the graph. Examples include $(1+\varepsilon)$-approximating the weight of the MST \cite{CRT05:MST}, approximating the number of connected components with additive error $\varepsilon n$~\cite{CRT05:MST,BKM14:numcc}, $(1,\varepsilon n)$-approximating\footnote{We use the term $(\alpha,\beta)$-approximation algorithm to denote an algorithm that approximates the objective with a guarantee of $\alpha \text{Opt} + \beta$, where $\text{Opt}$ denotes the cost of an optimal solution.} the size of maximal/maximum matching in bounded average graphs~\cite{NO08:constant,YYI12:constant}, $(2,\varepsilon n)$-approximating the minimum vertex cover size~\cite{PR07:sublinear_distributed,MR09:distance,ORRR12:VC}, $(O(\log d), \varepsilon n)$-approximating the minimum dominating set size~\cite{PR07:sublinear_distributed,NO08:constant}. For $d$-bounded minor-free graphs, there are constant-time $(1,\varepsilon n)$-approximation algorithms for the size of minimum vertex cover, minimum dominating set and maximum independent set~\cite{HKNO:local}.   

Recently, constant-time property testing and approximation algorithms have also been transformed to constant-space random order streaming algorithms~\cite{MMPS17:testable,PS18:stream,CFPS19}. However, these work concentrates mainly on the space complexity of the algorithms and assumes the edges come from a \emph{uniform random} order (that only allows edge insertions), rather than an arbitrary edge sequence with both edge insertions and deletions as we are considering here. Sublinear-time algorithms for $(\Delta+1)$-vertex coloring in the graph streaming model, query access model and the massively parallel computation model have recently been studied~by Assadi et al.~\cite{ACK19:coloring}.

\section{Preliminaries}\label{sec:preliminaries}
A \emph{fully dynamic} graph algorithm is an algorithm that maintains a 
graph property in a graph $G=(V,E)$ which is undergoing an arbitrary sequence of the following  operations: 1) \Insert($u,v, w$): insert the edge $(u,v)$ with weight $w$ in $G$; 2) \Delete($u,v$): delete the edge $(u,v)$ from $G$.
If the considered graph is unweighted, then the weight $w$ of an insertion is always set to be $1$. 
In the $(\Delta+1)$-vertex coloring problem the fully dynamic graph algorithm maintains after each update operation a proper $(\Delta+1)$-vertex coloring of the current graph, where $\Delta$ is the maximum degree of any vertex since the beginning of the sequence until now. When asked to perform a \Query($u$) operation, the algorithm returns the color of the given vertex $u$. In the $(1+\varepsilon)$-approximate MSF weight problem the fully dynamic algorithm maintains a value $M'$ that is
a $(1+\varepsilon)$ approximation of the weight of the MSF in the current graph and it can return this value in constant time. 
It returns $M'$  when asked a \Query() operation.
\section{Maintaining a Proper $(\Delta+1)$-Vertex Coloring}\label{sec:coloring}
In this section, we give our constant-time dynamic algorithm and its analysis for maintaining a proper $(\Delta+1)$-coloring in a dynamic $\Delta$-bounded graph\footnote{Our algorithm can actually be extended to handle the case that the maximum degree $\Delta$ also changes. See Appendix~\ref{sec:chaningdelta} for more discussions.} and present the proof of Theorem~\ref{thm:coloring}. Recall that a dynamic graph is called to be $\Delta$-bounded if throughout the updates, it is $\Delta$-bounded. Given $\Delta$, let $\mathcal{C}:=\{1,\cdots, \Delta+1\}$ denote the set of {\em colors}. 
A coloring $\chi: V\rightarrow \mathcal{C}$ is \emph{proper} if 
$\chi(u)\neq \chi(v)$ for any $(u,v)\in E$. %

\subsection{Data Structures and the Algorithm}
\paragraph{Data structures.} We use the following data structures.

(1) We maintain a vertex coloring $\chi$ as an array such that $\chi(v)$ denotes the color of the current graph and guarantee that $\chi$ is a proper $(\Delta+1)$-vertex coloring after each update.

(2) For each vertex $v\in V$ we maintain: (a) its rank $r(v)$ that is chosen uniformly at random from $[0,1]$ during preprocessing; (b) its degree $\deg(v)$; (c) the last time stamp, denoted by $\tau_v$, at which $v$ was recolored; (d) two sets $L_v:=\{u: (u,v)\in E, r(u)<r(v)\},H_v:=\{u: (u,v)\in E, r(u)\geq r(v)\}$, which contain all neighbors of $v$ with ranks less than $v$, and all neighbors of $v$ with ranks at least $v$ (including $v$ itself), respectively; (e) the sizes of the previous two sets, i.e., $|L_v|$ and $|H_v|$. Note that $\deg(v)=|L_v\cup H_v|=|L_v|+|H_v|$.

{For each vertex $v\in V$ note that every color of $\cal C$ is either
(i) used by no neighbor of $v$ (and we call such color a \textit{blank} color for $v$), (ii) used by a neighbor in $H_v$, or (iii)
used by a neighbor in $L_v$ {and by {\em no} neighbor in $H_v$}. We call the corresponding sets of colors
(i) 
$\ColorB{v}$, (ii)
$\ColorH{v}$, and (iii) $\ColorLL{v}$.
We further partition $\ColorLL{v}$ into
(iii.1) $\ColorU{v}{L}$, which denotes the set of \textit{unique} colors for $v$ that have been used by exactly one vertex in $L_v$ and
(iii.2) $\ColorM{v}{L}$, which denotes the set of colors that have been used by at least two vertices in $L_v$.
Thus, $ \mathcal{C} = \ColorH{v}\  \dot\cup\  \ColorB{v}\ \dot\cup\ \ColorU{v}{L}\ \dot\cup\ \ColorM{v}{L}$. 
As it will be useful below, we finally define  $\ColorL{v}:=\ColorB{v}\cup \ColorU{v}{L}\cup \ColorM{v}{L}$.
Note that for any fixed $v$, a color $c$ can appear in exactly one of the two sets $\ColorH{v}$ and $\ColorL{v}$.}

(3) (i) For every vertex $v$, we maintain 
$\ColorH{v}$ and $\ColorL{v}$ in doubly linked lists. %
(ii) For each color $c\in \mathcal{C}$ and vertex $v\in V$, we keep the following information:
(a) a pointer $p_{c,v}$ from $c$ to its position in either $\ColorH{v}$ or $\ColorL{v}$, depending on which list it belongs to;
(b) a counter $\mu^H_v(c)$ such that $\mu^H_v(c)$ equals the number of neighbors in $H_v$ with color $c$ if $c\in \ColorH{v}$; or equals $0$ if $c\in \ColorL{v}$. %
(iii) For any vertex $v$ and color $c \in \mathcal{C}$ we keep the 
pointer $p_{c,v}$ in a hash table $\mathcal{A}_v$ which is indexed by $c$.
(iv) For any vertex $v$ and color $c\in \ColorH{v}$, we maintain the pairs $(c,\mu^H_v(c))$ in a hash table $\mathcal{A}^H_v$ which is indexed by the pair $(v,c)$.

More precisely, we use the dynamic perfect hashing algorithm by Dietzfelbinger et al.~\cite{DKMHRT94:hashing}, which takes amortized expected constant time per update and worst-case constant time for lookups. Alternatively we can get constant worst-case time for updates and lookups by spending time $O(n \Delta)$ during preprocessing to initialize suitable arrays. To simplify the presentation and since the randomness in the hash tables is independent of the randomness used by the algorithm otherwise, we will not mention the randomness introduced through the usage of hash tables in the following.
\paragraph{Initialization.} As the initial graph $G_0$ is empty, we initialize as follows:
(1) For each vertex $u\in V$, sample a random number (called \emph{rank}) $r(u)\in [0,1]$.
(2) Color each vertex $u$ by a random color $\chi(u)\in \mathcal{C}=\{1,\cdots, \Delta+1\}$ and initialize all the data structures suitably. \pnew{In particular, for each $u\in V$, we initialize $\ColorH{u}$ to be the empty list and $\ColorL{u}$ to be the doubly linked list containing all colors in $\mathcal{C}$. Note that the latter takes $O(n \Delta)$ time. In Appendix~\ref{subsec:initialization}, we show that the initialization time can be reduced to $O(n)$ while keeping constant expected amortized update time. }

{\bf Time stamp reduction.} Our algorithm does not use the actual values of the time stamps, only their relative order. Thus, every  $\poly(n)$ (say, $n^4$) number of updates we
determine the order of the vertices according to the time stamps and set the time stamps of every vertex to equal its position in the order and set the current time stamp to $n+1$. 
This guarantees that we only need to use $O(\log n)$ bits to store the time stamp $\tau_v$ for each vertex $v$ and it does not affect the ordering of the time stamps. 
The cost of the recomputation of the time stamps is $O(n \log n)$ and can be amortized over all the operations that are performed between two updates, increasing their running time only by an additive constant.

\paragraph{Handling an edge deletion.} Note that the edge deletion $(u,v)$ does not lead to a violation of the current proper coloring, so we do not need to recolor any vertex. We only need to update the data structures corresponding to the two endpoints, for which we refer to Appendix~\ref{sec:appendix_data_structure} for details.

\paragraph{Handling an edge insertion.} For an edge insertion $(u,v)$, we note that if $\chi(u)\neq \chi(v)$ before the insertion, then we only need to update the basic data structures corresponding to the two endpoints. If $\chi(u)=\chi(v)$, i.e, the current coloring $\chi$ is not proper any more, then we need to recolor one vertex $w\in\{u,v\}$ as well as to  update the relevant data structures. We always recolor the vertex that was colored last, i.e., the one with larger $\tau_w$. Wlog, we assume this vertex is $v$. Then we invoke a subroutine \textsc{Recolor}($v$) to recolor $v$ and potentially some other lower level vertices, and update the corresponding data structures. %
More precisely, we will first update $H_u, L_u, H_v, L_v$ and their sizes trivially in constant time. Then if $\chi(u)\neq \chi(v)$, we update the data structures corresponding to $u,v$ as described in Appendix~\ref{sec:appendix_data_structure}.

If $\chi(u)=\chi(v)$, and w.l.o.g., suppose that $\tau_v>\tau_u$, then we recolor $v$ by invoking the procedure \textsc{Recolor($v$)} below, where $\ColorU{v}{L}$ denotes the set of colors that have been used by {\em exactly one} vertex in $L_v$.
\begin{table}[h]
	\centering
	\begin{tabular}[t]{|p{\textwidth}|}
		\hline
		\textsc{Recolor($v$)}%
\begin{enumerate}
\item Run \textsc{SetColor}$(v)$ and obtain a new color $c$ (from $\mathcal{B}_v\cup \ColorU{v}{L}$).
\item Set $\chi(v)=c$. Update the data structures by the process $(\divideontimes)$ described in Appendix~\ref{sec:appendix_data_structure}.
\item\label{alg:recolor3} If $c\in \ColorU{v}{L}$,
\begin{enumerate}
\item Find the unique neighbor $w\in L_v$ with $\chi(w)=c$.
\item \textsc{Recolor($w$)}.
\end{enumerate}
\item {If $c\in \mathcal{B}_v$, then remove all the \textbf{visited} marks generated from the calls to \textsc{SetColor}.} 
\vspace{-1em}
\end{enumerate} 
		\\
		\hline
	\end{tabular}
\end{table}
Note that the recursive calls will eventually terminate as  for every recursive call \textsc{Recolor($w$)} in Step \ref{alg:recolor3} it holds that
$r(w) < r(v)$. Furthermore, no recursive call will be performed when $L_v = \emptyset$ as
it implies that $\ColorU{v}{L} = \emptyset$.
The subroutine \textsc{ReColor($v$)} calls the following subroutine \textsc{Setcolor($v$)}.

\begin{table}[h]
	\centering
	\begin{tabular}{|p{\textwidth}|}
		\hline
		\textsc{SetColor($v$)}
		\begin{enumerate}%
			\item Mark $v$ as \textbf{visited}. Initialize sets $\OldL{v}:=\{v\}$ and $\NewL{v}:=\emptyset$. 

			Scan the list $L_v$: for any $u\in L_v$, if it is marked as $\mathbf{visited}$, then add $u$ to $\OldL{v}$; otherwise (i.e., it is not marked), then add $u$ to $\NewL{v}$ and mark $u$ as $\mathbf{visited}$.

			\item\label{alg:recolor_low_degree_2} If $|L_v|+|H_v|< \frac{\Delta}{2}$ (i.e., $\deg(v)<\frac{\Delta}{2}$), repeatedly sample a color uniformly at random from $[\Delta+1]$ until we get a color $c$ that is contained in $\mathcal{B}_v$, the set of \emph{blank} colors for $v$ that have not been used by any neighbor of $v$.

			\item\label{alg:recolor_largeBv_2} Otherwise, we let $\GoodNewL{v}$ denote the subset of vertices in $\NewL{v}$ with ranks at most the median of all ranks of vertices in $\NewL{v}$. 
			{We let $\ColorU{v}{L_{\textrm{new}}^{\textrm{<}}}$ denote the set of colors that each has been used by exactly one vertex in $\NewL{v}$ and additionally this vertex belongs to $\GoodNewL{v}$.}  Define $\GoodOldL{v}$ and $\ColorU{v}{L_{\textrm{old}}^{\textrm{<}}}$ similarly.
 			\begin{enumerate}
				\item\label{alg:manynew} If $|\NewL{v}|\geq \frac{1}{10}|L_v|$ \pnew{or
				$L_v = \emptyset$}, then \pnew{we sample a random color $c$ from the set of the first $\min\{|\mathcal{B}_v\cup \ColorU{v}{L_{\textrm{new}}^{\textrm{g}}}|, |\GoodNewL{v}|+1\}$ elements of $\mathcal{B}_v\cup \ColorU{v}{L_{\textrm{new}}^{\textrm{<}}}$.}
				\item\label{alg:manyold} Else (i.e., $|\OldL{v}|>\frac{9}{10}|L_v|$) \pnew{we sample a random color $c$ from the set of the first $\min\{|\mathcal{B}_v\cup \ColorU{v}{L_{\textrm{old}}^{\textrm{<}}}|, |\GoodOldL{v}|+1\}$ elements of $\mathcal{B}_v\cup \ColorU{v}{L_{\textrm{old}}^{\textrm{<}}}$.}
			\end{enumerate}
			\item\label{alg:restore} Update the relevant data structures (i.e.~of $v$ and its neighbors in $L_v$) and \textbf{Return} $c$.   \vspace{-1em} 		
		\end{enumerate}\\
		\hline
	\end{tabular}
\end{table}

\subsection{The Analysis}
Next  we prove Theorem~\ref{thm:coloring}.
Let $v_0:=v$ be the vertex that needs to be recolored after an insertion and let $v_1, v_2, \cdots, v_\ell$ denote the vertices on which the recursive calls of \textsc{Recolor()}
were executed. We call $v_0,v_1,\cdots, v_\ell$ the \emph{recoloring path} originated from $v$. 
We first show that the expected total time for  all  calls \textsc{Recolor$(v_i)$} is $O(1+\sum_{i = 0}^\ell |L_{v_i}|)$. 
Then we
bound the expected value of this sum. 
We have the following lemma. %
\begin{lemma}\label{lemma:setcolor}
Subroutine \textsc{SetColor$(v)$} can be implemented to run in  $O(1+|L_v|)$ expected time, where the expectation is not over the random choices of ranks or colors at Step~\ref{alg:recolor_largeBv_2}, but comes from the use of hash tables and sampling colors at Step \ref{alg:recolor_low_degree_2}. For any recoloring path $v_0,v_1,\cdots,v_\ell$, 
the expected time for subroutine \textsc{Recolor($u$)} for any $u \in \{v_1, \dots , v_l\}$
excluding the recursive calls to \textsc{Recolor()} is 
$O(|L_u|)$ if $u\neq v_\ell$, and is
$O(1+\sum_{i = 0}^\ell |L_{v_i}|)$ if $u=v_\ell$. %
\end{lemma}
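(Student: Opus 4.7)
The plan is to bound three quantities separately---the time for \textsc{SetColor}$(v)$, the time for an intermediate \textsc{Recolor}$(v_i)$ with $i<\ell$, and the time for the terminal \textsc{Recolor}$(v_\ell)$---charging every elementary operation either to $|L_v|$ itself or to the accumulated pool $\sum_i |L_{v_i}|$ that the final clean-up is allowed to spend. A guiding observation used throughout is that the sets $\NewL{v},\OldL{v},\GoodNewL{v},\ColorB{v},\ColorU{v}{L_{\mathrm{new}}^{\mathrm{<}}}$ are \emph{not} persistent: they are recomputed from scratch by scanning $L_v$ whenever \textsc{SetColor}$(v)$ runs, so the only persistent per-vertex bookkeeping that must be touched when a vertex $u$ commits to a new color are the counters $\mu_w^H(\cdot)$ and the lists $\ColorH{w},\ColorL{w}$ for those neighbors $w$ with $u\in H_w$, i.e.\ for $w\in L_u$.

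For \textsc{SetColor}$(v)$, Step 1 scans $L_v$, performs one hash-table membership test per entry, toggles one visited bit per newly visited vertex, and in the same pass records the multiset of colors used by $L_v$-neighbors in a local hash table; the total expected cost is $O(|L_v|)$. For Step 2, the precondition $\deg(v)<\Delta/2$ gives $|\ColorB{v}|\geq(\Delta+1)-\deg(v)>(\Delta+1)/2$, so each uniform sample from $[\Delta+1]$ is blank with probability strictly greater than $1/2$ (blankness of a candidate $c$ being decided in $O(1)$ from $\mu_v^H(c)$ together with the $L_v$-color hash table built in Step 1), yielding $O(1)$ expected trials. For Step 3 I would use deterministic linear-time selection on the ranks inside $\NewL{v}$ (and $\OldL{v}$) to find the required median in $O(|L_v|)$ time, assemble $\GoodNewL{v}$ and $\ColorU{v}{L_{\mathrm{new}}^{\mathrm{<}}}$ during that same pass, stage $\ColorB{v}\cup\ColorU{v}{L_{\mathrm{new}}^{\mathrm{<}}}$ in an indexable buffer, and then draw the sample from its first $\min\{\cdot,\cdot\}$ entries in $O(1)$. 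Adding these gives $O(1+|L_v|)$ expected time.

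For an intermediate \textsc{Recolor}$(v_i)$, Step 1 costs $O(1+|L_{v_i}|)$ by the above. Step 2 commits the new color $c$ to $v_i$ and propagates it: by the guiding observation only each $w\in L_{v_i}$ carries persistent information that must change, which is $O(1)$ per $w$ (decrement $\mu_w^H$ on the old color, increment it on $c$, and, if a counter crossed zero, move at most two colors between $\ColorH{w}$ and $\ColorL{w}$), for a total of $O(|L_{v_i}|)$. Step 3 looks up the unique $w\in L_{v_i}$ with $\chi(w)=c$, which was already identified during the color-classification pass inside \textsc{SetColor}, so it is $O(1)$; and Step 4 is not executed because $c\in\ColorU{v_i}{L}$. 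For the terminal vertex $v_\ell$ the same reasoning bounds Steps 1--3 by $O(1+|L_{v_\ell}|)$, and the remaining cost is Step 4, which un-marks every vertex that became visited along the whole path. Since each \textsc{SetColor}$(v_i)$ sets at most $1+|\NewL{v_i}|\leq 1+|L_{v_i}|$ visited bits, the clean-up costs $O\bigl(1+\sum_{i=0}^{\ell}|L_{v_i}|\bigr)$. The main obstacle is the bookkeeping claim that a neighbor $w\in H_{v_i}$ contributes nothing to the cost of Step 2---so the per-node charge remains $|L_{v_i}|$ rather than $\deg(v_i)$---and this is precisely what the asymmetric data-structure design, which keeps per-color counters only for $H$-neighbors and defers the $L$-side information to the on-the-fly rebuild inside \textsc{SetColor}, is engineered to deliver.
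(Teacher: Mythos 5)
Your proposal follows essentially the same strategy as the paper's proof: deterministic linear-time median selection over ranks in $\NewL{v}$ and $\OldL{v}$, rebuilding the color-classification sets from a single scan of $L_v$, rejection sampling with success probability at least $1/2$ in the low-degree case, observing that committing a new color to $v_i$ only forces persistent updates in the data structures of its $L_{v_i}$-neighbors (since those are the vertices for which $v_i$ lies on the $H$-side), and batching the un-marking of all visited vertices into the final call along the recoloring path. The one place you should be more explicit is how $\mathcal{B}_v$ is made available for prefix-sampling in $O(|L_v|)$ rather than $\Theta(\Delta)$ time: the paper derives it by temporarily deleting the $O(|L_v|)$ non-blank colors from the \emph{persistent} doubly-linked list $\ColorL{v}$ (and restoring them afterward), so that only $O(|L_v|)$ elements are ever touched; ``staging $\ColorB{v}\cup\ColorU{v}{L_{\mathrm{new}}^{\mathrm{<}}}$ in an indexable buffer'' read literally would cost $\Theta(|\mathcal{B}_v|)$, which can be $\Theta(\Delta)$, though you do acknowledge $\ColorL{w}$ as persistent bookkeeping, so I take this as imprecision rather than a real gap.
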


\begin{proof} %
	Recall that we store $L_v$, $\ColorH{v}$, and $\ColorL{v}$  for every vertex $v$.
	We use them to build all the sets needed in \textsc{SetColor$(v)$}.
	First we use an array $\RNewL{v}$ (resp. $\ROldL{v}$) to store ranks of vertices in $\NewL{v}$ (resp. $\OldL{v}$), and then find the median $\medianN{v}$ (resp. $\medianO{v}$) of the set of ranks of vertices in $\NewL{v}$ (resp. $\OldL{v}$) deterministically in $O(|\RNewL{v}|)=O(|L_v|)$ time~\cite{blum1973time}. 
	Traversing $L_v$ again (and using an empty array of length $\Delta$ that we clean again after this step) we compute (1) the sets $\ColorU{v}{L_{\textrm{new}}^{\textrm{<}}}$ and $\ColorU{v}{L_{\textrm{old}}^{\textrm{<}}}$ of colors that contain all colors that have been used by \emph{exactly one} vertex in $\GoodNewL{v}$, and by \emph{exactly one} vertex in $\GoodOldL{v}$, respectively,
	and (2) the sets $\ColorM{v}{L}$ of colors that contain all colors that have been used by \emph{at least two} vertices in $L_v$.
	Note that $\ColorU{v}{L} = \ColorU{v}{L_{\textrm{new}}^{\textrm{<}}} \cup \ColorU{v}{L_{\textrm{old}}^{\textrm{<}}}$, and, thus, it can be computed by copying these lists. 
	All these lists have size $O(|L_v|)$ and, thus, all these steps take time $O(|L_v|)$. 
	
	We will keep the sets $\ColorM{v}{L}$, $\ColorU{v}{L}$, $\ColorU{v}{L_{\textrm{new}}^{\textrm{<}}}$, $\ColorU{v}{L_{\textrm{old}}^{\textrm{<}}}$ in four separate lists and build hash tables for these sets with pointers to their positions in the lists. 
	Next we delete all colors in $\ColorM{v}{L}\cup \ColorU{v}{L}$ from the list $\ColorL{v}$ and the resulting list will be $\mathcal{B}_v$. 
	Note that the hash tables can be implemented in time linear in the size of corresponding sets, and each lookup (i.e., check if an element is in the set) takes constant worst-case time~\cite{DKMHRT94:hashing}.
	This completes the building of the data structure before Step 1.
	
	Recall that $|L_v| + |H_v| = \deg(v)$. Then for Step~\ref{alg:recolor_low_degree_2}, if $\deg(v)<\frac{\Delta}{2}$, we know that $|\mathcal{B}_v|>\Delta-\frac{\Delta}{2}= \frac{\Delta}{2}$. 
	Thus, a randomly sampled color from $[\Delta+1]$ belongs to $\mathcal{B}_v$ with probability at least $1/2$, which implies that in $O(1)$ expected time, we will sample a color $c$ from $\mathcal{B}_v$. 
	Note that a color $c$ belongs to $\mathcal{B}_v$ if and only if $c$ is not contained in $\ColorM{v}{L} \cup \ColorU{v}{L} \cup \ColorH{v}$, which can be checked by using the hash tables for $\ColorM{v}{L},$ for $ \ColorU{v}{L}$ and the hash table $\mathcal{A}^H_v$. 

	All the other steps only write, read and/or delete lists or hash tables of size proportional to $|L_v|$ or $|\ColorM{v}{L}\cup \ColorU{v}{L}|$, which is at most $|L_v|$. Though the list $\mathcal{B}_v\cup \ColorU{v}{L_{\textrm{new}}^{\textrm{<}}}$ might have size much larger than $|\GoodNewL{v}|$, it suffices to read at most $|\GoodNewL{v}|$ elements from it in Step \ref{alg:recolor_largeBv_2} (similar for $\mathcal{B}_v\cup \ColorU{v}{L_{\textrm{old}}^{\textrm{<}}}$ versus $|{\GoodOldL{v}}|$). In Step \ref{alg:restore}, to update the relevant data structures, we add all colors in $\ColorM{v}{L} \cup \ColorU{v}{L}$ back to the list $\mathcal{B}_v$ to construct $\ColorL{v}$.	
	
	To analyze the running time of \textsc{Recolor}($u$) (apart from the recursive calls), for any $u\in v_0,v_1,\dots,v_\ell$, note that apart from calling \textsc{Setcolor}($u$), \textsc{Recolor} updates the data structures, determines the neighbor $w$ that needs to be recolored next (if any) and if no such neighbor $w$ exists, i.e.~if $c$ is a blank color and $u$ is the last vertex of the recoloring path, then it unmarks all vertices that were marked by all the calls to \textsc{Setcolor} on the recoloring path. 
	For this \textsc{Setcolor} has stored all the marked vertices on a list, which it returns to \textsc{Recolor}. 
	This list is then used by \textsc{recolor} to unmark these vertices. 
	The time to update the data structures is constant expected time (the expectation arises due to the use of hash tables) to update its own
	data structure and $O(|L_u|)$ to update the data structures of its lower neighbors.
	Determining $w$ requires $O(|L_u|)$ time, as all lower neighbors of $u$ have to be checked. 
	Finally, \textsc{Recolor}($u$) for the last vertex $u=v_\ell$ on the recoloring path takes expected time $O(1+\sum_i |L_{v_i}|)$ as it unmarks all vertices on the recoloring path and their neighbors.
\end{proof}

Throughout the process we have two different types of randomness: one for sampling the ranks for the vertices and the other for sampling the colors. 
These two types of randomness are independent. 
Furthermore, only the very last vertex $v_\ell$ on the recoloring path $P = v_0,v_1,\cdots,v_\ell$ can satisfy the condition of Step~\ref{alg:recolor_low_degree_2} in \textsc{SetColor}, as once the condition is satisfied, we will sample a blank color which will not cause any further recursive calls. 
Thus, for all vertices on $P$, with the possible exception of $v_\ell$, Step~\ref{alg:recolor_largeBv_2} will be executed. 
We call a vertex $w$ with $\deg(w)<\frac{\Delta}{2}$ a \emph{low degree} vertex. 
Note that for a low degree vertex $w$, \textsc{SetColor}($w$) executes
Step~\ref{alg:recolor_low_degree_2} and takes $O(1)$ expected time, as with probability at least $1/2$ a randomly sampled color will be blank. 
In the following, we consider the expected time $T_v$ of recoloring
$P$ that excludes the time of recoloring any low degree vertex (which, if exists, must be the last vertex on $P$).  %
We first present a key property regarding the expected running time for recoloring a  vertex $v$. %
Let $N(v)$ denote the set of all neighbors of $v$ in the current graph. 
\begin{lemma}\label{lemma:recolor_time}
	Let $G$ denote the current graph. %
	For any vertex $v$ with rank $r(v)\leq\alpha$, the expected running time $T_v$ (over the randomness of choosing ranks of other vertices%
	) %
	is 
	\begin{eqnarray}
	\E[T_v | r(v)\leq \alpha]=
	O(\alpha \Delta) \label{eqn:expected_work_A} %
	\end{eqnarray}
Furthermore, conditioned on ranks of vertices in $N(v)$ and $r(v)\leq \alpha$, it holds that the expected running time $T_v$ (over the randomness of sampling ranks of $V\setminus (N(v)\cup\{v\})$) is %
	\begin{eqnarray}
	\E[T_v | r(v)\leq \alpha, r(w) \forall w\in N(v)]=
	O(|L_v|) + O(\alpha \Delta) \label{eqn:expected_work_B} %
	\end{eqnarray}	
\end{lemma}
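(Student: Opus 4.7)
The plan is to combine Lemma~\ref{lemma:setcolor} (which gives $T_v = O(1+\sum_{i=0}^{\ell}|L_{v_i}|)$ for the recoloring path $P=v_0,v_1,\dots,v_\ell$ rooted at $v_0=v$) with a careful amortized analysis that separates the vertices of $P$ into \emph{good} and \emph{bad} ones. A vertex $v_i$ is called \emph{good} if in \textsc{SetColor}$(v_i)$ the algorithm executes Step~\ref{alg:manynew} (i.e.~$|\NewL{v_i}|\ge |L_{v_i}|/10$) and samples from $\mathcal{B}_{v_i}\cup \ColorU{v_i}{L_\text{new}^<}$, and \emph{bad} otherwise (i.e.~Step~\ref{alg:manyold} is executed). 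I will introduce a non-negative potential $\Phi_i$ depending on the sizes of $\OldL{v_j}$ (or of related subsets) at all previously visited vertices $v_0,\dots,v_{i-1}$, designed so that (a) $\Phi$ only grows at good vertices, (b) at a bad vertex $v_i$, $\Phi$ drops by $\Omega(|L_{v_i}|)$, and (c) $\Phi_0=0$. This lets me charge the work $|L_{v_i}|$ at bad vertices to the drop in $\Phi$, so the total cost is dominated by $\sum_{v_i \text{ good}} |L_{v_i}|$ plus the initial $|L_{v_0}|$.

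Next, I analyze the good vertices. At a good vertex $v_i$, the algorithm samples a color uniformly from a set of size $\Theta(|L_{v_i}|)$ which selects (with probability $\Omega(1)$ conditioned on being a unique color) a random vertex $w\in \GoodNewL{v_i}$, and $w$ by definition has rank at most the median of the ranks in $\NewL{v_i}\supseteq\GoodNewL{v_i}$. The key technical step is to show that, \emph{in expectation}, conditioning on this median-selection, $r(v_{i+1})\le r(v_i)/2$ and simultaneously $\E[|L_{v_{i+1}}|]=O(r(v_{i+1})\Delta)$. Because $L_{v_{i+1}}$ consists of the neighbors of $v_{i+1}$ whose ranks are smaller than $r(v_{i+1})$, and ranks outside previously observed vertices are still uniform in $[0,1]$, each such neighbor is in $L_{v_{i+1}}$ independently with probability roughly $r(v_{i+1})$, giving $\E[|L_{v_{i+1}}|\mid r(v_{i+1})]\le r(v_{i+1})\Delta$. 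Telescoping, $\sum_{v_i \text{ good}} \E[|L_{v_i}|]=O(\Delta)\sum_i (1/2)^i \cdot \alpha=O(\alpha\Delta)$, which gives \eqref{eqn:expected_work_A}.

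The main obstacle is justifying the ``expected rank halves'' step carefully under the dependencies that arise because $v_{i+1}$ may share neighbors with earlier vertices in $P$ (in particular with $v_i$), so the ranks relevant to $L_{v_{i+1}}$ are not fully independent of the history. This is exactly what the split \textsc{new}/\textsc{old} in \textsc{SetColor} is designed to handle: \emph{new} neighbors carry independent fresh ranks, so conditioning on $v_{i+1}\in \GoodNewL{v_i}$ reveals only that $r(v_{i+1})\le\medianN{v_i}$, while \emph{old} neighbors are exactly those already counted, whose rank information is absorbed into the potential $\Phi$. I will therefore need to couple the analysis by conditioning, at each step, on the full history of visited vertices and their ranks, and show that the ``halving'' applies to the conditional expectation at good steps while the potential drop takes care of the bad steps.

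Finally, for bound \eqref{eqn:expected_work_B}: once we condition also on $r(w)$ for every $w\in N(v_0)$, the set $L_{v_0}$ is deterministic, so the first term of the recoloring cost contributes exactly $O(|L_v|)$. Conditioning on these ranks only restricts the randomness of vertices outside $N(v)\cup\{v\}$, and since $v_1\in L_v$ has a fixed rank $r(v_1)\le r(v)\le\alpha$, applying \eqref{eqn:expected_work_A} starting from $v_1$ (where the ranks of $V\setminus(N(v)\cup\{v\})$ are still uniform) gives an additional $O(\alpha\Delta)$. Summing yields \eqref{eqn:expected_work_B}.
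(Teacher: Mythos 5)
Your plan matches the paper's proof almost step for step: start from Lemma~\ref{lemma:setcolor}, show via a non-negative potential that the total $\sum_i|L_{v_i}|$ is dominated by the contribution of the \emph{good} vertices (those executing Step~\ref{alg:manynew}), show that at good vertices the expected rank halves and the expected lower-neighborhood size decays geometrically, handle rank dependencies by the principle of deferred decisions, and split bound~\eqref{eqn:expected_work_B} into a deterministic $O(|L_v|)$ term plus a geometric tail anchored at $r(v_1)\le\alpha$.

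Two points in your sketch would need repair if taken literally. First, the potential cannot be a function of the (static) sizes $|\OldL{v_j}|$ for $j<i$: once $v_j$ has been processed, $\OldL{v_j}$ is fixed, so a sum of these sizes can only grow and property~(b) — a drop of $\Omega(|L_{v_i}|)$ at bad vertices — would fail. The paper instead uses $\Phi(i)=\sum_{k\le i}|\FirstLower(v_k,v_{i+1})|$ where $\FirstLower(v_k,v_{i+1})=\{w\in\NewN{v_k}:r(w)<r(v_{i+1})\}$; since $r(v_{i+1})$ strictly decreases along the path, each $\FirstLower(v_k,\cdot)$ shrinks, and at a bad step $v_{i+1}\in\GoodOldL{v_i}$ this forces a drop of at least $\tfrac12|\OldL{v_i}|-\tfrac{1}{10}|L_{v_i}|\ge\tfrac{7}{20}|L_{v_i}|$. (Also, the paper's $\Phi$ is allowed to \emph{decrease} at good vertices as well; what matters is only that the increase at a good step is at most $|L_{v_i}|/2$.) Second, your claim $\E[|L_{v_{i+1}}|\mid r(v_{i+1})]\le r(v_{i+1})\Delta$ is true only for the fresh part $|\NewL{v_{i+1}}|$; the old part is determined by the history and can in principle be large. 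To lift the bound to $|L_{v_{i_j}}|$ you must use the goodness threshold $|\NewL{v_{i_j}}|\ge\tfrac1{10}|L_{v_{i_j}}|$, which is exactly where the constant factor $10$ in the paper's Lemma~\ref{lemma:ranks} comes from. With those two fixes, your argument becomes the paper's argument.
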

The proof of the above lemma is deferred to Section \ref{sec:proof_recolor_time}. We will also need the following lemma regarding the size of the sampled color set. The proof of the lemma follows from a more refined analysis of the proof of Claim 3.1 in~\cite{BCHN18:coloring} and can be found in Appendix~\ref{subsec:app_coloring_proof}. %
\begin{lemma}\label{lemma:recolor}
Let $v$ be any vertex that needs to be recolored. Let $s$ denote the size of the set of colors that the algorithm samples from in order to choose a new color for $v$. Then it holds that 1) if $|L_v|+|H_v|<\frac{\Delta}{2}$, then $s\geq \frac{\Delta}{2}+1$; 2) otherwise, \pnew{$s\geq \frac{1}{100}|L_v|+1$.}
\end{lemma}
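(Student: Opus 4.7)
The plan is to handle the low-degree case of \textsc{SetColor} separately and to reduce the high-degree case to a single counting inequality, which is a refined version of Claim~3.1 of~\cite{BCHN18:coloring}, followed by two case splits.

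First, if $\deg(v)=|L_v|+|H_v|<\Delta/2$, then the algorithm executes Step~\ref{alg:recolor_low_degree_2} of \textsc{SetColor}, whose effective sampling set is $\mathcal{B}_v$. Since at most $\deg(v)<\Delta/2$ colors in $\mathcal{C}$ are used by neighbors of $v$, we immediately obtain $|\mathcal{B}_v|\geq(\Delta+1)-\Delta/2\geq\Delta/2+1$, settling part~(1).

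For part~(2), the foundational step is to establish the inequality
\begin{equation}\label{eqn:planU}
	2|\mathcal{B}_v|+|\ColorU{v}{L}|\;\geq\;|L_v|+2,
\end{equation}
which follows from the disjoint decomposition $\mathcal{C}=\mathcal{B}_v\cup\ColorU{v}{L}\cup\ColorM{v}{L}\cup\ColorH{v}$ combined with the three elementary bounds $|\ColorH{v}|\leq|H_v|$ (each vertex of $H_v$ contributes one color), $|\ColorU{v}{L}|+2|\ColorM{v}{L}|\leq|L_v|$ (counting $L_v$-vertices by the color they use), and $|L_v|+|H_v|\leq\Delta$. I would then analyze sub-case~\ref{alg:manynew}: there $s=\min\{|\mathcal{B}_v\cup\ColorU{v}{L_{\textrm{new}}^{\textrm{<}}}|,\,|\GoodNewL{v}|+1\}$, and the union in the first argument is disjoint since blank colors are used by no neighbor while the second set is used by exactly one. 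The second argument satisfies $|\GoodNewL{v}|+1\geq|L_v|/20+1$ because $|\GoodNewL{v}|\geq|\NewL{v}|/2\geq|L_v|/20$ under the hypothesis $|\NewL{v}|\geq|L_v|/10$. For the first argument, each color in $\ColorU{v}{L}$ is attached to a distinct unique user in $L_v$, and at most $|L_v|-|\GoodNewL{v}|\leq 19|L_v|/20$ of these users lie outside $\GoodNewL{v}$, so $|\ColorU{v}{L_{\textrm{new}}^{\textrm{<}}}|\geq|\ColorU{v}{L}|-19|L_v|/20$. A case split on $|\ColorU{v}{L}|$ closes the argument: if $|\ColorU{v}{L}|\leq 19|L_v|/20$, then~\eqref{eqn:planU} alone forces $|\mathcal{B}_v|\geq|L_v|/40+1$; otherwise, combining the above lower bound on $|\ColorU{v}{L_{\textrm{new}}^{\textrm{<}}}|$ with the rearrangement $|\mathcal{B}_v|\geq 1+|L_v|/2-|\ColorU{v}{L}|/2$ of~\eqref{eqn:planU} yields $|\mathcal{B}_v|+|\ColorU{v}{L_{\textrm{new}}^{\textrm{<}}}|\geq 1+|L_v|/40$. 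Either way the first argument is at least $|L_v|/100+1$. Sub-case~\ref{alg:manyold} is entirely analogous, now using $|\GoodOldL{v}|>9|L_v|/20$ and a case split around $|\ColorU{v}{L}|\gtrless 11|L_v|/20$, which gives $|\mathcal{B}_v|+|\ColorU{v}{L_{\textrm{old}}^{\textrm{<}}}|\geq 1+9|L_v|/40$.

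The main obstacle is the restriction from $\ColorU{v}{L}$ to $\ColorU{v}{L_{\textrm{new}}^{\textrm{<}}}$ (and symmetrically to $\ColorU{v}{L_{\textrm{old}}^{\textrm{<}}}$): in the worst case this can erase up to $|L_v|-|\GoodNewL{v}|$ unique colors, which is enough to absorb the entire slack provided by~\eqref{eqn:planU}. The case split is what resolves this trade-off cleanly: whenever the restriction destroys too many unique colors, it must be because $|\ColorU{v}{L}|$ itself is large, in which case the residual $|\ColorU{v}{L_{\textrm{new}}^{\textrm{<}}}|$ is still linear in $|L_v|$; and whenever $|\ColorU{v}{L}|$ is small, $|\mathcal{B}_v|$ alone already supplies the required $\Omega(|L_v|)$ blank colors via~\eqref{eqn:planU}.
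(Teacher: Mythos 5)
Your proof is correct and starts from the same foundational inequality $2|\mathcal{B}_v|+|\ColorU{v}{L}|\geq |L_v|+2$ as the paper, derived in the same way from the disjoint decomposition of $\mathcal{C}$ together with the two counting bounds $|\ColorH{v}|\leq|H_v|$ and $|\ColorU{v}{L}|+2|\ColorM{v}{L}|\leq|L_v|$. The difference is in how you close the gap between $\ColorU{v}{L}$ and the restricted set $\ColorU{v}{L_{\textrm{new}}^{<}}$ (resp. $\ColorU{v}{L_{\textrm{old}}^{<}}$). You split on whether $|\ColorU{v}{L}|$ is small (in which case the key inequality forces $|\mathcal{B}_v|$ to be $\Omega(|L_v|)$) or large (in which case the residue $|\ColorU{v}{L}|-|L_v\setminus\GoodNewL{v}|$ or $|\ColorU{v}{L}|-|L_v\setminus\GoodOldL{v}|$ remains $\Omega(|L_v|)$), and you do this uniformly in both sub-cases \ref{alg:manynew} and \ref{alg:manyold}. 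The paper instead splits on whether $|\mathcal{B}_v|$ is large or small in sub-case \ref{alg:manynew}, and then in sub-case \ref{alg:manyold} it attempts a direct computation with no case split, asserting $|L_v\setminus\GoodOldL{v}|<\frac{1}{20}|L_v|$. That bound does not follow from $|\GoodOldL{v}|>\frac{9}{20}|L_v|$; the correct estimate is $|L_v\setminus\GoodOldL{v}|<\frac{11}{20}|L_v|$, and with that estimate the paper's direct calculation $1+\frac{|L_v|}{2}-\frac{11}{20}|L_v|$ goes negative. Your two-sided case split on $|\ColorU{v}{L}|\gtrless\frac{11}{20}|L_v|$ is exactly what is needed to repair this, so your treatment of sub-case \ref{alg:manyold} is not merely an alternative route but actually closes a gap in the paper's own argument. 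The constants you obtain ($1+|L_v|/40$ and $1+9|L_v|/40$) comfortably beat the claimed $1+|L_v|/100$.
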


With the lemmas above, we are ready to prove Theorem~\ref{thm:coloring}.
\begin{proof}[\textbf{Proof of Theorem~\ref{thm:coloring}}]
	Note that an edge deletion does not lead to the recoloring of any vertex. Let us consider an insertion $(u,v)$. If $\chi(u)\neq \chi(v)$, we do not need to recolor any vertex. Otherwise, we need to recolor one vertex from $\{u,v\}$. Suppose w.l.o.g. that $\tau_v>\tau_u$, where $\tau_u$ denotes the last time that $u$ has been recolored. This implies that $v$ is recolored at the current time step, which be denote by $\tau$. We will invoke \textsc{Recolor($v$)} to recolor $v$. 
Recall that we let $T_v$ denote the running time of calling \textsc{Recolor}($v$), including all the recursive calls to \textsc{Recolor}, while \emph{excluding} the time of recoloring any low degree vertex (i.e.~a vertex where \textsc{SetColor}($w$) executed
Step~\ref{alg:recolor_low_degree_2}) and on the recoloring path originated from $v$ (which, if exists, must be the last vertex on the path). If the last vertex is indeed a low degree vertex, then the expected  total running time (over all sources of randomness) of \textsc{Recolor}($v$) will be $\E[T_v]+O(1)$, where the expectation $\E[T_v]$ in turn is over the randomness of sampling ranks of all vertices; otherwise, the expected total running time (over all sources of randomness) of \textsc{Recolor}($v$) will be $\E[T_v]$. %
	Let $\alpha_0=\frac{4C\log \Delta}{\Delta}$ for some constant $C\ge 1$. 
	Now we consider two cases:
	
	\textbf{Case I:} $r(v)\leq \alpha_0$. First we note that this case happens with probability at most $\alpha_0$ as $r(v)$ is chosen uniformly at random from $[0,1]$. Furthermore, by Lemma~\ref{lemma:recolor_time}, conditioned on the event that $r(v)\leq \alpha_0$, the expected time of the subroutine \textsc{Recolor}($v$) is $\E[T_v|r(v)\leq \alpha_0]= O(\alpha_0 \Delta)$, where the expectation is taken over the randomness of choosing ranks of all other vertices except $v$. %
	Therefore, the expected time of \textsc{Recolor}($v$) (over the randomness of choosing ranks of all vertices) %
	is at most $\alpha_0 \cdot O(\alpha_0 \Delta) = O(\alpha_0^2 \Delta)=O(\frac{\log^2 \Delta}{\Delta})=O(1)$.

	\textbf{Case II:} $r(v)>\alpha_0$. Let $r(v)=\alpha$. Conditioned on the event that $r(v)=\alpha$, by Lemma~\ref{lemma:recolor_time}, the expected running time (over the randomness of choosing ranks of other vertices) of \textsc{Recolor}($v$) at time $\tau$ %
	is $O(\alpha \Delta)$.%
	
	We let $L_v$ and $L'_v$ denote the set of neighbors of $v$ with ranks lower than $v$ in the graph at (current) time $\tau$ and at time $\tau_v$, (the latest time that $v$ was recolored), respectively. We define $H_v,H'_v$ similarly. %
	We let $\deg(v)=|L_v\cup H_v|$ and $\deg'(v)=|L'_v\cup H'_v|$ denote the degree of $v$ at time $\tau$ and $\tau_v$, respectively.
	
	Case (a): $\deg'(v)<\Delta/2$. 
	In this case, we know that at time $\tau_v$, we will sample a color from the set of blank colors $\mathcal{B}(v)$, which has size at least $\Delta/2$. 
	Thus, the probability that we sampled the color $\chi(u)$ is at most $2/\Delta$. On the other hand, at time $\tau$, we will spend at most $O(\alpha\Delta)=O(\Delta)$ expected time (over the randomness of sampling ranks of vertices in $V\setminus\{v\}$). %
	Thus, the expected time (over the randomness of sampling ranks and of sampling colors at time $\tau_v$) we spent on recoloring $v$ at time $\tau$ is $O(\frac{1}{\Delta} \cdot \Delta)=O(1)$.%
	
	Case (b): $\deg'(v)\geq \Delta/2$. We now consider two sub-cases. 		
	
	Case (b1): If $\deg(v)<\Delta/4$, %
	then there must have been at least $\deg'(v)/2=\Omega(\Delta)$ deletions
	of edges incident to $v$ between $\tau_v$ and $\tau$.
	We can recolor $v$ at time $\tau$ in expected $O(\alpha\Delta)=O(\Delta)$ time. We charge this time to the updates incident to $v$ between $\tau_v$ and $\tau$. Note that each update is only charged twice in this way, once from each endpoint, adding a constant amount of work to each deletion. 
	
	Case (b2): If $\deg(v)\geq\Delta/4$, %
	then $\E[|L_v|]=\alpha \deg(v)\geq \alpha\Delta/4\geq \frac{\alpha_0\Delta}{4} \geq C\log \Delta$ for some constant $C \ge 1$ and $\E[|L_v|]=\alpha \deg(v)\leq \alpha\Delta$. Then over the randomness of sampling ranks for vertices in $N(v)$, it follows from a Chernoff bound that with probability at least $1-\frac{1}{\Delta}$, $\frac{\E[|L_v|]}{2} \leq |L_v|\leq \frac{3\E[|L_v|]}{2}$, which implies that with probability at least
	$1 - 1/\Delta$ it holds that
	$$\frac{\alpha \Delta}{8}\leq \frac{\E[|L_v|]}{2} \leq |L_v|\leq \frac{3\E[|L_v|]}{2}\leq \frac{3\alpha \Delta}{2}$$
	By inequality (\ref{eqn:expected_work_B}) in Lemma~\ref{lemma:recolor_time}, over the randomness of sampling ranks for $V\setminus (N(v)\cup \{v\})$, the expected work for recoloring $v$ at time $\tau$ is $O(|L_v|)+O(\alpha\Delta) = O(\alpha \Delta)$.
	Then the work for recoloring is
	$O(\Delta)$ as $|L_v| \le \Delta$. 
	Thus the expected work of this case is $\frac{1}{\Delta} \cdot O(\Delta)  = O(1)$.
	
	Next we analyze the case that the above inequalities hold and further distinguish two sub-cases.
	
	Case (b2-1): If $|L_v \triangle L'_{v}| > \frac{1}{10}|L_v|$, then there must have been at least $\frac{1}{10}|L_v|=\Theta(\alpha \Delta)$ edge updates incident to $v$ between $\tau_v$ and $\tau$. By the same argument as above we can amortize the expected work of $O(\alpha \Delta)$ over these edge updates, charging each edge update at most twice.
	This adds an  expected amortized cost of $O(1)$ to each update. 
	
	Case (b2-2): If $|L_v \triangle L'_{v}| \leq \frac{1}{10}|L_v|$, then it holds that $|L'_{v}| \ge |L_v| - |L_v \triangle L'_{v}| \geq \frac{9}{10} |L_v|$. 
	By Lemma~\ref{lemma:recolor}, $\chi(v)$ was picked at time $\tau_v$ from a set of $\Omega(|L'_{v}|)$ many colors. 
	Thus, the probability that we picked the color $\chi(u)$ at time $\tau_v$ is at most $O(\frac{1}{|L'_{v}|})=O(\frac{1}{|L_v|})$. 
	As the expected work at time $\tau$ is at most $O(\alpha \Delta) =O(|L_v|)$ (with the expectation over randomness of sampling ranks), the expected amortized update time is $O(\frac{1}{|L_v|})\cdot O(|L_v|)=O(1)$. 
	
	This completes the proof of the theorem.
\end{proof}	
\subsubsection{Bounding the Expected Work per Recoloring: Proof of Lemma~\ref{lemma:recolor_time}.}\label{sec:proof_recolor_time}
Let %
 $v_0, v_1, \cdots$ be the vertices on the recoloring path after an insertion.
 By Lemma~\ref{lemma:setcolor} the total expected time for all calls \textsc{Recolor($v_i$)}
is $O(1+\sum_{i \ge 0} |L_{v_i}|)$. Recall that the running time $T_v$ excludes the time spent on recoloring a low degree vertex (and a low degree vertex can only be the last vertex of a recoloring path). Thus, for all vertices $v_i$ that contribute to $T_v$ only Step~\ref{alg:manynew} or Step~\ref{alg:manyold} of \textsc{SetColor} can occur. 
 Let $v_{i_0}=v_0, v_{i_1}, v_{v_2}, \cdots$ be the vertices for which Step \ref{alg:manynew} occurred during \textsc{Setcolor($v$)}, which we call \emph{good} vertices. We bound the expected value of ranks of good vertices and the expected size of the lower-ranked neighborhood of these vertices in the following lemma. Note that the expectations are taken over the randomness for sampling ranks of vertices, whose ranks are \emph{not} in the conditioned events.%
\begin{lemma}\label{lemma:ranks}
	For any $j\geq 0$, it holds that 
	$$\E[r(v_{i_j+1})|r(v_0)\leq \alpha] \leq \frac{\alpha}{2^j}, \quad \E[|L_{v_{i_j}}|\ | r(v_0)\leq \alpha]\leq  \frac{10\cdot \alpha\cdot \Delta}{2^{j-1}}.$$
	Furthermore, for any $j\geq 1$, it holds that
	$$\E[r(v_{i_j+1})|r(v_0)\leq \alpha, r(w) \forall w\in N(v_0)] \leq \frac{\alpha}{2^{j-1}}, \quad \E[|L_{v_{i_j}}|\ | r(v_0)\leq \alpha, r(w) \forall w\in N(v_0)]\leq  \frac{10\cdot \alpha\cdot \Delta}{2^{j-2}}.$$
\end{lemma}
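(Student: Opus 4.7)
The plan is to prove both statements simultaneously by induction on $j$: the rank bound is the main driver, and the size bound follows from the good-vertex condition $|L_{v_{i_j}}| \le 10|\NewL{v_{i_j}}|$ combined with the rank bound. The base cases ($j=0$ for the unconditioned claim, $j=1$ for the conditioned claim) are immediate since path ranks strictly decrease, so $r(v_{i_j+1}) \le r(v_0) \le \alpha$; the corresponding size inequality at the base is absorbed by the $10\Delta$ factor.

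The heart of the argument is the following halving lemma: at every good vertex $v_{i_j}$, $\E[r(v_{i_j+1}) \mid \mathcal{F}_j] \le r(v_{i_j})/2$, where $\mathcal{F}_j$ collects the relevant rank information revealed along the path up to and including $v_{i_j}$. To prove it, I would define the graph-fresh set $U := N(v_{i_j}) \setminus \bigcup_{w \text{ ancestor}} N(w)$, which is $\mathcal{F}_j$-measurable, and establish the crucial characterization $\NewL{v_{i_j}} = \{u \in U : r(u) < r(v_{i_j})\}$. The $\supseteq$ inclusion is immediate, while for $\subseteq$ one uses that on the strictly decreasing path every ancestor $w$ satisfies $r(w) > r(v_{i_j}) > r(u)$, so if $u \in \NewL{v_{i_j}}$ were adjacent to some ancestor it would have been placed in $L_w$ and marked visited earlier. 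Because each $u \in U$ is not on the path and not adjacent to any path vertex other than $v_{i_j}$, its rank is independent of everything in $\mathcal{F}_j$ apart from the single indicator $\mathbf{1}[r(u) < r(v_{i_j})]$; conditional on $\mathcal{F}_j$ with $r(v_{i_j}) = \beta$, the $|U|$ ranks are therefore i.i.d.\ Uniform$[0,1]$, and given $|\NewL{v_{i_j}}| = k$ the $k$ ranks in $\NewL{v_{i_j}}$ are i.i.d.\ Uniform$[0,\beta]$. The algorithm's median is the $\lceil k/2 \rceil$-th order statistic of these, whose expectation equals $\beta \lceil k/2\rceil/(k+1) \le \beta/2$; since $r(v_{i_j+1})$ (when it exists) is the rank of some vertex in $\GoodNewL{v_{i_j}}$ and is bounded above by this median, the lemma follows.

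The inductive step combines the halving lemma with the monotonicity $r(v_{i_j}) \le r(v_{i_{j-1}+1})$ (which holds since $i_j \ge i_{j-1}+1$ and path ranks strictly decrease). By iterating expectation,
\begin{equation*}
\E[r(v_{i_j+1}) \mid r(v_0) \le \alpha] \;\le\; \tfrac12\,\E[r(v_{i_j}) \mid r(v_0) \le \alpha] \;\le\; \tfrac12\,\E[r(v_{i_{j-1}+1}) \mid r(v_0) \le \alpha] \;\le\; \alpha/2^j,
\end{equation*}
using the inductive hypothesis in the last inequality. For the size bound, $v_{i_j}$ being good gives $|L_{v_{i_j}}| \le 10|\NewL{v_{i_j}}|$ pointwise; the same deferred-decisions argument yields $\E[|\NewL{v_{i_j}}| \mid \mathcal{F}_j] = |U|\cdot r(v_{i_j}) \le \Delta\, r(v_{i_j})$, so taking total expectation gives $\E[|L_{v_{i_j}}| \mid r(v_0) \le \alpha] \le 10\Delta \cdot \alpha/2^{j-1}$.

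For the conditioned part, freezing the ranks of $N(v_0)$ pins down $r(v_1)$ (since $v_1 \in L_{v_0} \subseteq N(v_0)$) and hence costs one halving at the very first step $v_0 \to v_1$. However, $U \cap N(v_0) = \emptyset$ at every $v_{i_j}$ with $j \ge 1$, because fresh vertices by definition are not neighbors of any ancestor, including $v_0$; consequently the fresh-neighbor ranks used by the halving lemma remain completely unconditioned, and the lemma together with its induction carry through verbatim, merely shifted by one index. The main subtlety that I anticipate having to handle carefully, addressed by the argument above, is precisely the characterization $\NewL{v_{i_j}} = \{u \in U : r(u) < r(v_{i_j})\}$ together with the deferred-decisions bookkeeping that keeps the fresh-neighbor ranks independent of the otherwise quite complicated history $\mathcal{F}_j$; without this decoupling one would have to fight the full dependency structure of the recoloring path.
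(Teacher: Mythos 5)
Your proof is correct and mirrors the paper's argument: both use deferred decisions so that $\NewL{v_{i_j}}$ is a function only of freshly revealed ranks, apply the uniform order-statistic bound at the median to halve the rank at each good vertex, and combine this with the good-vertex condition $|L_{v_{i_j}}| \le 10|\NewL{v_{i_j}}|$ to obtain the size bound. One small improvement over the paper in the ``Furthermore'' case: the paper asserts $v_{i_1} \in N(v_0)$ in order to treat $r(v_{i_1})$ as fixed, which can fail when several bad vertices sit between $v_0$ and the next good vertex (since $v_{i_1}$ may lie in $\NewL{v_k}$ for some $k \ge 1$, disjoint from $N(v_0)$); your argument relies only on $v_1 \in N(v_0)$ together with monotonicity $r(v_{i_1}) \le r(v_1) \le \alpha$, which always holds.
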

\begin{proof}
	To prove the lemma, we use the principle of deferred decisions: Instead of sampling the ranks for all vertices (independently and uniformly at random from $[0,1]$) at the very beginning, we sample the ranks of vertices sequentially by the following random process: 

Starting from $v_0$ with rank $r(v_0)$, we sample all the ranks of vertices in $N(v_0)$. We will then choose $v_1$ as described in the algorithm \textsc{Recolor} (if a non blank color has been sampled). Now for each $i\geq 1$, we note that the ranks of all the vertices in 
		$\OldN{v_i}:=N(v_i)\cap (\cup_{j<i} N(v_j) \cup \{v_0\})$
		have already been sampled, and then we only need to sample (independently and uniformly at random from $[0,1]$) the ranks for all vertices in 
		$\NewN{v_i}:=N(v_i)\setminus \OldN{v_i}.$
		In this case, we say that the ranks of vertices in $\NewN{v_i}$ are sampled \emph{when we are exploring $v_i$}. Then we will choose $v_{i+1}$ in the algorithm \textsc{Recolor} (if a non blank color has been sampled). We iterate the above process until \textsc{Recolor} has sampled a blank color.

	For any $i$, we call $\NewN{v_i}$ the \emph{free neighbors} of $v_i$ with respect to $v_0,v_1,\cdots, v_{i-1}$. In particular, $\NewN{v_0}=N(v_0)$ and $N(v_i) = \NewN{v_i} \dot\cup \OldN{v_i}$.
	Now a key observation is that
	\begin{itemize}
		\item[$(\star)$] for any vertex $v_i$, it holds that $\NewL{v_i}$ (as defined in the algorithm \textsc{SetColor}($v_i$)) is entirely determined by {the ranks of} the vertices $\NewN{v_i}$ and is independent of the randomness for sampling ranks of $\OldN{v_i}$.
	\end{itemize} 
	This is true since $\NewL{v_i}$ contains all the neighbors of $v_i$ with ranks less than $r(v_i)$ and have not been visited so far: for any vertex in $\OldN{v_i}$, either its rank is higher than $v_i$, or its rank is less than $v_i$ and it has been marked as \textbf{visited} before we invoke \textsc{SetColor}($v_i$).

{We first prove the first part of the lemma.} We {\em assume for now that $r(v_0)$ is fixed} and we denote by $\Ra{i_j}$ the randomness of sampling ranks for vertices in $\NewN{v_{i_j}}$. %
We will prove by induction on the index $j$ that 
\begin{eqnarray}
\E_{\Ra{i_j}}[r(v_{i_j+1})] \le \frac{r(v_0)}{2^j} \text{    and        } \E_{\Ra{i_j}}[|\NewL{v_{i_{j}}}|] \leq \frac{r(v_0)\cdot\Delta}{2^{j-1}}.\label{ineq:induction1}
\end{eqnarray}
Note that this holds for $j=0$ since $i_0 = 0$,  $r(v_1) \le r(v_0)$, 
$\NewL{v_{i_0}} = L_{v_0}$, and $\E_{\Ra{0}}[|L_{v_0}|] =r(v_0) \cdot |N(v_0)| \leq  r(v_0) \cdot \Delta$.
Next we assume it holds for $j-1$, and prove it also holds for $j$. 
	By the definition of the good vertex $v_{i_j}$, we know that $v_{i_j+1}\in L_{v_{i_j}}$, and that the rank of $v_{i_j+1}$ is at most the median, denoted by $\medianN{v_{i_j}}$, of all the ranks of vertices in $\NewL{v_{i_j}}$, which in turn consists of all vertices in $\NewN{v_{i_j}}$ with rank not larger than $r(v_{i_j})$. %
	Furthermore, by the observation $(\star)$, the rank of $r(v_{i_j+1})$ depends only on $r(v_{i_{j}})$ and the ranks in $\NewN{v_{i_j}}$. This implies that 
	$$\E_{\Ra{i_j}}[r(v_{i_j+1})|r(v_{i_j})]\leq \E_{\Ra{i_j}}[\medianN{v_{i_j}}|r(v_{i_j})]\leq \frac{r(v_{i_j})}{2},$$
	where the last inequality follows from the fact that $\medianN{v_{i_j}}$ is the median of a set of numbers chosen independently and uniformly at random from $[0,1]$, conditioned on that they are at most $r(v_{i_j})$ (see e.g., Lemma 8.2 and 8.3 in~\cite{MU05:probability}).  %
	Since $r(v_{i_{j}})\leq r(v_{(i_{j-1})+1})$ in all cases and, by the induction assumption, $\E_{\Ra{i_{j-1}}}[r(v_{(i_{j-1})+1})] \leq \frac{r(v_0)}{2^{j-1}}$, it holds that %
\begin{eqnarray*}
\E_{\Ra{i_j}}[r(v_{i_j+1})]
\leq\E_{r(v_{i_j})}[\E_{\Ra{i_j}}[r(v_{i_j+1})|r(v_{i_j})]]
&\leq&\frac{1}{2} \E_{r(v_{i_j})}[r(v_{i_{j}})]\\
 &\leq& \frac12 \E_{\Ra{i_{j-1}}}[\E_{r(v_{i_j})}[r(v_{i_j})|r(v_{(i_{j-1})+1})]]\\
 &\leq& \frac{1}{2}\E_{\Ra{i_{j-1}}}[r(v_{(i_{j-1})+1})] 
\leq 
\frac{r(v_0)}{2^{j}}.
\end{eqnarray*}

	Furthermore, for any $j \ge 0$, by the observation $(\star)$, $\NewL{v_{i_j}}$ depends only on $r(v_{i_{j}})$ and ranks in $\NewN{v_{i_j}}$. Thus
	$$\E_{\Ra{i_j}}[|\NewL{v_{i_{j}}}|\ |r(v_{i_j})]\leq r(v_{i_j}) \cdot|\NewN{v_{i_j}}|\leq  r(v_{i_j})\cdot\Delta.$$
	This further implies that %
	\begin{eqnarray*}
		\E_{\Ra{i_j}}[|\NewL{v_{i_{j}}}|] =
		\E_{r(v_{i_j})}[\E_{\Ra{i_j}}[|\NewL{v_{i_{j}}}|\ | r(v_{i_j})]]
		\leq  \E_{r(v_{i_j})}[r(v_{i_j})]\cdot\Delta %
		\leq \frac{r(v_0)\cdot\Delta}{2^{j-1}}.
	\end{eqnarray*}
	
	Now let us no longer assume that $r(v_0)$ is fixed, but instead condition on the event that $r(v_0) \le \alpha$.
	Then it follows that 
	$\E_{\Ra{i_j}}[r(v_{i_j+1})| r(v_0)\leq \alpha] \leq \frac{\alpha}{2^{j}} \text{ and }
		\E_{\Ra{i_j}}[|\NewL{v_{i_{j}}}|\ |r(v_0)\leq \alpha] \leq \frac{\alpha\cdot\Delta}{2^{j-1}}.$
	
	Finally, by the definition of good vertices, it holds that $|\NewL{v_{i_{j}}}|\geq \frac{1}{10} |L_{v_{i_{j}}}|$. This implies that 
	\begin{eqnarray*}
		\E_{\Ra{i_j}}[|L_{v_{i_{j}}}|\ |r(v_0)\leq \alpha] &\leq& 10\cdot\E_{\Ra{i_j}}[|\NewL{v_{i_{j}}}|\ |r(v_0)\leq \alpha] \\
		&\leq& 10\cdot\frac{\alpha\cdot\Delta}{2^{j-1}}.
	\end{eqnarray*}
	This completes the proof of the first part of the lemma.
	
{For the ``Furthermore'' part of the lemma, the analysis is similar as above. Now we start with the assumption that $r(v_0),r(w)\forall w\in N(v_0)$ are fixed. Note that $v_{i_1}\in N(v_0)$, which implies that $r(v_{i_1})$ is also fixed. We will then prove by induction on the index $j$ that $$\E_{\Ra{i_j}}[r(v_{i_j+1})] \le \frac{r(v_{i_1})}{2^{j-1}} \text{ and } \E_{\Ra{i_j}}[|\NewL{v_{i_{j}}}|] \leq \frac{r(v_{i_1})\cdot\Delta}{2^{j-2}}.$$
	In the base case $j=1$, the above two inequalities hold as $r(v_{i_1+1})\leq r(v_{i_1})$ and $\E_{\Ra{i_1}}[|\NewL{v_{i_1}} |] = r(v_{i_1}) \cdot  |\NewN{v_{i_1}}|\leq r(v_{i_1})\cdot \Delta$. {The inductive step from case $j-1$ to $j$ can be then proven in the same way as we proved Inequalities (\ref{ineq:induction1})}. 
	Then instead of assuming that $r(v_0),r(w)\forall w\in N(v_0)$, we condition on the event that $r(v_0)\leq \alpha,r(w)\forall w\in N(v_0)$, which directly implies that $r(v_{i_1})\leq \alpha$. Then it follows that $\E_{\Ra{i_j}}[r(v_{i_j+1})|r(v_0)\leq \alpha,r(w)\forall w\in N(v_0)] \le \frac{\alpha}{2^{j-1}} \text{ and } \E_{\Ra{i_j}}[|\NewL{v_{i_{j}}}|\ |r(v_0)\leq \alpha,r(w)\forall w\in N(v_0)] \leq \frac{\alpha\cdot\Delta}{2^{j-2}}.$ 
Finally, by the definition of good vertices, $|\NewL{v_{i_{j}}}|\geq \frac{1}{10} |L_{v_{i_{j}}}|$, which implies that $\E_{\Ra{i_j}}[|L_{v_{i_{j}}}|\ |r(v_0)\leq \alpha,r(w)\forall w\in N(v_0)]\leq 10  \cdot \E_{\Ra{i_j}}[|\NewL{v_{i_{j}}}|\ |r(v_0)\leq \alpha,r(w)\forall w\in N(v_0)] \leq \frac{10\alpha\cdot\Delta}{2^{j-2}}.$ This completes the ``Furthermore'' part of the lemma.}
\end{proof}

Now we relate the total work to the work incurred by Step~\ref{alg:manynew}. Note that the total work $T_v$ is proportional to the sum of sizes of all lower-ranked neighborhoods of $v_0,v_1,\dots$. We will prove the following lemma, which implies that the total work of recoloring $v$ is at most a constant factor of the total work for recoloring all the {\em good vertices} on the recoloring path. %
\begin{lemma}\label{lem:lowervsgoodlower}
	It holds that $\sum_i |L_{v_i}|\leq 3 \sum_{\textrm{$i: v_i$ is good}} |L_{v_i}| = 3\sum_{j} |L_{v_{i_j}}|$.
\end{lemma}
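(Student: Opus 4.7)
The plan is to partition the recoloring path into segments $S_j = \{i_j, i_j+1, \ldots, i_{j+1}-1\}$ (with $S_J = \{i_J, \ldots, \ell\}$ for the last one), so that each $S_j$ begins with the good vertex $v_{i_j}$ and continues with (possibly zero) consecutive bad indices up to just before the next good vertex. Since $\sum_i|L_{v_i}| = \sum_j \sum_{k \in S_j}|L_{v_k}|$, it suffices to prove for every $j$ that $\sum_{k \in S_j,\, k>i_j}|L_{v_k}| \le 2\,|L_{v_{i_j}}|$; adding back the good-vertex term in each segment then yields the claimed factor of $3$ after summing over $j$.

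My first step will be to record the structural identities that drive the bound. Each vertex is marked \textbf{visited} the very first time it enters some $\NewL{v_m}$, so the sets $\{\NewL{v_m}\}_m$ along the path are pairwise disjoint, and because ranks strictly decrease along the path, $v_0 \notin L_{v_i}$ for $i \ge 1$. It follows that the visited set before $\textsc{SetColor}(v_i)$ equals $V_i = \{v_0\} \sqcup \bigsqcup_{m<i}\NewL{v_m}$ and therefore
$$|\OldL{v_i}| \;=\; 1 + \sum_{m<i}|L_{v_i}\cap \NewL{v_m}|.$$
Being bad means $|\NewL{v_i}| < |L_{v_i}|/10$, which rearranges to $|L_{v_i}| \le \tfrac{10}{9}|\OldL{v_i}|$, while being good means $|L_{v_{i_j}}| \le 10\,|\NewL{v_{i_j}}|$.

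The core of the argument is to iterate the bad-vertex inequality inside a single segment. Fixing $j$, I plan to substitute the decomposition above into $|L_{v_k}| \le \tfrac{10}{9}(1 + \sum_{m<k}|L_{v_k}\cap\NewL{v_m}|)$, sum over bad $k \in S_j$, and interchange the two summations. By disjointness of the $\NewL{v_m}$'s the resulting double sum splits into three cleanly bounded pieces: (i) the contribution of $m < i_j$, coming from visited vertices generated in earlier segments, which will be absorbed globally; (ii) the contribution of $m = i_j$, which is at most $|\NewL{v_{i_j}}| \le |L_{v_{i_j}}|$; and (iii) the contribution of $m \in (i_j, k)$ from bad vertices within the same segment, which totals at most $\sum_{m \in (i_j, i_{j+1})}|\NewL{v_m}| \le \tfrac{1}{10}\sum_{k \in S_j,\, k>i_j}|L_{v_k}|$ since each such $v_m$ is itself bad. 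Absorbing the term from (iii) into the left-hand side and using (ii) gives the desired $\sum_{k \in S_j,\, k>i_j}|L_{v_k}| \le 2|L_{v_{i_j}}|$.

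The main obstacle will be the cross-segment bookkeeping in piece~(i): naively iterating the bad inequality alone gives a recursion $|L_{v_k}| \le \tfrac{10}{9}(1 + |L_{v_0}| + \tfrac{1}{10}\sum_{m<k}|L_{v_m}|)$ whose growth rate $10/9 > 1$ blows up. I expect to circumvent this by summing all the bad inequalities simultaneously over the entire path rather than inside one segment at a time, so that each visited vertex is charged to the unique $\NewL{v_m}$ containing it; combined with $|\NewL{v_m}| \le |L_{v_m}|$ (with an extra factor of $1/10$ whenever $v_m$ is bad), this collapses the global recursion, lets the bad contributions telescope against the $10$\,-fold slack from the good inequalities $|L_{v_{i_j}}| \le 10|\NewL{v_{i_j}}|$, and produces the factor-$3$ bound.
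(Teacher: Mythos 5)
Your structural setup is sound: the sets $\NewL{v_m}$ are pairwise disjoint, the visited set before $\textsc{SetColor}(v_i)$ is $\{v_0\}\sqcup\bigsqcup_{m<i}\NewL{v_m}$, and since ranks strictly decrease $v_0\notin L_{v_i}$, giving $|\OldL{v_i}| = 1 + \sum_{m<i}|L_{v_i}\cap\NewL{v_m}|$. The bad/good rearrangements $|L_{v_i}|\le\tfrac{10}{9}|\OldL{v_i}|$ and $|L_{v_{i_j}}|\le 10|\NewL{v_{i_j}}|$ are also fine. But the heart of the argument has a genuine gap: after interchanging sums you need to bound $\sum_{k>m}|L_{v_k}\cap\NewL{v_m}|$ by (roughly) $|\NewL{v_m}|$, and disjointness of the $\NewL{v_m}$'s does \emph{not} give this. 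Disjointness ensures that, for a \emph{fixed} $k$, each visited vertex in $L_{v_k}$ lands in exactly one $\NewL{v_m}$; but when you sum over $k$, the same vertex $w\in\NewL{v_m}$ is counted once for \emph{every} index $k>m$ with $w\in L_{v_k}$. A single $w$ can be a lower-ranked neighbor of many vertices on the recoloring path, so the multiplicity is unbounded. Consequently your bounds for pieces (ii) and (iii) --- "$\le|\NewL{v_{i_j}}|$" and "$\le\sum_m|\NewL{v_m}|$" --- are not justified, and the global-charging fix you propose in the last paragraph ("charge each visited vertex to the unique $\NewL{v_m}$ containing it") does not repair this, because the problematic quantity charges each visited vertex once \emph{per occurrence in some} $L_{v_k}$, not once overall.

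The paper's proof sidesteps precisely this multiplicity problem by tracking a potential $\Phi(i)=\sum_{k\le i}|\FirstLower(v_k,v_{i+1})|$, where $\FirstLower(v_k,v_i)=\{w\in\NewN{v_k}:r(w)<r(v_i)\}$. Because the $\NewN{v_k}$'s are disjoint, $\Phi(i)$ counts each vertex \emph{at most once}, and because the rank threshold $r(v_{i+1})$ is monotone decreasing in $i$, the family $\FirstLower(v_k,\cdot)$ shrinks monotonically, so the cumulative ``mass'' removed at bad steps telescopes cleanly against the increase at good steps. Your approach treats every pair $(m,k)$ independently and therefore loses the telescoping; without some replacement for it (e.g., a monotone rank-threshold as in $\Phi$, or a one-shot charging scheme that retires each visited vertex after it is first counted), the segment bound $\sum_{k\in S_j, k>i_j}|L_{v_k}|\le 2|L_{v_{i_j}}|$ does not follow. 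To repair the argument you would essentially need to reintroduce a monotone potential or equivalent bookkeeping, at which point you have reconstructed the paper's proof.
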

\begin{proof}
	We first introduce the following definition. For any $i$ and $k<i$, we let $\FirstLower(v_k,v_i)$ denote the set of vertices whose ranks are less than $r(v_i)$, and are sampled when we are exploring $v_k$, i.e., $\FirstLower(v_k,v_i)=\{w: w\in \NewN{v_k}, r(w)<r(v_i) \}$. Note that as $r(v_{i+1})<r(v_{i})$, it always holds that for any $0\leq k< i$, 
	$\FirstLower(v_k,v_{i+1})\subseteq \FirstLower(v_k,v_{i}).$	
	Now we define the following potential function $\Phi$:
	\begin{eqnarray}
	\Phi(-1) := 0 \text{ and }\ \Phi(i):=\sum_{k:k\leq i} |\FirstLower(v_k,v_{i+1})| \ \ \forall i \ge 0, %
	\label{eqn:monotone}
	\end{eqnarray}
	We have the following claim regarding the potential functions. %
	\begin{claim}\label{claim:1}
For any $i \le 0$, $\Phi(i)\geq 0$. Furthermore, 
if $v_i$ is a good vertex, then $\Phi(i) - \Phi(i-1) \le |L_{v_i}|/2$,
otherwise $\Phi(i) - \Phi(i-1) \le - 7 |L_{v_i}|/20.$
	\end{claim}
\begin{proof}%
	Note that if Step \ref{alg:manynew} in subroutine \textsc{SetColor} is executed at vertex $v_i$, i.e., $v_i$ is good, then the potential $\Phi(i)$ might be larger or smaller than $\Phi(i-1)$. 
	If $v_i$ is good then $|\FirstLower(v_i,v_{i+1})| \le \frac{|\GoodNewL{v_i}|}{2} $ by the fact that $r(v_{i+1})$ is at most the median rank in $\GoodNewL{v_i}$.
	Furthermore, it holds that 
	\begin{eqnarray*}
	\Phi(i) = \sum_{k:k\leq i} |\FirstLower(v_k,v_{i+1})|
	&\leq& \sum_{k:k\leq i-1} |\FirstLower(v_k,v_{i})| + |\FirstLower(v_{i},v_{i+1})| \nonumber\\
	&\le&\Phi({i-1})+\frac{|\GoodNewL{v_i}|}{2} 
	\leq\Phi({i-1}) + \frac{|L_{v_i}|}{2}~\label{eqn:phi_manynew}
	\end{eqnarray*}
	
	Now suppose that Step \ref{alg:manyold} is executed at vertex $v_i$, i.e., $v_i$ is not good. %
	Since $v_{i+1}$ is a vertex from the lower half of the old lower neighbors of $v_i$ (i.e., $v_{i+1}\in \GoodOldL{v_i} \subseteq \cup_{k<i} \FirstLower(v_k,v_{i}) \cap \OldL{v_i}$), we have that to obtain the set $\cup_{k<i} \FirstLower(v_k,v_{i+1})$ from the set $\cup_{k<i} \FirstLower(v_k,v_{i})$, we need to remove at least $\frac{1}{2}|\OldL{v_i}|\geq \frac{1}{2} (1-\frac{1}{10}) |L_{v_i}|$ vertices. Furthermore, $\FirstLower(v_i,v_{i+1})$ can contain at most $|\NewL{v_i}|\leq \frac{1}{10}|L_{v_i}|$ vertices. This implies that  
	\begin{eqnarray*}
	\Phi(i) = \sum_{k:k\leq i} |\FirstLower(v_k,v_{i+1})| 
	&=& \sum_{k:k\leq i-1} |\FirstLower(v_k,v_{i+1})| + |\FirstLower(v_i,v_{i+1})|\nonumber\\
	&\leq& \sum_{k:k\leq i-1} |\FirstLower(v_k,v_{i})|- \frac{1}{2} (1-\frac{1}{10}) |L_{v_i}| + \frac{1}{10}|L_{v_i}| \nonumber \\
	&=&\Phi({i-1})-\frac{7}{20}\cdot |L_{v_i}| \label{eqn:phi_manyold} 
	\end{eqnarray*}
\end{proof}

	Now we distinguish three types of indices. We call an index $i$, 
	 a \emph{type I} index, if  Step \ref{alg:manynew} occurred during
	\textsc{Setcolor($v$)} {\em and} $\Phi(i) - \Phi(i-1) \ge 0.$ 
	By Claim \ref{claim:1} it holds that for such an index $i$,
			$|L_{v_i}|\geq 2 (\Phi(i)-\Phi({i-1})).$
	 We call $i$ a \emph{type II} index, if 
	Step \ref{alg:manynew} occurred during
	\textsc{Setcolor($v$)} {\em and} $\Phi(i) - \Phi(i-1) \le 0.$
	 It holds that for such an index $i$ (as for any index), $|L_{v_i}|\geq 0$.
	 We call $i$ a \emph{type III} index, if Step \ref{alg:manyold} occurred during
	\textsc{Setcolor($v$)}, i.e. $v_i$ is not a good vertex. By Claim \ref{claim:1} it holds that for such an index $i$, $\Phi$ decreases and
		\begin{eqnarray*}
			|L_{v_i}| \leq (\Phi({i-1})-\Phi({i})) \cdot \frac{20}{7} < 3\cdot(\Phi({i-1})-\Phi({i})).
		\end{eqnarray*}

Now we bound the sum of sizes of lower-ranked neighborhoods of vertices corresponding to Step \ref{alg:manyold}. It holds that 	
\begin{eqnarray*}
		\sum_{\textrm{$i$: Step \ref{alg:manyold}}} |L_{v_i}| 
		\leq \sum_{\textrm{$i$: type III}} 3(\Phi({i-1})-\Phi(i)) 
		&\leq& \sum_{\textrm{$i$: type II or III}} 3(\Phi({i-1})-\Phi(i)) \\
		 \leq  \sum_{\textrm{$i$: type I}}  3(\Phi(i)- \Phi({i-1})) 
		& \leq &  \sum_{\textrm{$i$: type I}} 3 \cdot \frac{1}{2}|L_{v_i}|
		< \sum_{\textrm{$i$: type I}} 2 |L_{v_i}|
	\end{eqnarray*}
	where the third inequality follows from the fact that  $\Phi$ starts at 0 and is non-negative at the end, and, thus, the total decrease of $\Phi$ is at most its total increase.
	Thus, it follows that 
	\begin{eqnarray*}
		\sum_{i} |L_{v_i}| = 
		\sum_{\textrm{$i$: type I}} |L_{v_i}| +
		\sum_{\textrm{$i$: type II}} |L_{v_i}| +
		\sum_{\textrm{$i$: type III}} |L_{v_i}| 
		\leq 3\sum_{\textrm{$i$: type I or II}} |L_{v_i}| =3\sum_{j} |L_{v_{i_j}}| 
	\end{eqnarray*}
\end{proof}

Now we finish the proof of Lemma~\ref{lemma:recolor_time}. By Lemma \ref{lem:lowervsgoodlower} and Lemma~\ref{lemma:ranks}, it holds that
\begin{eqnarray*}
\E[\sum_i |L_{v_i}|\ | r(v)\leq \alpha] \leq 3\cdot\E[\sum_{j} |L_{v_{i_j}}|\ | r(v)\leq \alpha]
	= O(\alpha\cdot \Delta \cdot \sum_j \frac{1}{2^j}) =
	O(\alpha \Delta).
\end{eqnarray*}
Since the expected work $T_v$ satisfies that $T_v= O(\sum_i |L_{v_i}|)$, the first part of the lemma follows.
By the ``Furthermore'' part of Lemma~\ref{lemma:ranks}, it holds that
\begin{eqnarray*}
\E[\sum_i |L_{v_i}|\ | r(v)\leq \alpha, r(w) \forall w\in N(v)] 
	&\leq& 3\cdot |L_v| + 3\cdot\E[\sum_{j\geq 1} |L_{v_{i_j}}|\ | r(v)\leq \alpha, r(w) \forall w\in N(v)]\\
\leq 3\cdot |L_v|+3\cdot 10\cdot \alpha\cdot \Delta \cdot \sum_{j}\frac{1}{2^{j-2}}
	&=& 3\cdot |L_v| +O(\alpha\cdot \Delta \cdot \sum_j \frac{1}{2^j}) =
	O(|L_v|)+O(\alpha \Delta).
\end{eqnarray*}
Then the ``Furthermore'' part of Lemma \ref{lemma:recolor_time} follows from the fact that  $T_v= O(\sum_i |L_{v_i}|)$.

\setlist[enumerate]{itemsep=3pt,topsep=3pt}
\section{Maintaining the Approximate Weight of the MSF}\label{sec:parameters} 
In this section, we present our dynamic algorithms for maintaining the weight $M$ of a minimum spanning forest of 
a graph $G$ without
 parallel edges and with edge weights in $[1,W]$. Our algorithms exploit a relation between the weight of MSF of a graph $G$ and the number of CCs of some subgraphs of $G$. Let $G^{(\ell)}$ denote the subgraph of $G$ spanned by all edges with weights at most $\ell$ and let $c^{(\ell)}$ denote the number of CCs in $G^{(\ell)}$. 
 We will make use of the following Lemma. 
 \begin{lemma}[\cite{CRT05:MST,CS09:MST,AGM12:linear}]\label{lem:mst_ncc}
 	Let $G$ be a weighted graph\footnote{We remark that in \cite{CRT05:MST,CS09:MST,AGM12:linear}, the input graph is assumed to be \emph{connected}. The lemma we stated here does not require the connectedness assumption and its proof is a straightforward generalization of previous proofs, by noting that $M=n-W\cdot c^{(W)}+\sum_{i=0}^{W-1}c^{(i)}$ in a general graph $G$ with edges weights from $\{1,\dots, W\}$ for any integer $W\geq 1$.} and let $M,W,c^{(\ell)}$ be defined as above. Let $r=\lceil\log_{1+\frac{\varepsilon}{2}} W\rceil$. Let $\lambda_i=(1+\frac{\varepsilon}{2})^{i+1}-(1+\frac{\varepsilon}{2})^i$ and $\ell_i=(1+\frac{\varepsilon}{2})^i$. Then it holds that
 	\begin{eqnarray}
 	M\leq X:=n-c^{(\ell_r)} \cdot (1+\frac{\varepsilon}{2})^r + \sum_{i=0}^{r-1} \lambda_i\cdot c^{(\ell_i)}\leq (1+\frac{\varepsilon}{2}) M. \label{eqn:app_mst}
 	\end{eqnarray}	
 \end{lemma}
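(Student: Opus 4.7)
The plan is to reduce to the case of a discrete weight set via rounding, and then prove a clean combinatorial identity for the MSF weight on the rounded graph. Concretely, I would define a rounded weight $\hat{w}(e) := \ell_{j(e)}$, where $j(e)$ is the smallest index with $\ell_{j(e)} \geq w(e)$ (this is well defined because $\ell_r \geq W$), so that $w(e) \leq \hat{w}(e) \leq (1+\varepsilon/2)\, w(e)$ for every edge. Writing $\hat{M}$ for the MSF weight of the rounded graph, this pointwise sandwich immediately gives $M \leq \hat{M} \leq (1+\varepsilon/2)\, M$: the lower bound holds because any spanning forest has at most its $\hat w$-weight under $w$, and the upper bound follows by taking an MSF of the original graph, whose $\hat w$-weight is at most $(1+\varepsilon/2)$ times its $w$-weight and bounds $\hat M$ from above. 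The task then reduces to showing $\hat{M} = X$.

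A crucial and easy observation is that for every threshold $\ell_i$, the edge sets $\{e : w(e) \le \ell_i\}$ and $\{e : \hat{w}(e) \le \ell_i\}$ coincide, so the CC counts $c^{(\ell_i)}$ defined on $G$ agree with those defined on the rounded graph. To compute $\hat{M}$, I would apply the telescoping identity $\ell_{j(e)} = \ell_r - \sum_{i=j(e)}^{r-1} \lambda_i$ and sum over all edges of an MSF $\hat{F}$ of the rounded graph, obtaining
\begin{eqnarray*}
\hat{M} \;=\; |\hat{F}| \cdot \ell_r \;-\; \sum_{i=0}^{r-1} \lambda_i \cdot |\{e \in \hat{F} : j(e) \le i\}|.
\end{eqnarray*}
Since $\hat{F}$ is a spanning forest of a graph with $c^{(\ell_r)}$ connected components, $|\hat{F}| = n - c^{(\ell_r)}$, and by the cycle/matroid property of MSTs the restriction of $\hat{F}$ to edges of weight at most $\ell_i$ is an MSF of the threshold subgraph $G^{(\ell_i)}$, giving $|\{e \in \hat{F} : j(e) \le i\}| = n - c^{(\ell_i)}$. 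Substituting these two identities, expanding the product, and using $\sum_{i=0}^{r-1}\lambda_i = \ell_r - 1$ to cancel the $n\ell_r$ contributions yields exactly $X = n - c^{(\ell_r)}\ell_r + \sum_{i=0}^{r-1}\lambda_i\, c^{(\ell_i)}$, which together with the sandwich proves the lemma.

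I expect the main obstacle, though still elementary, to be the step identifying $|\{e \in \hat{F} : j(e) \le i\}|$ with $n - c^{(\ell_i)}$: one must invoke the cycle property to argue that the low-weight edges in a global MSF always form an MSF of the corresponding threshold subgraph, a fact that needs some care in the presence of ties among edges sharing a rounded weight (a consistent tie-breaking rule when selecting $\hat{F}$ suffices). The disconnected case flagged in the footnote adds nothing substantive: replacing $n-1$ by $n-c^{(\ell_r)}$ for the size of the MSF is the only change, as the telescoping and the cycle-property argument both go through component by component. Everything else is algebraic bookkeeping.
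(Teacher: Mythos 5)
Your proof is correct and follows essentially the same route as the cited works (round each weight up to the next power of $1+\varepsilon/2$, show the MSF weight of the rounded graph equals $X$ exactly via the telescoping identity and the fact that low-weight MSF edges span the threshold subgraph, then sandwich), which is exactly the standard argument the paper references in its footnote. One small simplification: the identity $|\{e\in\hat F: j(e)\le i\}| = n - c^{(\ell_i)}$ holds for \emph{every} MSF $\hat F$ by a direct exchange argument (if some component of $G^{(\ell_i)}$ were split by $F_i:=\{e\in\hat F:\hat w(e)\le\ell_i\}$, a crossing edge of weight $\le\ell_i$ could be swapped for a heavier edge of $\hat F$, contradicting minimality), so no tie-breaking rule in the choice of $\hat F$ is actually needed.
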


Now we first present some high-level ideas of the algorithms. 
\subsection{High-Level Ideas}\label{sec:highlevel}
As briefly mentioned in the introduction, we would like to build constant-time dynamic algorithms that estimate the number of CCs with appropriate additive error, apply them to $O(\log W/\varepsilon)$ many subgraphs,
and then use the formula (\ref{eqn:app_mst}) to disconnected graphs to estimate the weight of an MSF. In particular, we want to estimate the number of CCs with an additive error $\varepsilon'\cdot \nis(G)$, where $\nis(G)$ is the number of \emph{non-isolated} vertices in $G$ (see below why this is crucial). Our randomized dynamic algorithm for this problem achieves such an error in time 
$O(\max\{1,\frac{\log(1/\varepsilon')\log n}{(\varepsilon')^3\cdot m^*}\})$ with high probability (see precise statement in Section~\ref{subsec:randomized}), and our deterministic algorithm achieves the same error $\varepsilon' \cdot \nis(G)$ in time $O((1/\varepsilon')^2)$.

The \emph{randomized} algorithm 
uses the following general approach used before (see e.g.~\cite{GuptaP13}):
Whenever (1) there exists a static algorithm that in time $T$ estimates a desired parameter (here the number of CCs) with an additive error of $\err$
and (2) 
each update operation changes the value of a desired parameter only by an additive value up to $+/- \delta$ (here 1),
then  running the static algorithm every $\frac{\err}{\delta}$ update
operations leads to a dynamic algorithm with additive error of at most $2 \err$ and {\em amortized} time $O(\frac{T \delta}{\err})$ per update
and this can be turned into a {\em worst-case} time bound using ``rebuilds in the background''.
We use the static (constant-time) algorithm of~\cite{BKM14:numcc} (that improves upon~\cite{CRT05:MST}) for estimating the number of CCs 
with additive error $\varepsilon' n$ as a subroutine. By a straightforward application of the above general approach, we can obtain a dynamic estimator for the number of CCs with an additive error $\varepsilon' n^{2/3}\log^{2/3}n$ with $O(1/\varepsilon'^{3})$ update time. 

However, to use this algorithm for dynamically estimating the number of CCs with an additive error $\varepsilon'\cdot \nis(G)$ achieving the above bound, we need to carefully choose different values of $\err$ throughout all the updates and be able to sample the non-isolated vertices uniformly at random. The latter is exactly the problem solved by $\ell_0$-sampling in streaming algorithms. However, all such algorithms, while only using $O(\poly\log n)$ space, require time $\Omega(\log n)$ per operation. We give a relatively simply data structure that allows to subsample all non-zero entries in a dynamically changing vector of size $n$ in constant time (no matter how small their number might be), albeit with space $O(n)$. We believe that our data
structure might be of independent interest. 

To design a {\em deterministic worst-case} dynamic algorithm we cannot simply invoke the static constant-time algorithm: this algorithm is inherently randomized as it is designed with the goal of reading the smallest possible portion of the graph.
Instead we carefully implement the random local exploration that underlies the static randomized algorithm in a deterministic way. 
Our key observations are {\em (1) that we only need to count the number of CCs that are small in size, i.e.~consist of up to $1/\varepsilon'$ vertices,
	as the number of larger CCs is at most $\varepsilon' \cdot \nis(G)$ {\em and} (2) that these counts can be maintained in  worst-case
	time $O(1/\varepsilon'^2)$ after each update by exploring a neighborhood of $O(1/\varepsilon)$ vertices ``around'' the endpoints of the updated edge.}

Both the randomized and the deterministic MSF algorithm run their respective CC estimation algorithms on each of the $O({\log W}/{\varepsilon})$ relevant subgraphs with $\varepsilon' = \varepsilon/(4W)$. 
Using 
the above-mentioned formula results in
an additive error of $\varepsilon \nis(G)/4$ for MSF.
As the weight of any MSF is at least $\nis(G)/2$, this additive error is at most $\varepsilon M/2$, i.e.,~a
$(1+\varepsilon)$-approximation of $M$. 
For our deterministic algorithm for MSF, the time per edge update is $O(1/\varepsilon'^2) = O(W^2/\varepsilon^2)$ for each of  the $O({\log W}/{\varepsilon})$ subgraphs,  resulting in a 
worst-case $O(W^2 \log W/\varepsilon^3)$ update time. The running time of our randomized algorithm for MSF can be analyzed analogously. %

\subsection{A Deterministic Dynamic Algorithm}\label{sec:estimator_deterministic} 
We first present a deterministic dynamic algorithm for approximating the number of connected components (CCs) with appropriate additive error. %
We use $\ncc(G)$ to denote the number of CCs of $G$, $\nis(G)$ to denote the number of \emph{non-isolated} vertices of $G$, and {\em size} of a CC  to denote the number of vertices in the CC. 

\begin{theorem}\label{thm:deterministic_NCC}
	Let $\varepsilon>0$. There exists a fully dynamic and deterministic algorithm that preprocesses a potentially non-empty graph in $O(\frac{n}{\varepsilon})$ time, and maintains an estimator $\overline{c}$ s.t., $|\overline{c}-\ncc(G)|\leq \varepsilon \cdot \nis(G)$ with worst-case $O(1/\varepsilon^2)$ update time per operation. 
\end{theorem}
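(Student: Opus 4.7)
My plan is to maintain explicitly only those connected components that are ``small''. Fix $k := \lceil 1/\varepsilon \rceil$; call a CC \emph{small} if it has at most $k$ vertices and \emph{large} otherwise, and let $\overline{c}$ denote the current number of small CCs, which will be the value returned by the algorithm. The error guarantee is immediate: every large CC contains at least $k+1 \ge 2$ vertices, all of which are non-isolated, so the number of large CCs is at most $\nis(G)/(k+1) \le \varepsilon \cdot \nis(G)$; combined with $\ncc(G) = \overline{c} + \#\{\text{large CCs}\}$ this gives $|\overline{c} - \ncc(G)| \le \varepsilon \cdot \nis(G)$.

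For the data structures I keep one object per CC, tagged \emph{small} or \emph{large}; a small CC object stores the explicit list of its vertices and induced edges, while a large CC is just a marker. Every vertex carries a pointer to its CC object, so the type and size of its CC can be read in $O(1)$ time. To handle an insertion of $(u,v)$ I case-split on the pair of CCs: if they coincide, nothing happens; if both are large (but distinct), I merge the markers; if one is small and one is large, or if both are small with combined size $>k$, I redirect at most $2k$ vertex pointers into a large marker and decrement $\overline{c}$ by $1$ or $2$; if both are small with combined size $\le k$, I fuse the two small CC objects and decrement $\overline{c}$ by $1$. Each case touches $O(k)=O(1/\varepsilon)$ vertices.

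A deletion of $(u,v)$ has $\mathrm{cc}(u)=\mathrm{cc}(v)$ before the deletion. If this CC was small, a BFS from $u$ restricted to its at most $\binom{k}{2}$ stored edges decides in $O(k^2)$ time whether $v$ is still reachable, and I either leave the CC alone or split it into two small CCs and increment $\overline{c}$. If the CC was large, I run two BFSs in the current graph, one from $u$ and one from $v$; each terminates as soon as it has either discovered $k+1$ distinct vertices (certifying that its source's new CC is still large) or emptied its queue (certifying a new small CC of $\le k$ vertices). From the four possible combinations of outcomes I update $\overline{c}$ and materialize any newly-created small CC objects.

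The main obstacle is bounding each BFS in the large case by $O(k^2)$ worst-case time, since a single high-degree vertex could otherwise blow up the scan of its adjacency list. I resolve this by capping each BFS at $k^2$ directed-edge scans. Suppose for contradiction that the cap is reached while the visited set $V$ still has $|V|\le k$: each scan either (i) added a new vertex to $V$, of which there are at most $|V|-1\le k-1$, or (ii) landed on an already-visited vertex, i.e., it is a directed edge entirely inside $V$, and such scans number at most $2\binom{|V|}{2}\le k(k-1)$. Summing gives strictly fewer than $k^2$ scans, a contradiction, so by the time $k^2$ scans have occurred the BFS has either certified $|V|\ge k+1$ or emptied its queue. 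Hence every deletion runs in $O(k^2)=O(1/\varepsilon^2)$ worst-case time, and preprocessing of the initial graph is handled within the stated $O(n/\varepsilon)$ budget by a straightforward linear pass that assigns each vertex to its CC and classifies each CC as small or large.
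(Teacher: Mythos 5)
Your approach is correct, and it differs from the paper's in a way worth noting: you maintain per-component state (explicit small-CC objects plus ``large'' markers) so that an insertion can be processed by $O(k)$ pointer lookups and reassignments with no BFS. The paper instead keeps \emph{only} the scalar counter $\overline{c}$ and, for every update (insertion or deletion alike), runs capped BFS from both endpoints in the graph before and after the update, inferring the change to the counter from the three resulting BFS sizes; that stateless design avoids the CC-object bookkeeping you must do, at the cost of also using BFS on insertions. Both approaches ultimately rest on the same edge-scan cap argument --- a BFS stopped as soon as it has discovered $k+1$ vertices touches fewer than $k^2$ directed edges --- which you spell out explicitly while the paper states it only as ``explore at most $O(1/\varepsilon)$ vertices and thus $O(1/\varepsilon^2)$ edges.'' Two small remarks on your writeup. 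First, a ``straightforward linear pass'' is not quite enough for the $O(n/\varepsilon)$ preprocessing bound: a naive BFS/DFS over the initial graph costs $O(m)$, which can be $\Theta(n^2)$. The paper instead has each BFS stop at the $(1/\varepsilon{+}1)$-th unvisited vertex or any already-visited vertex, and then sums costs separately over small and large components to get $O(n/\varepsilon)$; your preprocessing needs the same care. Second, ``merging the markers'' of two large CCs in worst-case $O(1)$ per lookup is delicate, since chaining markers gives only amortized union-find bounds; the clean repair is to use a \emph{single} global large marker, which works because merging two large CCs never changes $\overline{c}$, so the algorithm never actually needs to distinguish two large-tagged vertices.
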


We remark that in the above theorem, the initial graph can be an arbitrary graph, and the performance guarantee holds even if the algorithm is not aware of the value $\nis(G)$.  %
By combining the algorithm from Theorem~\ref{thm:deterministic_NCC} and the relation in Lemma~\ref{lem:mst_ncc}, we can obtain the following result.
\begin{theorem}~\label{thm:deterministic_WMST}
	There exists a fully dynamic and deterministic algorithm that maintains an estimator $\overline{M}$ that $(1+\varepsilon)$-approximates the weight $M$ of a MSF of a graph with edge weights from $[1,W]$. The worst-case time per update operation is $O(\frac{W^2\cdot \log W}{\varepsilon^3})$. 
\end{theorem}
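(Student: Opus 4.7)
The plan is to combine Lemma~\ref{lem:mst_ncc} with Theorem~\ref{thm:deterministic_NCC} applied to the threshold subgraphs $G^{(\ell_i)}$. Specifically, I would set $r = \lceil \log_{1+\varepsilon/2} W \rceil = O(\log W / \varepsilon)$, $\ell_i = (1+\varepsilon/2)^i$, $\lambda_i = \ell_{i+1} - \ell_i$, and maintain $r+1$ separate copies of the data structure from Theorem~\ref{thm:deterministic_NCC}, one for each subgraph $G^{(\ell_i)}$, each with accuracy parameter $\varepsilon' := \varepsilon/(8W)$. Since an edge of weight $w$ belongs to $G^{(\ell_i)}$ iff $\ell_i \geq w$, each insertion/deletion of an edge triggers an update in at most $r+1$ of these structures. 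Querying returns the value
\[
\overline{M} \;:=\; n \;-\; \overline{c}^{(\ell_r)} \cdot (1+\tfrac{\varepsilon}{2})^r \;+\; \sum_{i=0}^{r-1} \lambda_i \cdot \overline{c}^{(\ell_i)},
\]
where $\overline{c}^{(\ell_i)}$ is the maintained estimator for $\ncc(G^{(\ell_i)})$. The time bound is then immediate: each of the $O(\log W/\varepsilon)$ subgraph structures incurs worst-case update cost $O(1/\varepsilon'^2) = O(W^2/\varepsilon^2)$, giving total worst-case $O(W^2 \log W/\varepsilon^3)$.

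For correctness, let $X$ be the quantity from Lemma~\ref{lem:mst_ncc}. The triangle inequality applied to $X - \overline{M}$ gives
\[
|\overline{M} - X| \;\leq\; \Bigl[(1+\tfrac{\varepsilon}{2})^r + \sum_{i=0}^{r-1} \lambda_i\Bigr] \cdot \max_i \bigl|\overline{c}^{(\ell_i)} - c^{(\ell_i)}\bigr|.
\]
The bracketed factor telescopes to $2(1+\varepsilon/2)^r - 1 \leq 2(1+\varepsilon/2)W \leq 3W$ for $\varepsilon \leq 1$. By Theorem~\ref{thm:deterministic_NCC} applied to each $G^{(\ell_i)}$, each deviation is bounded by $\varepsilon' \cdot \nis(G^{(\ell_i)}) \leq \varepsilon' \cdot \nis(G)$, since $G^{(\ell_i)} \subseteq G$. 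Hence $|\overline{M} - X| \leq 3W \varepsilon' \nis(G) \leq \tfrac{3\varepsilon}{8}\nis(G)$.

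Next I would prove the key lower bound $M \geq \tfrac{1}{2}\nis(G)$: any MSF contains $\nis(G) - k$ edges, where $k$ is the number of non-isolated connected components, and since each such component has at least two vertices we have $k \leq \nis(G)/2$, so the MSF has at least $\nis(G)/2$ edges of weight $\geq 1$ each. Combined with Lemma~\ref{lem:mst_ncc}, which ensures $M \leq X \leq (1+\varepsilon/2)M$, I obtain
\[
\overline{M} \;\leq\; X + \tfrac{3\varepsilon}{8}\nis(G) \;\leq\; (1+\tfrac{\varepsilon}{2})M + \tfrac{3\varepsilon}{4}M \;\leq\; (1+\varepsilon)M,
\]
and symmetrically $\overline{M} \geq X - \tfrac{3\varepsilon}{8}\nis(G) \geq M - \tfrac{3\varepsilon}{4}M \geq (1-\varepsilon)M$ (rescaling $\varepsilon$ by a constant factor throughout if needed to absorb the low-order terms).

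The main obstacle is purely bookkeeping: verifying that the data structures of Theorem~\ref{thm:deterministic_NCC} can be instantiated in parallel and that an edge update of weight $w$ propagates correctly into exactly the subgraphs $\{G^{(\ell_i)} : \ell_i \geq w\}$. No new algorithmic idea is required beyond the CC-estimation machinery; the nontrivial quantitative point is the choice $\varepsilon' = \Theta(\varepsilon/W)$, which is what forces the $W^2$ factor in the update time but is essential for the additive errors to aggregate into a multiplicative $(1+\varepsilon)$-guarantee via $M \geq \nis(G)/2$.
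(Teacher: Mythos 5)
Your proposal is correct and follows the same approach as the paper's proof: instantiate $r+1$ copies of the deterministic CC-estimator of Theorem~\ref{thm:deterministic_NCC} on the threshold subgraphs $G^{(\ell_i)}$ with accuracy $\varepsilon'=\Theta(\varepsilon/W)$, combine via the formula of Lemma~\ref{lem:mst_ncc}, and convert the additive error $O(\varepsilon\,\nis(G))$ into a multiplicative one using $M\ge\nis(G)/2$. The one place you are actually more careful than the paper is the triangle-inequality bookkeeping: you correctly sum the coefficient mass $(1+\varepsilon/2)^r+\sum_i\lambda_i=2(1+\varepsilon/2)^r-1=\Theta(W)$, which forces the explicit rescaling of $\varepsilon$ you flag at the end (the paper states $|\overline{M}-X|\le \varepsilon\,\nis(G)/4$ directly, which is slightly loose and implicitly absorbs the same constant).
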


\begin{proof}
	Recall that $\nis(G)$ is the number of non-isolated vertices in $G$ and note that $\nis(G)\ge \nis(G^{(\ell_i)})$, since $G^{(\ell_i)}$ is a subgraph of $G$ for any $1 \le i \le r$.
	We call the dynamic algorithm from Theorem~\ref{thm:deterministic_NCC} for estimating $c^{(\ell_i)}$ for each $1 \leq i\leq r$ with $\varepsilon' = 
	\varepsilon /(4W)$, which gives an additive error $\varepsilon\cdot \nis(G^{(\ell_i)})/{4W} \le \varepsilon\cdot \nis(G)/{4W}$. Its worst-case time per update operation 
	is $O(1/\varepsilon'^2)$, which is $O({W^2}/{\varepsilon^2})$.
	Since $G$ is simple, we know that $M\geq {\nis(G)}/{2}$, as each non-isolated vertex is incident
	to at least one edge (of weight at least $1$) of any MSF.
	
	Let $\overline{c}_i$ denote the estimator for $c^{(\ell_i)}$. Then we define $\overline{M}:=n-\overline{c}_r\cdot (1+\frac{\varepsilon}{2})^r+\sum_{i=0}^{r-1}\lambda_i\cdot \overline{c}_i.$ 
	Since $|\overline{c}_i-c^{(\ell_i)}|\leq \frac{\varepsilon \cdot \nis(G)}{4 W}$ and $M\geq \frac{\nis(G)}{2}$, we have that $|\overline{M}-X|\leq \frac{\varepsilon \cdot \nis(G)}{4}\leq \frac{\varepsilon M}{2},$ where $X$ is the quantity in Lemma~\ref{lem:mst_ncc}. Together with inequality~(\ref{eqn:app_mst}), $\overline{M}$ is a $(1+\varepsilon)$-approximation of $M$. %
	Note that the worst-case time per update operation of the algorithm for maintaining $\overline{M}$ is $\sum_{i=1}^r O(\frac{W^2}{\varepsilon^2})=O(\frac{r}{ \varepsilon^2}W^2)=O(\frac{W^2\cdot \log W}{\varepsilon^3})$.
\end{proof}

In the following, we give the proof of Theorem~\ref{thm:deterministic_NCC}.

\begin{proof}[\textbf{Proof of Theorem \ref{thm:deterministic_NCC}}]
	We first give a \emph{static} algorithm for computing the number of small CCs of any graph $G$. We maintain a set of $\lceil 1/\varepsilon \rceil$ counters  $\cnt_\ell$, where $\cnt_{\ell}$ denotes the number of CCs of size $\ell$. 
	 Initially, all the counters are set to  $0$ and all vertices are marked {\em unvisited}. We recursively choose an arbitrary unvisited vertex $v$, mark it  as {\em visited}  and start a BFS at $v$ which runs until (1) it has reached (e.g.~discovered an edge to) $1/\varepsilon +1$ unvisited vertices, 
	 (2) it reaches a visited vertex, or (3) the  BFS terminates because whole connected component (of size at most $1/\varepsilon$) containing $v$ has been explored. 
	 Then we mark all the newly discovered vertices as {\em  visited} and update the counters accordingly. 
	 More precisely, the static and the dynamic algorithms are  as follows.
	\begin{center}
		\begin{tabular}{|p{1.02\textwidth}|}
			\hline
			\textbf{A static algorithm for computing the number of CCs of size at most $1/\varepsilon$}\\
			\begin{enumerate}
				\item Initialize $\cnt_\ell=0$, for each $1\leq \ell \leq 1/\varepsilon$. Mark all vertices as unvisited.
				\item While there exists some unvisited vertex $v$: 
				\begin{enumerate}
					\item Do BFS from $v$ until (i) $1/\varepsilon+1$ unvisited vertices have been reached, or (ii) any visited vertex has been reached, or (iii) no more new vertices can be reached. Mark all the newly discovered vertices in the search as visited.
					\item If (iii) occurs, and $\ell$ vertices have been reached for some $\ell \leq \frac{1}{\varepsilon}$, then increment $\cnt_\ell$ by $1$.
				\end{enumerate}
				\item Define the estimator  $\overline{c}:=\sum_{\ell=1}^{1/\varepsilon} \cnt_\ell$.
			\end{enumerate}
					\vspace{-0.5em}
			\\ \hline
		\end{tabular}
	\end{center}

	The dynamic algorithm %
	updates the counter $\overline{c}$ in time $O(1/\varepsilon^2)$ by running a limited BFS from $u$ and $v$ in the graph before and after the update. The details are given below.%
	\begin{figure}[h]
	\begin{center}
		\begin{tabular}{|p{\textwidth}|}
			\hline
			\textbf{Maintaining an estimator for $\ncc(G)$ of a dynamic graph $G$}\\
			\begin{enumerate}	
				\item Preprocessing: run the above static algorithm to find the $\overline{c}$, the number of CCs of $G_0$ of size at most $1/\varepsilon$.

				\item Handling an edge insertion $(u,v)$: perform three BFS calls: two from $u$ and $v$, respectively, in the graph before the insertion of $(u,v)$, and one from $u$ in the graph after 
				the insertion. 
				Stop the BFS once $1/\varepsilon+1$ vertices have been reached or no more new vertices can be reached. 
				Let $s_u^{(0)},s_v^{(0)},s_u^{(1)}$ denote the sizes of the corresponding explored subgraphs.
				 
				\begin{enumerate}
					\item\label{alg:one_large} If exactly one of $s_u^{(0)}$ and $s_v^{(0)}$, say $s_u^{(0)}$, is no larger than $1/\varepsilon$, then decrement $\overline{c}$ by $1$.
					
					\item\label{alg:two_small} If both of $s_u^{(0)},s_v^{(0)}$ are smaller than $1/\varepsilon$:
					\begin{enumerate}
				\item\label{alg:two_small_1}  $s_u^{(1)}$ is larger than $1/\varepsilon$, then decrement $\overline{c}$ by $2$; 
				\item\label{alg:two_small_2} $s_u^{(1)}$ is no larger than $1/\varepsilon$ and $s_u^{(1)}=s_u^{(0)}$, then keep $\overline{c}$ unchanged;
				
				\item\label{alg:two_small_3} $s_u^{(1)}$ is no larger than $1/\varepsilon$ and $s_u^{(1)}\neq s_u^{(0)}$, then decrement $\overline{c}$ by $1$.
					\end{enumerate}
				\end{enumerate} %
				\item Handling an edge deletion $(u,v)$: perform three BFS calls: one from $u$ in the graph before the deletion of $(u,v)$, and two from $u$ and $v$, respectively, in the graph after 
				the deletion. 
				Stop the BFS once $1/\varepsilon+1$ vertices have been reached or no more new vertices can be reached. 
				Let $s_u^{(0)},s_u^{(1)},s_v^{(1)}$ denote the sizes of the corresponding explored subgraphs.
				
				\begin{enumerate}
					\item If exactly one of $s_u^{(1)}$ and $s_v^{(1)}$, say $s_u^{(1)}$, is no larger than $1/\varepsilon$, then increment $\overline{c}$ by $1$.
					
					\item If both of $s_u^{(1)},s_v^{(1)}$ are smaller than $1/\varepsilon$:
					\begin{enumerate}
						\item  $s_u^{(0)}$ is larger than $1/\varepsilon$, then increment $\overline{c}$ by $2$; 
						\item $s_u^{(0)}$ is no larger than $1/\varepsilon$ and $s_u^{(0)}=s_u^{(0)}$, then keep $\overline{c}$ unchanged;
						
						\item $s_u^{(0)}$ is no larger than $1/\varepsilon$ and $s_u^{(0)}\neq s_u^{(1)}$, then increment $\overline{c}$ by $1$.
					\end{enumerate}
				\end{enumerate}
			\end{enumerate}		
			\\ \hline
		\end{tabular}
	\end{center}
	\end{figure}

	{\bf Correctness.} For the correctness of the dynamic algorithm, we let $\nscc(G)$ denote the number of CCs of size at most $1/\varepsilon$ in $G$. We show that the maintained estimator $\overline{c}$ is equal to $\nscc(G)$ throughout all the updates. Note that we preprocess the graph using the above static algorithm and obtain the estimator $\overline{c}$ for the initial graph. 
	By definition, $\overline{c}=\nscc(G_0)$. Now for any edge insertion $(u,v)$, we know that the number $\nscc$ (of CCs of size at most $1/\varepsilon$) can change by at most $2$. More precisely, it changes if and only if at least one of $s_u^{(0)},s_v^{(0)}$ is at most $1/\varepsilon$ and $u,v$ do not belong to the same CC before the edge insertion, where $s_u^{(0)}$ and $s_v^{(0)}$ are the sizes of the explored subgraphs (before the edge insertion) starting from $u$ and $v$, respectively, that we compute in the algorithm. 
	Furthermore, if Step~\ref{alg:one_large} happens, i.e., exactly one of $s_u^{(0)}$ and $s_v^{(0)}$, say $s_u^{(0)}$, is no larger than $1/\varepsilon$, then a small CC merges into a large CC, and thus $\nscc$ decreases by $1$. If Step~\ref{alg:two_small} happens (i.e., $s_u^{(0)},s_v^{(0)}$ are smaller than $1/\varepsilon$): 
	if Step~\ref{alg:two_small_1} happens, i.e., $s_u^{(1)}$ is larger than $1/\varepsilon$, then two small CCs merge into a CC of size larger than $1/\varepsilon$ and thus $\nscc$ decreases by $2$; if Step~\ref{alg:two_small_2} happens, i.e., $s_u^{(1)}$ is no larger than $1/\varepsilon$ and $s_u^{(1)}=s_u^{(0)}$, then $u,v$ belong to the same CC before $(u,v)$ was inserted and thus $\nscc$ remains unchanged; if Step~\ref{alg:two_small_3} happens, i.e., $s_u^{(1)}$ is no larger than $1/\varepsilon$ and $s_u^{(1)}\neq s_u^{(0)}$, then two small CCs merge into a CC of size no larger than $1/\varepsilon$ and thus $\nscc$ decreases by $1$.
By the description of our algorithm, after the insertion $(u,v)$, the maintained $\overline{c}$ still satisfies that $\overline{c}=\nscc(G')$, where $G'$ is the updated graph. The case for edge deletions can be analyzed similarly.

	Since the total number of CCs of size larger than $\frac{1}{\varepsilon}$ is at most $\varepsilon \cdot \nis(G)$, where $\nis(G)$ is the number of non-isolated vertices of $G$, we know that $\overline{c}$ approximates $\ncc(G)$ with an additive error $\varepsilon \cdot \nis(G)$.

{\bf Running time.} Now we analyze the running time of our dynamic algorithm. We first show that our static algorithm for preprocessing the initial graph can be implemented in $O(n\cdot \frac{1}{\varepsilon})$ time. %
Note that it suffices to bound the time of {\em exploring } each CC $C$, i.e., until all the vertices inside $C$ have been marked as visited. Note that $\cnt_\ell$ is exactly the number of CCs of size $\ell$, for $\ell\leq 1/\varepsilon$ and consider two cases, which together show the $O(n/\varepsilon)$ bound.
(1) If $|C|=\ell \leq \frac{1}{\varepsilon}$, then the total time for exploring $C$ is $O(\ell^2)$. In this case, we note that the total time for exploring CCs of size at most $1/\varepsilon$ is $\sum_{\ell=1}^{1/\varepsilon} \cnt_\ell\cdot O(\ell^2)\leq \sum_{\ell=1}^{1/\varepsilon} \cnt_\ell\cdot \ell \cdot O(1/\varepsilon)= O(n/\varepsilon)$, where the last equation follows from the fact that $\sum_{\ell=1}^{1/\varepsilon}\cnt_\ell\cdot \ell\leq n$. 

(2) If $|C|>1/\varepsilon$, let $S=\{v_1,v_2,\cdots, v_b\}$ denote the set of vertices from which we start a BFS in $C$ and let $s_i$ denote the number of newly discovered vertices from vertex $v_i$. 
It holds that $s_i\leq 1/\varepsilon+1$ for each $i\leq b$ by the description of our algorithm. 
Let $t_j$ denote the number of vertices in $S$ from which the BFS discovers exactly $j$ new vertices, for each $j\leq 1/\varepsilon+1$. Then $|C|=\sum_{j=1}^{1/\varepsilon+1} t_j\cdot j$. 
Furthermore, we note that for each $j\geq 1$, it takes time $O(j\cdot \frac{1}{\varepsilon})$ for the BFS to discover exactly $j$ new vertices, as we will only scan at most $\frac{1}{\varepsilon}+1$ neighbors for each of these new vertices. 
Thus, the total time of exploring $C$ is  $\sum_{j=1}^{1/\varepsilon+1} t_j\cdot O(j\cdot \frac{1}{\varepsilon})\leq O(1/\varepsilon) \cdot \sum_{j=1}^{1/\varepsilon+1} t_j\cdot j =O(|C|/\varepsilon)$. 
Thus, the total time of exploring CCs of size at least $1/\varepsilon+1$ is $\sum_{C: |C|\geq 1/\varepsilon+1} O(|C|/\varepsilon)=O(n/\varepsilon)$, where the last equation follows from the fact that $\sum_{C:|C|\geq 1/\varepsilon + 1} |C| \leq n$.  %
	Finally, we note that for each update (either insertion or deletion), we only need to execute $O(1)$ BFS calls, each of which will explore at most $O(1/\varepsilon)$ vertices (and thus $O(1/\varepsilon^2)$ edges). Therefore, the worst-case time per update operation is $O(1/\varepsilon^2)$.
\end{proof}

\subsection{A Randomized Dynamic Algorithm}\label{subsec:randomized}
In this section, we give a randomized dynamic algorithm for estimating the weight of the MSF. Our algorithm will be built upon a dynamic algorithm for approximating $\ncc(G)$ with an additive error $\varepsilon \cdot \Thr(G)$, for some parameter $\Thr(G)\geq \nis(G)$. We have the following result.

\begin{theorem}\label{thm:random_NCC}
	Let $1>\varepsilon' >0$ and $0<p<1$. Let $G$ be a dynamically changing graph such that each update operation has as additional parameter a value $\Thr(G)$ such that (a) $\Thr(G)\geq \nis(G)$ (where $G$ denotes the graph right before the update) and (b) each update changes $\Thr(G)$ by at most $2$ in comparison to the previous update. 
	Then there exists a fully dynamic algorithm that takes as input the initial graph and the sequence of update operations and, with probability at least $1-p$, maintains an estimator $\overline{cc}$ for the number $\ncc$ of CCs of a graph $G$ s.t.,  $|\overline{cc}-\ncc(G)|\leq \varepsilon'\cdot \Thr(G)$. The worst-case time per update operation is $O(\max\{1,\frac{\log(1/\varepsilon')\log(1/p)}{\varepsilon'^3 \Thr^{*}}\})$, where $\Thr^{*}$ is the minimum value of $\Thr(G)$ over all updates. Our algorithm works against an adaptive adversary.
\end{theorem}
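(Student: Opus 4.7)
The plan is to implement the periodic-rebuild-with-background-computation paradigm described in Section~\ref{sec:highlevel}. Two ingredients are needed: (i) a one-shot constant-time static estimator for $\ncc(G)$ whose additive error scales with $\nis(G)$ rather than $n$, and (ii) a schedule that rebuilds this estimator every $K:=\lceil \varepsilon'\Thr^*/8\rceil$ updates while spreading the rebuild cost evenly over those $K$ updates.

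For ingredient (i), I would adapt the static algorithm of~\cite{BKM14:numcc}. That algorithm samples $s=O(\log(1/p')/(\varepsilon'')^{2})$ vertices uniformly at random, performs a truncated BFS of size $O(1/\varepsilon'')$ from each sample, and returns an estimate with additive error $\varepsilon''n$ and failure probability at most $p'$ in total time $T_s=O(\log(1/p')\log(1/\varepsilon'')/(\varepsilon'')^{3})$. The modification is to draw the $s$ samples uniformly from the set of non-isolated vertices only, and to account for isolated components exactly via a separately maintained count of isolated vertices; this replaces the error $\varepsilon''n$ by $\varepsilon''\cdot \nis(G)\le \varepsilon''\cdot \Thr(G)$. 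Constant-time uniform sampling from the dynamically changing set of non-isolated vertices is achieved by the $O(n)$-space data structure foreshadowed in Section~\ref{sec:highlevel}: keep these vertices in a dynamic array with a back-pointer from each vertex to its slot, so each edge insertion or deletion alters the membership of at most two vertices in $O(1)$ worst-case time, and a uniform sample is just a random array index.

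For ingredient (ii), set $\varepsilon'':=\varepsilon'/4$. The dynamic algorithm keeps a current estimator $\overline{cc}$ and a single background instance of the static algorithm that operates on the graph snapshot at its start time $\tau$; after $K$ updates the background instance finishes, its output atomically replaces $\overline{cc}$, and a new instance is launched. For any time $t\in[\tau+K,\tau+2K]$, the triangle inequality together with the drift bounds $|\ncc(G_t)-\ncc(G_\tau)|\le |t-\tau|\le 2K$ and $|\Thr(G_t)-\Thr(G_\tau)|\le 2|t-\tau|\le 4K$ (the latter from the hypothesis that every update alters $\Thr$ by at most $2$) yield $|\overline{cc}-\ncc(G_t)|\le \varepsilon''\Thr(G_\tau)+2K\le (\varepsilon'/2)\Thr(G_t)+(\varepsilon'/4)\Thr(G_t)<\varepsilon'\Thr(G_t)$, using $K\le \varepsilon'\Thr^*/8\le \varepsilon'\Thr(G_t)/8$. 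The worst-case per-update cost is $O(1)$ for the non-isolated-vertex data structure plus $O(T_s/K)=O(\log(1/\varepsilon')\log(1/p)/((\varepsilon')^{3}\Thr^*))$ for the amortized background share, matching the claimed bound; the $\max\{1,\cdot\}$ in the theorem absorbs the regime where $\Thr^*$ is so large that the background term drops below a constant.

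Adaptive-adversary robustness follows because each background instance draws a fresh batch of random coins independent of all previously revealed information. When an instance begins at time $\tau$, the snapshot $G_\tau$ is measurable with respect to the adversary's history, but the new coins are not, so the static guarantee of~\cite{BKM14:numcc} applies unchanged to produce the desired additive error with probability $\ge 1-p'$. Setting $p'$ to decay geometrically across successive rebuilds (so that $\sum_i p'_i\le p$, which inflates $\log(1/p')$ by only $O(\log i)$ and is absorbed into the time bound) and union-bounding over all rebuilds gives overall failure probability at most $p$. The main technical obstacle is the non-isolated-vertex sampler: one must verify both that it supports $O(1)$ worst-case updates and genuinely uniform sampling, and that this uniformity is preserved under the adaptive adversary (i.e., the adversary cannot bias future samples via past queries). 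The rebuild-and-drift accounting, although slightly more delicate than the standard version because the error budget is measured against the time-varying quantity $\Thr(G)$ rather than $n$, is otherwise routine.
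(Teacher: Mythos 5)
Your overall architecture matches the paper's almost exactly: replace the static estimator of~\cite{BKM14:numcc} with one that samples from the non-isolated vertices (tracked by an array-with-backpointers giving $O(1)$ worst-case updates and uniform samples), correct for isolated vertices by a separately maintained count, re-run the static estimator periodically, and obtain adaptivity because each phase uses fresh coins on a snapshot independent of them. The rebuild-with-drift accounting you sketch is also essentially the one the paper relies on after invoking global rebuilding. However, there are two places where your proposal diverges from a correct argument.

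First, and most substantively, you define the phase length as $K := \lceil \varepsilon' \Thr^*/8 \rceil$, but $\Thr^*$ is the minimum of $\Thr(G)$ over the \emph{entire} update sequence, which the algorithm cannot know at runtime; it is only defined a posteriori. An algorithm has to decide when a phase ends before seeing the future. The paper resolves this by setting the phase length from the \emph{current} parameter: at the start of a phase it records $\Psi := \Thr(G)$ (which is handed to the algorithm as part of the update) and runs the phase for $\varepsilon' \Psi/4$ updates. Since $\Psi \geq \Thr^*$ always, this yields the same $\Thr^*$-based running time bound, and the correctness analysis must then additionally bound how far $\Thr(G)$ can drift away from $\Psi$ within the phase (it can change by at most $2$ per update over $\varepsilon'\Psi/4$ updates, hence by at most $\varepsilon'\Psi/2$, so $\Psi \leq 2\,\Thr(G)$ throughout the phase). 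Your drift calculation is fine once $K$ is replaced by this runtime-computable quantity, but as written it is not an algorithm.

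Second, your plan to union-bound over all phases by letting $p'$ decay geometrically does not come for free: $\log(1/p'_i)$ then grows linearly (or logarithmically, with a polynomial schedule) in the phase index $i$, and $i$ is unbounded as a function of the update-sequence length, so this factor is \emph{not} absorbed into the claimed $O(\log(1/\varepsilon')\log(1/p)/(\varepsilon'^3 \Thr^*))$ bound. The paper sidesteps this by using a fixed failure probability $p$ per phase and proving a per-time guarantee (at any given moment, the bound holds with probability at least $1-p$), which is what the stated time bound actually supports. If you want a simultaneous all-times guarantee with the same running time, you would need an additional argument (e.g., a bound on the total number of updates), which the theorem does not assume.
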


We defer the proof of the above theorem to Section~\ref{subsec:proof_random_ncc}. Given Theorem~\ref{thm:random_NCC} and the relation from Lemma~\ref{lem:mst_ncc}, we have the following theorem.

\begin{theorem}~\label{thm:random_WMST}
	Let $0<p'<1$. There exists a fully dynamic algorithm that with probability at least $1-p'$, maintains an estimator $\overline{M}$ that is a $(1+\varepsilon)$-approximation of the weight $M$ of MSF of a graph $G$ with edge weights from $[W]$. The worst-case time per update operation is $O(\max\{1,\frac{W^3\log W\log(\frac{W}{\varepsilon} )\log(\frac{\log W}{\varepsilon p'})}{\varepsilon^4 \nis^*}   \})$, where $\nis^*$ is the minimum number of non-isolated vertices in $G$ throughout all the updates. Our algorithm works against an adaptive adversary.
\end{theorem}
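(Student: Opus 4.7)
The plan mirrors the proof of Theorem~\ref{thm:deterministic_WMST} but replaces the deterministic CC estimator with the randomized one of Theorem~\ref{thm:random_NCC} and adds a union bound over the relevant subgraphs. Let $r=\lceil \log_{1+\varepsilon/2} W\rceil = O(\log W/\varepsilon)$; by Lemma~\ref{lem:mst_ncc} it suffices to maintain estimators $\bar c_i$ for $c^{(\ell_i)}$ for $0\le i\le r$ and return
\[
\bar M := n - \bar c_r (1+\tfrac{\varepsilon}{2})^r + \sum_{i=0}^{r-1} \lambda_i \bar c_i.
\]

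For each subgraph $G^{(\ell_i)}$ I would run an independent copy of the algorithm of Theorem~\ref{thm:random_NCC} with accuracy $\varepsilon' = \varepsilon/(16W)$ and per-copy failure probability $p = p'/(r+1)$. To every copy I would pass the threshold $T := \nis(G)$ (the non-isolated count of the full graph, not of the subgraph). This is a valid threshold because $\nis(G)\ge \nis(G^{(\ell_i)})$ and any single edge update changes $\nis(G)$ by at most $2$; crucially, this choice makes $T^{*}=\nis^{*}$ for every copy, so each copy's worst-case per-update time from Theorem~\ref{thm:random_NCC} becomes
\[
O\!\left(\max\!\left\{1,\ \frac{\log(1/\varepsilon')\log(1/p)}{\varepsilon'^{3}\,\nis^{*}}\right\}\right)
= O\!\left(\max\!\left\{1,\ \frac{W^{3}\log(W/\varepsilon)\log((\log W)/(\varepsilon p'))}{\varepsilon^{3}\,\nis^{*}}\right\}\right).
\]
Each real update to $G$ is forwarded as a single (possibly no-op) update to each of the $r+1$ copies, yielding the claimed worst-case bound after multiplying by $r+1 = O(\log W/\varepsilon)$. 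A union bound gives simultaneous success of all copies with probability at least $1-p'$, and adaptivity is preserved since $\bar M$ is a deterministic function of the copies' outputs.

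Conditioned on all copies succeeding, i.e., $|\bar c_i-c^{(\ell_i)}|\le \varepsilon'\,\nis(G)$ for every $i$, the error analysis is routine. Using $\sum_{i=0}^{r-1}\lambda_i = (1+\varepsilon/2)^r-1$ and $(1+\varepsilon/2)^r\le 2W$, I would bound
\[
|\bar M - X|\ \le\ \varepsilon'\,\nis(G)\left[(1+\tfrac{\varepsilon}{2})^r + \sum_{i=0}^{r-1}\lambda_i\right]\ \le\ 4W\varepsilon'\,\nis(G)\ \le\ \tfrac{\varepsilon}{4}\nis(G),
\]
where $X$ is the quantity from Lemma~\ref{lem:mst_ncc}. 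Since $G$ is simple, $M\ge \nis(G)/2$ (each non-isolated vertex contributes at least one MSF edge of weight $\ge 1$), so $|\bar M - X|\le \varepsilon M/2$. Combined with $M\le X\le (1+\varepsilon/2)M$ from Lemma~\ref{lem:mst_ncc}, this yields $(1-\varepsilon)M\le \bar M\le (1+\varepsilon)M$ after absorbing a constant into $\varepsilon$.

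The only non-mechanical point I anticipate is the bookkeeping for feeding each copy a consistent update stream: when an update to $G$ has weight outside the window of $G^{(\ell_i)}$ it leaves that subgraph unchanged yet may still shift $T=\nis(G)$, so I would forward an explicit no-op update carrying the new $T$-value to preserve the hypothesis of Theorem~\ref{thm:random_NCC} that $T$ changes by at most $2$ between consecutive updates. This adds only $O(1)$ per copy per operation and does not affect correctness. With that detail handled, the proof is essentially the deterministic argument decorated with a union bound, and none of the steps appears to present a substantive obstacle.
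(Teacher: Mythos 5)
Your proposal is correct and follows essentially the same route as the paper's own proof: run the randomized CC estimator of Theorem~\ref{thm:random_NCC} on each of the $r+1$ thresholded subgraphs $G^{(\ell_i)}$ with accuracy $\Theta(\varepsilon/W)$, pass $\nis(G)$ (the non-isolated count of the \emph{full} graph) as the threshold parameter $\Thr$, apply Lemma~\ref{lem:mst_ncc} to combine the estimates, and take a union bound over the copies. The only cosmetic differences are (i) your choice $\varepsilon'=\varepsilon/(16W)$ versus the paper's $\varepsilon/(4W)$ — in fact your arithmetic, which explicitly tracks the coefficient $(1+\varepsilon/2)^r+\sum_i\lambda_i\le 4W$, is more careful than the paper's and you do not even need the final "absorb a constant into $\varepsilon$" step; and (ii) your mechanism for keeping the copies' update streams in sync: you forward an explicit no-op update carrying the current $\Thr$-value, whereas the paper achieves the same effect by inserting and then immediately deleting a self-loop. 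Both preserve the hypothesis of Theorem~\ref{thm:random_NCC} that $\Thr$ changes by at most $2$ between consecutive updates of each copy, and neither changes the asymptotic bound.
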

\begin{proof}

	Recall from the proof of Theorem \ref{thm:deterministic_WMST}, $\nis(G)\ge \nis(G^{(\ell_i)})$, since $G^{(\ell_i)}$ is a subgraph of $G$ for any $1 \le i \le r$. 
Since $G$ is simple, we know that $M\geq {\nis(G)}/{2}$, as each non-isolated vertex is incident
	to at least one edge (of weight at least $1$) of any MSF.

Now for each $j\leq r$, we would like to maintain $c^{(\ell_j)}$, the number of CCs in $G^{(\ell_j)}$, by invoking Theorem~\ref{thm:random_NCC}. In order to do so, we first ensure that $G^{(\ell_j)}$ will update with $G$ ``synchronously'': for each edge update $(u,v)$ in $G$, if the weight of $(u,v)$ is at most $\ell_j$, then we update $G^{(\ell_j)}$ accordingly; if the weight of $(u,v)$ is larger than $\ell_j$, then we update $G^{(\ell_j)}$ by first inserting a self-loop $(u,u)$ and then immediately deleting the self-loop $(u,u)$. In the latter case, each update in $G$ corresponds to two updates in $G^{(\ell_j)}$, which guarantee that $G^{(\ell_j)}$ is unchanged after the updates. 

Now for each $j\leq r$, we execute the algorithm
	of Theorem~\ref{thm:random_NCC} on
	$G^{(\ell_j)}$ (which is updated according to the above scheme) using $p=\frac{p' }{r}$, $\varepsilon' = \frac{\varepsilon}{4 W}$, and $\Thr(G^{(\ell_j)})=\nis(G)$, i.e.,
	each update operation uses as additional parameter $\nis(G)$.
Note that it always holds that $\Thr(G^{(\ell_j)})\geq \nis(G^{\ell_j})$, and each update in $G^{(\ell_j)}$ changes $\Thr(G^{(\ell_j)})$ by at most $2$ in comparison to the previous update, which is guaranteed by the above update sequence.
Thus, by~Theorem~\ref{thm:random_NCC} the algorithm computes an estimator $\overline{c}_j$ for $c^{(\ell_j)}$ such that with probability $1-\frac{p' }{r}$, it holds that 
	\begin{eqnarray*}
		\abs{\overline{c}_j-c^{(\ell_j)}}\leq \varepsilon'\cdot \Thr(G^{(\ell_j)}) =\frac{\varepsilon}{4 W} \cdot \nis(G), %
	\end{eqnarray*}
Note that throughout all the updates, it holds that $T(G^{(\ell_j)})=\nis(G)\geq \nis^*$. Thus the amortized time spent per update for computing $\overline{c}_j$ is
	\begin{eqnarray*}
		O(\max\{1,\frac{\log(1/\varepsilon')\log(1/p)}{\varepsilon'^3 \nis^{*}}\})&=&O(\max\{1, \frac{W^3\log(W/\varepsilon)\log(r/p')}{\varepsilon^3 \nis^*} \})%
	\end{eqnarray*} %
	
	Let $\overline{M}=n-\overline{c}_r\cdot (1+\frac{\varepsilon}{2})^r+\sum_{j=0}^{r-1}\lambda_j\cdot \overline{c}_j$. Since $|\overline{c}_j-c^{(\ell_j)}|\leq \frac{\varepsilon \nis(G)}{4W}$ and $M\geq \frac{\nis(G)}{2}$, we have that $|\overline{M}-X|\leq \frac{\varepsilon \nis(G)}{4}\leq \frac{\varepsilon M}{2}$, where $X$ is as defined in Lemma~\ref{lem:mst_ncc}. Together with inequality~(\ref{eqn:app_mst}), $\overline{M}$ is a $(1+\varepsilon)$-approximation of $M$. The success probability of the algorithm is at least $1-r\cdot \frac{p'}{r}=1-p'$, and the worst-case time per update operation is 
	\begin{eqnarray*}
		\max\{O(1),\sum_{j=1}^{r} O( \frac{W^3\log(W/\varepsilon )\log(r/p')}{\varepsilon^3 \nis^*} )\} =O(\max\{1,\frac{W^3\log W\log(\frac{W}{\varepsilon} )\log(\frac{\log W}{\varepsilon p'})}{\varepsilon^4 \nis^*}  \})
	\end{eqnarray*}

	The algorithm works against an adaptive adversary as each of the algorithms from ~Theorem~\ref{thm:random_NCC} works against an adaptive adversary and the MSF algorithm
	simply computes a weighted sum of the values returned by each of these algorithms.
	This completes the proof of the theorem.
\end{proof}
The following is a direct corollary of the above theorem by setting $p'=1/n^c$ and the fact that $\nis^*\leq n$. %
\begin{corollary}
	Let $c$ be any constant such that $c\geq 1$. There exists a fully dynamic algorithm that with probability at least $1-\frac{1}{n^c}$, maintains an estimator $\overline{M}$ that is a $(1+\varepsilon)$-approximation of the weight $M$ of the MSF of a graph $G$ with edge weights from $[W]$ and	$W=O((\nis^*)^{1/3}/\log^3 n)$, where $\nis^*$ is the minimum number of non-isolated vertices in $G$ throughout all the updates. The algorithm runs in  $O(\frac{1}{\varepsilon^4}\log^2(\frac{1}{\varepsilon}))$ worst-case time per update operation. 
\end{corollary}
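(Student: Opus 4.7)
The plan is to derive this corollary by directly instantiating Theorem~\ref{thm:random_WMST} with $p' = 1/n^c$ and simplifying, so the whole argument is essentially a parameter substitution. The probability guarantee is immediate: the theorem delivers success probability $1 - p' = 1 - 1/n^c$, and the output is still a $(1+\varepsilon)$-approximation of $M$.

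To recover the claimed running time I would simplify the expression
\[
O\!\left(\max\!\left\{1,\ \frac{W^3 \log W \, \log(W/\varepsilon)\, \log(\log W/(\varepsilon p'))}{\varepsilon^4 \nis^*}\right\}\right)
\]
from Theorem~\ref{thm:random_WMST} factor by factor. The hypothesis $W = O((\nis^*)^{1/3}/\log^3 n)$ is equivalent to $W^3/\nis^* = O(1/\log^9 n)$, contributing $1/\log^9 n$. Since $\nis^* \le n$, we have $W = O(n)$, so $\log W = O(\log n)$ and $\log(W/\varepsilon) = O(\log n + \log(1/\varepsilon))$. Substituting $p' = 1/n^c$ with $c$ constant turns $\log(\log W/(\varepsilon p'))$ into $O(\log n + \log(1/\varepsilon))$ as well, since the $n^c$ contributes only $c \log n = O(\log n)$ inside the outer logarithm.

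Multiplying the factors together gives
\[
O\!\left(\frac{(\log n + \log(1/\varepsilon))^2}{\varepsilon^4 \log^8 n}\right).
\]
The final step is a two-case split on whether $\log(1/\varepsilon) \le \log n$ or $\log(1/\varepsilon) > \log n$. In the first case the numerator is $O(\log^2 n)$ and cancels against the $\log^8 n$ in the denominator, leaving $O(1/\varepsilon^4)$; in the second case the numerator is $O(\log^2(1/\varepsilon))$ while $\log^8 n \ge 1$, giving $O(\log^2(1/\varepsilon)/\varepsilon^4)$. Both cases fit under $O(\log^2(1/\varepsilon)/\varepsilon^4)$, matching the stated worst-case update time. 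I do not foresee a genuine obstacle here; the only care needed is to keep the $\max\{1,\cdot\}$ guard intact and to absorb the constant $c$ and the $\log\log n$ terms into the hidden $O(\cdot)$ constants.
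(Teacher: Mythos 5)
Your proof is correct and takes the same route as the paper, which dispatches the corollary in a single line (``by setting $p'=1/n^c$ and the fact that $\nis^*\leq n$''); you have simply carried out the factor-by-factor simplification of the running-time bound that the paper leaves implicit, and your two-case split on $\log(1/\varepsilon)$ versus $\log n$ is a clean way to see why the $\log^8 n$ in the denominator absorbs the spurious $\log n$ terms.
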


We note that $\nis^*\geq 2 m^*$, where $m^*$ is the minimum number of edges of the graph throughout all the updates. This is true as for the graph $G$ with minimum non-isolated vertices, i.e., $\nis(G)=\nis^*$, each non-isolated vertex will contribute at least half of an edge, and thus the number of edges in $G$ is at least $\frac{\nis^*}{2}$, which is at least $m^*$ by the definition of $m^*$. Then we have the following corollary.

\begin{corollary}
	Let $c$ be any constant such that $c\geq 1$. There exists a fully dynamic algorithm that with probability at least $1-\frac{1}{n^c}$, maintains an estimator $\overline{M}$ that is a $(1+\varepsilon)$-approximation of the weight $M$ of the MSF of a graph $G$ with edge weights from $[W]$ and	$W= O((m^*)^{1/3}/\log^3 n)$, where $m^*$ is the minimum number of edges of the graph throughout all the updates. The algorithm runs in  $O(\frac{1}{\varepsilon^4}\log^2(\frac{1}{\varepsilon}))$ worst-case time per update operation. 
\end{corollary}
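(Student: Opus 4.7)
The plan is to derive this statement directly from the preceding corollary (the one stated in terms of $\nis^*$) by relating $m^*$ and $\nis^*$. The preceding corollary already delivers, with probability at least $1 - 1/n^c$, a $(1+\varepsilon)$-approximation of the MSF weight $M$ in worst-case time $O(\frac{1}{\varepsilon^4}\log^2(\frac{1}{\varepsilon}))$ per update whenever $W = O((\nis^*)^{1/3}/\log^3 n)$. So it suffices to show that the present hypothesis $W = O((m^*)^{1/3}/\log^3 n)$ is at least as strong as that one.

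The first step is to establish the inequality $\nis^* \geq 2 m^*$. I would fix any graph $G$ occurring in the update sequence that attains the minimum number of non-isolated vertices, so $\nis(G) = \nis^*$. Because $G$ is a simple graph (no parallel edges), every non-isolated vertex of $G$ has degree at least one, so the handshake lemma gives $|E(G)| \geq \nis(G)/2 = \nis^*/2$. Combined with the trivial bound $m^* \leq |E(G)|$ coming from the minimality of $m^*$ over the entire update sequence, this chain yields $\nis^* \geq 2 m^*$, as claimed.

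The second step is immediate: from $\nis^* \geq 2 m^*$ we get $(\nis^*)^{1/3} \geq (2m^*)^{1/3} = \Omega((m^*)^{1/3})$, so any $W$ satisfying $W = O((m^*)^{1/3}/\log^3 n)$ automatically satisfies $W = O((\nis^*)^{1/3}/\log^3 n)$. Applying the preceding corollary on the same update sequence with the same parameters $c$ and $\varepsilon$ then produces exactly the algorithm asserted here, with identical failure probability $1/n^c$ and worst-case update time $O(\frac{1}{\varepsilon^4}\log^2(\frac{1}{\varepsilon}))$.

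The only delicate point is the first step, because $\nis^*$ and $m^*$ are minima taken over the entire update sequence and in general are attained on different graphs, so one cannot simply compare $\nis$ and $m$ on a single common graph. The resolution is to fix a single witness graph $G$ for $\nis^*$ and to squeeze $m^*$ through $|E(G)|$ via the trivial bound $m^* \leq |E(G)|$; the simple-graph assumption enters precisely here, via the factor of $2$ in the handshake lemma. Once this bookkeeping is done, everything else is routine.
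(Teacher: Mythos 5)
Your approach is exactly the paper's, but the crucial step is a non-sequitur, and the inequality $\nis^* \geq 2m^*$ it is meant to justify is in fact false. From the two facts $|E(G)| \geq \nis^*/2$ (where $G$ is a witness for $\nis^*$) and $m^* \leq |E(G)|$ one cannot conclude $\nis^*/2 \geq m^*$: both merely bound the same quantity $|E(G)|$, one from below and one from above, which gives no relation at all between $\nis^*/2$ and $m^*$. The comparison the structure of the problem actually permits goes the opposite way: fix instead a witness graph $G'$ with $|E(G')| = m^*$; each of its $m^*$ edges has at most two endpoints, so $\nis(G') \leq 2m^*$, hence $\nis^* \leq \nis(G') \leq 2m^*$. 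So the true inequality is $\nis^* \leq 2m^*$, the reverse of what you (and the paper) assert. A concrete counterexample to $\nis^* \geq 2m^*$: suppose the update sequence visits a graph that is a clique $K_{\sqrt{n}}$ plus $n-\sqrt{n}$ isolated vertices (so $\nis = \sqrt{n}$, $m = \Theta(n)$) and also a perfect matching on all $n$ vertices (so $\nis = n$, $m = n/2$), with the two states connected by first inserting the matching edges and then deleting the clique edges so that $m$ stays $\Theta(n)$ throughout; then $\nis^* = \sqrt{n}$ while $m^* = \Theta(n)$, so $\nis^* \ll 2m^*$.

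Because the correct relation is $\nis^* \leq 2m^*$, the hypothesis $W = O((m^*)^{1/3}/\log^3 n)$ is a \emph{weaker} constraint than $W = O((\nis^*)^{1/3}/\log^3 n)$ and therefore does not imply the hypothesis of the preceding corollary, so the reduction does not go through. This is not a gap you introduced: the paper's own one-line justification (``the number of edges in $G$ is at least $\frac{\nis^*}{2}$, which is at least $m^*$ by the definition of $m^*$'') commits the identical fallacy, and your proposal faithfully reproduces it. Nevertheless, as a matter of logic the step ``this chain yields $\nis^* \geq 2m^*$'' does not follow, and the corollary is not established by this route; repairing it would require either restating the hypothesis in terms of $\nis^*$ (collapsing it into the preceding corollary) or a different argument that bounds the update time directly in terms of $m^*$.
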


\subsubsection{Proof of Theorem~\ref{thm:random_NCC}}\label{subsec:proof_random_ncc}
We first state a known constant-time static algorithm for estimating the number $\ncc(G)$ of CCs with additive error $\varepsilon n$, building on which, we then give a dynamic algorithm for estimating $\ncc(G)$ with an additive error $\varepsilon \Thr(G)$, for some parameter $\Thr(G)\geq \nis(G)$. 

\paragraph{Static algorithms for estimating the number of CCs.}
Recall that $\ncc(G)$ denotes the number of CCs of a graph $G$. We need the following lemma by Berenbrink et al.~\cite{BKM14:numcc} (which improves upon the result in~\cite{CRT05:MST}) that gives a constant-time algorithm for estimating $\ncc(G)$. It is assumed that the algorithm can make some types of queries\footnote{Please note that the query access to the input graph from the sublinear-time community is different from the query operation in the dynamic algorithms community. } to access to the graph. That is, the algorithm can perform a \textbf{vertex-sample query}, which allows it to sample a vertex uniformly at random from $V$, and can make queries to the adjacency list of the graph. Note that these two queries for accessing a static graph can be supported by maintaining an array of vertices and the adjacency list of the graph, respectively.%
\begin{lemma}[\cite{BKM14:numcc}]\label{lemma:bkm_ncc}
	Let $\varepsilon >0$ and $0<p<1$. Suppose the algorithm has access to the adjacency list of a graph $G$ and can perform vertex-sample queries. Then there exists an algorithm that with probability at least $1-p$, returns an estimate that approximates $\ncc(G)$ with an additive error $\varepsilon n$. The running time of the algorithm is $O(1/\varepsilon^2\log(1/\varepsilon)\log(1/p))$. 
\end{lemma}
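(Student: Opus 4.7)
The plan is to derive the estimate from the classical identity $\ncc(G) = \sum_{v \in V} 1/|C_v|$, where $C_v$ is the connected component containing $v$, and then to use a random-rank trick to bring the per-sample exploration cost down to logarithmic. First I would set a threshold $T = \Theta(1/\varepsilon)$ and observe that the number of components of size exceeding $T$ is at most $n/T \le \varepsilon n / 2$, so it suffices to approximately count components of size at most $T$ up to additive error $\varepsilon n / 2$.

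Next, draw independent uniform ranks $r_u \in [0,1]$ for every $u \in V$ and call $v$ a \emph{representative} of $C_v$ if $r_v = \min_{u \in C_v} r_u$. Exactly one representative exists per component and $\Pr_r[v \text{ is a representative}] = 1/|C_v|$, so that $\sum_v \mathbf{1}[v \text{ is rep}] = \ncc(G)$ pointwise. I would then sample $s = \Theta(\varepsilon^{-2}\log(1/p))$ vertices $v_1,\dots,v_s$ uniformly at random, let $Z_i \in \{0,1\}$ indicate whether $v_i$ is certified as a representative by the local exploration described below, and return $\widehat{\ncc} = (n/s)\sum_i Z_i$. Hoeffding's inequality applied to the bounded variables $Z_i$ implies that $\widehat{\ncc}$ is within $\varepsilon n / 2$ of its expectation with probability $1 - p$; combined with a truncation bias of at most $\varepsilon n / 2$ from components of size exceeding $T$, this yields the desired $\varepsilon n$ additive error.

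The main technical step---and the place where one must beat the naive $O(T)$ per-sample cost of the straightforward approach of \cite{CRT05:MST}---is implementing the representative test in expected time $O(\log T) = O(\log(1/\varepsilon))$. I would run a BFS from $v_i$ that aborts immediately upon discovering a vertex with rank less than $r_{v_i}$ or after scanning $T$ distinct vertices, and declare $v_i$ a representative only when the entire component $C_{v_i}$ has been exhausted without abort. Conditioned on $r_{v_i}$, each newly discovered vertex in $C_{v_i}\setminus\{v_i\}$ triggers an abort independently with probability $r_{v_i}$, so the number of expansions before abort is stochastically dominated by a truncated geometric of parameter $r_{v_i}$ with expectation $(1 - (1-r_{v_i})^T)/r_{v_i}$. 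Integrating over $r_{v_i}$ uniform in $[0,1]$ gives the harmonic bound $\int_0^1 (1 - (1-x)^T)/x\, dx = H_T = O(\log T)$, and multiplying by the $O(\varepsilon^{-2}\log(1/p))$ sample count yields the claimed $O(\varepsilon^{-2}\log(1/\varepsilon)\log(1/p))$ time.

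Two last bookkeeping points I expect to be the subtlest: (i) the ranks $r_u$ should be generated lazily on first access (using the vertex-sample query and adjacency-list access) so that no $\Omega(n)$ preprocessing is incurred, and (ii) although $Z_v$ is systematically biased to $0$ on components of size exceeding $T$, this bias is exactly what the $\varepsilon n/2$ truncation error in the first paragraph absorbs, giving a clean additive-error guarantee.
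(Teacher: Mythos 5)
Your accuracy analysis is sound: the identity $\ncc(G)=\sum_{v}1/|C_v|$, truncation at $T=\Theta(1/\varepsilon)$ to absorb components larger than $T$, the unbiased min-rank-representative indicator $Z_v$ with $\E[Z_v]=\mathbf{1}[|C_v|\le T]/|C_v|$, and Hoeffding with $s=\Theta(\varepsilon^{-2}\log(1/p))$ samples all go through. Note, though, that this lemma is cited from~\cite{BKM14:numcc} rather than proved in the paper, and the random-rank device you use is not the route of~\cite{BKM14:numcc} or its predecessor~\cite{CRT05:MST} --- those works estimate $1/|C_v|$ directly via a geometrically-stopped BFS --- so your sketch is in any case a different proof.

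The gap is in the running-time argument. The harmonic bound $H_T=O(\log T)$ controls the expected number of vertex \emph{discoveries} per sample, and that computation is correct; but a BFS step costs an edge scan, not a vertex discovery, and the two can diverge badly. When a popped vertex's neighbor is already in the explored set you pay for the scan but gain no discovery and no opportunity to abort (the abort rule only fires on a \emph{newly} discovered low-rank vertex). If the component being explored is, say, a clique on more than $T$ vertices, then discovering $k$ vertices already incurs $\Theta(k^2)$ edge scans among the explored vertices before the first low-rank discovery can occur, and for the truncated geometric with a uniformly random parameter $r$ one has $\E[\min(k,T)^2]=\int_0^1 \min\bigl(T^2,\Theta(1/r^2)\bigr)\,dr=\Theta(T)$, not $\Theta(\log^2 T)$. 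The expected per-sample work is therefore $\Theta(1/\varepsilon)$ in the worst case, giving $O(\varepsilon^{-3}\log(1/p))$ overall rather than the claimed $O(\varepsilon^{-2}\log(1/\varepsilon)\log(1/p))$. Even with a degree bound $d$ you only get $O(d\,\varepsilon^{-2}\log(1/\varepsilon)\log(1/p))$, which is roughly the~\cite{CRT05:MST} bound; removing the degree factor is exactly the contribution of~\cite{BKM14:numcc}, and it requires a stopping rule charged per edge scanned rather than per vertex discovered (or some other mechanism capping the work per discovery). As written, your argument does not establish the stated time bound.
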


We remark that the algorithm in~\cite{BKM14:numcc} simply samples (uniformly at random) $O(1/\varepsilon^2)$ vertices,
performs a BFS starting from each sampled vertex (for a number of steps) and then makes decisions based on the explored subgraphs. %
Note that if the algorithm is able to perform a \textbf{non-isolated vertex-sample} query, i.e., the algorithm can sample a vertex uniformly at random from the set $N$ of all non-isolated vertices in a graph $G$, then one can approximate the size of CCs in the subgraph $G[N]$ induced by vertices in $N$ with an additive error $\varepsilon |N|=\varepsilon \nis(G)$. This is true as we can simply treat $G[N]$ as the input graph in the algorithm from Lemma~\ref{lemma:bkm_ncc}. We let $G_{\nis}=G[N]$, and thus $\ncc(G_{\nis})$ denotes the number of CCs in $G[N]$. We have the following corollary.
\begin{corollary}\label{lemma:bkm_ncc_nis}
	Let $\varepsilon >0$ and $0<p<1$. Let $\nis(G)$ be the number of non-isolated vertices in $G$. Suppose the algorithm has access to the adjacency list of a graph $G$ and can  perform non-isolated vertex-sample queries. Then there exists an algorithm that with probability at least $1-p$, returns an estimate $\overline{b}$ that approximates $\ncc(G_{\nis})$ with an additive error $\varepsilon\cdot \nis(G)$. The running time of the algorithm is $O(1/\varepsilon^2\log(1/\varepsilon)\log(1/p))$. 
\end{corollary}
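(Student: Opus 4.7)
The plan is to reduce this statement to Lemma~\ref{lemma:bkm_ncc} by applying that lemma to the induced subgraph $G_\nis = G[N]$, where $N$ is the set of non-isolated vertices of $G$. The key observation is that the assumed non-isolated vertex-sample queries on $G$ are precisely vertex-sample queries on the implicit graph $G_\nis$, and the adjacency list of $G$ directly serves as the adjacency list of $G_\nis$: since every neighbor of a vertex $v \in N$ is itself non-isolated in $G$ (it is incident to $v$), the neighborhood of $v$ in $G$ equals its neighborhood in $G_\nis$. Thus the query model required by Lemma~\ref{lemma:bkm_ncc} with respect to $G_\nis$ is fully realized by the query model assumed here on $G$, without any overhead.

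Given this reduction, I would invoke Lemma~\ref{lemma:bkm_ncc} on $G_\nis$ with the same parameters $\varepsilon$ and $p$. The lemma produces, with probability at least $1-p$, an estimate $\overline{b}$ satisfying
\[
|\overline{b} - \ncc(G_\nis)| \le \varepsilon \cdot |V(G_\nis)| = \varepsilon \cdot \nis(G),
\]
which is exactly the desired additive-error guarantee. The running time bound $O(\varepsilon^{-2} \log(1/\varepsilon)\log(1/p))$ carries over verbatim from Lemma~\ref{lemma:bkm_ncc}, since each step of the simulated algorithm (sampling a vertex, scanning adjacency) takes $O(1)$ time in the original query model for $G$.

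There is no real obstacle here; the only subtle point worth spelling out is why the neighborhoods coincide, i.e.\ why the algorithm never ``sees'' an isolated vertex of $G$ while running on $G_\nis$. This is immediate because the BFS-style explorations underlying Lemma~\ref{lemma:bkm_ncc} start from a sampled vertex $v \in N$ and only follow edges of $G$; any vertex reached this way is adjacent to some vertex and hence lies in $N$. Consequently the execution is identical whether we view it as operating on $G_\nis$ or as operating on the non-isolated induced subgraph of $G$, and correctness is inherited directly.
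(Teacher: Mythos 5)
Your proposal is correct and follows the same approach as the paper: treat $G_{\nis} = G[N]$ as the input graph to Lemma~\ref{lemma:bkm_ncc}, noting that non-isolated vertex-sample queries on $G$ are exactly uniform vertex-sample queries on $G_\nis$ and that the adjacency lists of $G$ restricted to $N$ serve directly as adjacency lists for $G_\nis$. You spell out more explicitly than the paper why the neighborhood structures coincide (every neighbor of a non-isolated vertex is itself non-isolated), which is a nice clarification but not a different argument.
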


\paragraph{Estimating $\ncc(G)$: from static to dynamic.} 
In order to dynamically maintaining an estimate for $\ncc(G)$ with an additive error $\varepsilon \Thr(G)$ for some $\Thr(G)\geq \nis(G)$, we will periodically invoke the algorithm from Corollary~\ref{lemma:bkm_ncc_nis} as a subroutine for our dynamic algorithm. This requires us to maintain some data structures so that the algorithm can query the adjacency list of the graph and perform non-isolated vertex-sample queries at any time. The adjacency list of a dynamic graph can be updated trivially in constant time. Next we give a data structure to support non-isolated vertex-sample queries in a dynamic setting.

\setlist[enumerate]{itemsep=0pt,topsep=1pt}
\paragraph{Data structure for supporting non-isolated vertex-sample queries.} 
We first present a more general data structure to sample non-zero entries from an array and then show how to use it to support non-isolated vertex-sample queries.

Given a set $V$ of  $n$ elements (here vertices), numbered from $0$ to $n-1$, each element $u$ with an associated number $d_u$ (here degree), we show how to support the following operations in constant time with preprocessing time $O(n)$:
\begin{center}
	\begin{tabular}{|p{1.02\textwidth}|}
		\hline
\begin{enumerate}
	\item[--]{\bf Update($u,\delta$):} add $\delta$ to $d_u$, where $\delta$ can be positive or negative.
	\item[--]{\bf Non-zero sample():} return an element that is chosen uniformly at random from all elements $u$ with $d_u \not = 0$.
\end{enumerate}
		\vspace{-1em}
		\\ \hline
	\end{tabular}
\end{center}

Let us call an element $u$ of $V$ with $d_u \not= 0$ a {\em non-zero} element.
We implement the data structure by using two arrays and a counter:
\begin{enumerate}
	\item We keep the number $\nis$ of non-zero elements of $V$.
	\item We keep an array $\mathcal{A}$ of size $n$, where only the first $\nis$  entries are used, such that (i) each entry in $\mathcal{A}$ stores a non-zero element $u$  together with $d_u$ and (ii) each non-zero element of $V$ is stored in $\mathcal{A}$ within the first $\nis$ entries.
	\item  We keep an array $\mathcal{P}$ of size $n$, which has an entry for every element of $V$, such that if an element $u$ is stored in $\mathcal{A}[i]$ (i.e. $u$ is non-zero), then $\mathcal{P}[u] = i$; and if an element $u$ is not stored in $\mathcal{A}$ (i.e. $d_u = 0$), then $\mathcal{P}[u] = -1$. Thus $\mathcal{P}$ consists of indices corresponding to the positions of elements in $\mathcal{A}$ or the number $-1$.
\end{enumerate}
During preprocessing we initialize both arrays, set all entries of $\mathcal{P}$ to -1, and set $\nis$ to 0. Then we insert every element $u$ whose initial value $d_u \not= 0$ by calling {Update}($u, d_u$).

\paragraph{Handling an \textbf{Update}($u, \delta$) operation.} Whenever an {Update}($u, \delta$) operation is executed, we check if $\mathcal{P}[u] > -1$. 

Case (1): $\mathcal{P}[u] > -1$. This means that $u$ is stored in $\mathcal{A}$ and $\mathcal{P}[u]$ contains the index of $u$ in $\mathcal{A}$.
Thus we add $\delta$ to $d_u$, which is retrieved and then stored in the entry $\mathcal{A}[\mathcal{P}[u]]$. If the resulting value $d_u \not= 0$, this completes the update operation.
If, however, the resulting value $d_u = 0$, let $v$ be the element stored in $\mathcal{A}[\nis]$.
We copy into
$\mathcal{A}[\mathcal{P}[u]]$ all information of element $v$, which we retrieve from $\mathcal{A}[\nis]$. Then we set
$\mathcal{P}[v] = \mathcal{P}[u]$, set $\mathcal{P}[u] = -1$, and decrement $\nis$.

Case (2): $\mathcal{P}[u] =-1$.
We increment $\nis$ by 1, set $d_u = \delta$, store $u$ and $d_u$ in $\mathcal{A}[\nis]$, and set $\mathcal{P}[u] = \nis$.

\paragraph{Handling \textbf{Non-zero sample} operation.} To implement a {Non-zero sample} operation, we pick a random integer number $j$ between $0$ and $\nis-1$ and return the element from $\mathcal{A}[j]$.

\paragraph{Supporting non-isolated vertex-sample queries in dynamic graphs.} Next we show how to use the above data structure to support non-isolated vertex-sample query throughout all the updates.
Whenever an edge $(u,v)$ is inserted, for each $x \in \{u,v\}$, we call Update($x, 1$).
Whenever an edge $(u,v)$ is deleted, for each $x \in \{u,v\}$, we call Update($x, -1$).
To sample a non-isolated vertex, we call Non-zero sample().

\textbf{Remark:} It is interesting to contrast our data structure for non-isolated vertex-sample queries to the sketches for $\ell_0$-sampling in the data streaming community. To the best of our knowledge, all the sketches for $\ell_0$-sampling use only $O(\poly\log n)$ space, but
require $\Omega(\log n)$ update time, while we use $O(n)$ space, but require only constant time.

\paragraph{Dynamically estimating $\ncc(G)$ with an additive error $\varepsilon \cdot \Thr(G)$, for a parameter $\Thr(G)\geq \nis(G)$.}
Now we are ready to describe our randomized dynamic algorithm for estimating $\ncc(G)$ with an additive error $\varepsilon \cdot \Thr(G)$. 
Our idea is as follows: %
We will maintain the value $\Gamma=\nis(G)$. During initialization, we set  $\Psi=\Thr(G)$ and $\overline{c}=\ncc(G)$. Then we repeat the following: after every $\varepsilon' \Psi/4$ updates, we re-compute the estimator $\overline{c}$ by invoking the static algorithm~from Corollary~\ref{lemma:bkm_ncc_nis} on the current graph $G$ with parameter $\varepsilon'/4$ and re-set  $\Psi=\Thr(G)$. 
In the meanwhile, we maintain the adjacency list of the dynamic graph in a trivial way and maintain the data structures for supporting non-isolated vertex-sample queries as above. The description of our randomized algorithm is given as follows. (For simplicity, we did not include the details for maintaining adjacency list, array of degrees, and data structures for sampling non-isolated vertices.)
\begin{center}
	\begin{tabular}{|p{\textwidth}|}
		\hline
		\textbf{Maintaining an estimator $\overline{c}$ for $\ncc(G)$ of a dynamic graph $G$ with additive error $\varepsilon'\cdot \Thr(G)$, for some parameter $\Thr(G)\geq \nis(G)$}\\
		\begin{enumerate}	
			\item Preprocessing: Traverse the initial graph $G$ (e.g., by performing BFS) to obtain $\nis(G)$ and $\ncc(G)$. Start of the first
			phase. Initialize $\Gamma=\nis(G)$ and $\overline{c}=\ncc(G)$, $\Psi=\Thr(G)$. Let $i=1$. 

			\item For the $i$-th update: %
			\begin{enumerate}
				\item update $\Gamma$ to be $\nis(G)$
				
				\item if $i \mod (\frac{\varepsilon'\cdot\Psi}{4}) = 0$, then \hspace{2cm} $\triangleright$ New phase starts
				\begin{enumerate}
					\item compute an estimator $\overline{b}$ for $\ncc(G_{\nis})$ by running the static algorithm in Corollary~\ref{lemma:bkm_ncc_nis} on $G$ with parameter $\varepsilon=\frac{\varepsilon'}{4}$
					\item set $\overline{c}=\overline{b}+n-\Gamma$ \hspace{2cm} $\triangleright$ $n - \Gamma$ is the number of isolated nodes in $G$ %
					\item set $\Psi=\Thr(G)$
				\end{enumerate}
				\item set $i=i+1$
			\end{enumerate} %
			
		\end{enumerate}		
		\\ \hline
	\end{tabular}
\end{center}

Note that the algorithm runs a static BFS traversal to obtain the exact values for $\nis(G)$ and $\ncc(G)$ in the initial graph.
Then it
partitions the updates into \emph{phases}, starting a new
phase every $\varepsilon' \Psi/4$ updates. At the beginning of each phase,  it sets  $\Psi=\Thr(G_i)$, where $i$ is the update at the beginning of the
phase, and runs the constant-time algorithm from~Corollary~\ref{lemma:bkm_ncc_nis}). (The parameter $\Thr(G_i)$ is ignored for all updates that do not happen at the beginning of a phase.) When asked a query,
it returns $\overline{c}$.

{\bf Correctness.}
We consider an arbitrary phase. At the beginning of the phase either
the algorithm computes the correct value of $\overline{c}$ (for the first phase) or it calls the the static algorithm, which returns with probability $1-p$
an estimator $\overline{b}$ for $\ncc(G_{\nis})$
such that $|\overline{b}-\ncc(G_{\nis})|\leq \frac{\varepsilon' \nis(G)}{4}$, which gives 
$$|\overline{c}-\ncc(G)|=|\overline{b}+n-\nis(G)-\ncc(G)|=|\overline{b}-\ncc(G_{\nis})|\leq \frac{\varepsilon' \nis(G)}{4},$$
where the second equation follows from the fact that $\ncc(G)$ is the sum of $\ncc(G_{\nis})$ and the number of isolated vertices, $n-\nis(G)$.
Let $\Lambda = \nis(G)$ at the beginning of the phase and note that
$\Gamma$ always equals the value $\Thr(G)$ that was given by the first update of a phase. 
We are guaranteed that
at each update $\Thr(G) \ge \nis(G)$ and, thus, it follows that $\Gamma \ge \Lambda$.

We analyze the additive error throughout the phase, ie. the next $\frac{\varepsilon' \Psi}{4}$ updates.
As each update changes $\ncc(G)$ by at most 1,
with at least probability $1-p$, it holds that $\abs{\overline{c}-\ncc(G)}\leq \frac{\varepsilon' \Lambda}{4}+\frac{\varepsilon' \Psi}{4} \leq \frac{\varepsilon' \Psi}{2}$.
 Note that $|\Psi-\Thr(G)|\leq \frac{\varepsilon'\Psi}{2}$, as each update (for all $\frac{\varepsilon' \Psi}{4}$ updates) can change $\Thr(G)$ by at most $2$. Thus $\Psi\leq \frac{1}{1-\varepsilon'/2}\Thr(G)\leq (1+\varepsilon')\Thr(G)\leq 2\Thr(G)$. This implies that $\overline{c}$ approximates $\ncc(G)$ with an additive error $\varepsilon' \Thr(G)$ at any time in a phase. 

Thus it follows that with probability $1-p$, at any time $\abs{\overline{c}-\ncc(G)}\leq \varepsilon' \Thr(G)$.

Note that the algorithm uses ``fresh'' random bits at the beginning of each phase, only needs to access to the current graph, and does not reuse any information computed in prior phases. Within each phase we performed a worst-case analysis, i.e., we
assumed that the adversary changes the graph in the worst possible way, i.e., changing $\ncc(G)$ by 1 in each update.
Thus, our algorithm works against an adaptive adversary, i.e. an adversary that sees the answers to all queries {\em before} deciding on the next update operation.

{\bf Running time.}
For each phase with parameter $\Psi$, the amortized running time is $$O(\max\{1,\frac{(1/\varepsilon^2)\log(1/\varepsilon)\log(1/p)}{\varepsilon \Psi}\})=O(\max\{1,\frac{\log(1/\varepsilon')\log(1/p)}{\varepsilon'^3 \Psi}\}).$$
(Note that we always need to use $O(1)$ time to update the adjacency list and other data structures so as to provide query access to the graph). If we let $\Thr^{*}$ denote the minimum value $\Thr(G)$ over all the graphs throughout all the updates, then in any phase, the parameter $\Psi\geq \Thr^{*}$ and the amortized running time of the algorithm is
$$O(\max\{1,\frac{\log(1/\varepsilon')\log(1/p)}{\varepsilon'^3 \Thr^{*}}\}).$$

By using the standard global rebuilding technique, we can de-amortize the running time and obtain $O(\max\{1,\frac{\log(1/\varepsilon')\log(1/p)}{\varepsilon'^3 \Thr^{*}}\})$ worst-case time per update operation. This finishes the proof of Theorem~\ref{thm:random_NCC}.

\textbf{Remark:} We further remark that by using a similar algorithm and analysis, we can maintain an estimator for $\ncc(G)$ with an additive error $\varepsilon n^{O(1)}$ (instead of $\varepsilon\cdot\Thr(G)$ or $\varepsilon\cdot \nis(G)$), which might be of independent interest. We defer the details to Appendix~\ref{app:4}.

\bibliographystyle{alpha}
\bibliography{dynamic_constant}

\newcommand{\etalchar}[1]{$^{#1}$}
\begin{thebibliography}{HRWN15}

\bibitem[ACK19]{ACK19:coloring}
Sepehr Assadi, Yu~Chen, and Sanjeev Khanna.
\newblock Sublinear algorithms for {($\Delta+ 1$)} vertex coloring.
\newblock In {\em Proceedings of the Thirtieth Annual ACM-SIAM Symposium on
  Discrete Algorithms}, pages 767--786. SIAM, 2019.

\bibitem[AGM12]{AGM12:linear}
Kook~Jin Ahn, Sudipto Guha, and Andrew McGregor.
\newblock Analyzing graph structure via linear measurements.
\newblock In {\em Proceedings of the twenty-third annual ACM-SIAM symposium on
  Discrete Algorithms}, pages 459--467. Society for Industrial and Applied
  Mathematics, 2012.

\bibitem[BCHN18]{BCHN18:coloring}
Sayan Bhattacharya, Deeparnab Chakrabarty, Monika Henzinger, and Danupon
  Nanongkai.
\newblock Dynamic algorithms for graph coloring.
\newblock In {\em Proceedings of the Twenty-Ninth Annual ACM-SIAM Symposium on
  Discrete Algorithms}, pages 1--20. SIAM, 2018.

\bibitem[BCK{\etalchar{+}}17]{barba2017dynamic}
Luis Barba, Jean Cardinal, Matias Korman, Stefan Langerman, Andr{\'e} van
  Renssen, Marcel Roeloffzen, and Sander Verdonschot.
\newblock Dynamic graph coloring.
\newblock In {\em Workshop on Algorithms and Data Structures}, pages 97--108.
  Springer, 2017.

\bibitem[BFP{\etalchar{+}}73]{blum1973time}
Manuel Blum, Robert~W Floyd, Vaughan Pratt, Ronald~L Rivest, and Robert~E
  Tarjan.
\newblock Time bounds for selection.
\newblock {\em Journal of Computer and System Sciences}, 7(4):448--461, 1973.

\bibitem[BGK{\etalchar{+}}]{BComm}
Sayan Bhattacharya, Fabrizio Grandoni, Janardhan Kulkarni, Quanquan~C. Liu, and
  Shay Solomon.
\newblock Fully dynamic {$(\Delta+1)$} coloring in constant update time.

\bibitem[BK19]{BhattacharyaK2019}
Sayan Bhattacharya and Janardhan Kulkarni.
\newblock Deterministically maintaining a $(2+\varepsilon)$-approximate minimum
  vertex cover in ${O}(1/\varepsilon^2)$ amortized update time.
\newblock In {\em Proceedings of the Thirtieth Annual ACM-SIAM Symposium on
  Discrete Algorithms}, pages 1872--1885. SIAM, 2019.

\bibitem[BKMT14]{BKM14:numcc}
Petra Berenbrink, Bruce Krayenhoff, and Frederik Mallmann-Trenn.
\newblock Estimating the number of connected components in sublinear time.
\newblock {\em Information Processing Letters}, 114(11):639--642, 2014.

\bibitem[BKS12]{BKS12:spanner}
Surender Baswana, Sumeet Khurana, and Soumojit Sarkar.
\newblock Fully dynamic randomized algorithms for graph spanners.
\newblock {\em ACM Transactions on Algorithms (TALG)}, 8(4):35, 2012.

\bibitem[BM17]{barenboim2017fully}
Leonid Barenboim and Tzalik Maimon.
\newblock Fully-dynamic graph algorithms with sublinear time inspired by
  distributed computing.
\newblock {\em Procedia Computer Science}, 108:89--98, 2017.

\bibitem[CFPS19]{CFPS19}
Artur Czumaj, Hendrik Fichtenberger, Pan Peng, and Christian Sohler.
\newblock Testable properties in general graphs and random order streaming.
\newblock {\em CoRR}, abs/1905.01644, 2019.

\bibitem[CHHK16]{censor2016optimal}
Keren Censor-Hillel, Elad Haramaty, and Zohar Karnin.
\newblock Optimal dynamic distributed mis.
\newblock In {\em Proceedings of the 2016 ACM Symposium on Principles of
  Distributed Computing}, pages 217--226. ACM, 2016.

\bibitem[CRT05]{CRT05:MST}
Bernard Chazelle, Ronitt Rubinfeld, and Luca Trevisan.
\newblock Approximating the minimum spanning tree weight in sublinear time.
\newblock {\em SIAM Journal on computing}, 34(6):1370--1379, 2005.

\bibitem[CS09]{CS09:MST}
Artur Czumaj and Christian Sohler.
\newblock Estimating the weight of metric minimum spanning trees in sublinear
  time.
\newblock {\em SIAM Journal on Computing}, 39(3):904--922, 2009.

\bibitem[DHZ19]{duan2019dynamic}
Ran Duan, Haoqing He, and Tianyi Zhang.
\newblock Dynamic edge coloring with improved approximation.
\newblock In {\em Proceedings of the Thirtieth Annual ACM-SIAM Symposium on
  Discrete Algorithms}, pages 1937--1945. SIAM, 2019.

\bibitem[DKM{\etalchar{+}}94]{DKMHRT94:hashing}
Martin Dietzfelbinger, Anna~R. Karlin, Kurt Mehlhorn, Friedhelm {Meyer auf der
  Heide}, Hans Rohnert, and Robert~Endre Tarjan.
\newblock Dynamic perfect hashing: Upper and lower bounds.
\newblock {\em {SIAM} J. Comput.}, 23(4):738--761, 1994.

\bibitem[EGIN97]{EppsteinGIN97}
David Eppstein, Zvi Galil, Giuseppe~F. Italiano, and Amnon Nissenzweig.
\newblock {Sparsification - a technique for speeding up dynamic graph
  algorithms}.
\newblock {\em J. ACM}, 44(5):669--696, 1997.

\bibitem[Fre83]{Fred83}
Greg~N. Frederickson.
\newblock Data structures for on-line updating of minimum spanning trees
  (preliminary version).
\newblock In {\em STOC}, pages 252--257, 1983.

\bibitem[GP13]{GuptaP13}
Manoj Gupta and Richard Peng.
\newblock Fully dynamic $(1+\epsilon)$-approximate matchings.
\newblock In {\em Foundations of Computer Science (FOCS), 2013 IEEE 54th Annual
  Symposium on}, pages 548--557. IEEE, 2013.

\bibitem[HDLT01]{HLT01:connectivity}
Jacob Holm, Kristian De~Lichtenberg, and Mikkel Thorup.
\newblock Poly-logarithmic deterministic fully-dynamic algorithms for
  connectivity, minimum spanning tree, 2-edge, and biconnectivity.
\newblock {\em Journal of the ACM (JACM)}, 48(4):723--760, 2001.

\bibitem[HK97]{HK97:mst}
Monika~R Henzinger and Valerie King.
\newblock Maintaining minimum spanning trees in dynamic graphs.
\newblock In {\em International Colloquium on Automata, Languages, and
  Programming}, pages 594--604. Springer, 1997.

\bibitem[HK99]{HenzingerK99}
Monika~Rauch Henzinger and Valerie King.
\newblock Randomized fully dynamic graph algorithms with polylogarithmic time
  per operation.
\newblock {\em J. ACM}, 46(4):502--516, 1999.

\bibitem[HK01]{HenzingerK01}
Monika~Rauch Henzinger and Valerie King.
\newblock Maintaining minimum spanning forests in dynamic graphs.
\newblock {\em {SIAM} J. Comput.}, 31(2):364--374, 2001.

\bibitem[HKNO09]{HKNO:local}
Avinatan Hassidim, Jonathan~A Kelner, Huy~N Nguyen, and Krzysztof Onak.
\newblock Local graph partitions for approximation and testing.
\newblock In {\em Foundations of Computer Science, 2009. FOCS'09. 50th Annual
  IEEE Symposium on}, pages 22--31. IEEE, 2009.

\bibitem[HRWN15]{HRW15:MST}
Jacob Holm, Eva Rotenberg, and Christian Wulff-Nilsen.
\newblock Faster fully-dynamic minimum spanning forest.
\newblock In {\em Algorithms-ESA 2015}, pages 742--753. Springer, 2015.

\bibitem[KKM13]{KapronKM13}
Bruce~M. Kapron, Valerie King, and Ben Mountjoy.
\newblock Dynamic graph connectivity in polylogarithmic worst case time.
\newblock In {\em SODA}, pages 1131--1141, 2013.

\bibitem[MMPS17]{MMPS17:testable}
Morteza Monemizadeh, S~Muthukrishnan, Pan Peng, and Christian Sohler.
\newblock Testable bounded degree graph properties are random order streamable.
\newblock In {\em 44th International Colloquium on Automata, Languages, and
  Programming (ICALP 2017)}, 2017.

\bibitem[MR09]{MR09:distance}
Sharon Marko and Dana Ron.
\newblock Approximating the distance to properties in bounded-degree and
  general sparse graphs.
\newblock {\em ACM Transactions on Algorithms (TALG)}, 5(2):22, 2009.

\bibitem[MU05]{MU05:probability}
Michael Mitzenmacher and Eli Upfal.
\newblock {\em Probability and computing: Randomized algorithms and
  probabilistic analysis}.
\newblock Cambridge university press, 2005.

\bibitem[NO08]{NO08:constant}
Huy~N Nguyen and Krzysztof Onak.
\newblock Constant-time approximation algorithms via local improvements.
\newblock In {\em Foundations of Computer Science, 2008. FOCS'08. IEEE 49th
  Annual IEEE Symposium on}, pages 327--336. IEEE, 2008.

\bibitem[NS17]{NS17dynamic}
Danupon Nanongkai and Thatchaphol Saranurak.
\newblock Dynamic spanning forest with worst-case update time: adaptive, {Las
  Vegas}, and {$O(n^{1/2-\varepsilon})$}-time.
\newblock In {\em Proceedings of the 49th Annual ACM SIGACT Symposium on Theory
  of Computing}, pages 1122--1129. ACM, 2017.

\bibitem[NSWN17]{NSW17:MST}
Danupon Nanongkai, Thatchaphol Saranurak, and Christian Wulff-Nilsen.
\newblock Dynamic minimum spanning forest with subpolynomial worst-case update
  time.
\newblock In {\em Foundations of Computer Science (FOCS), 2017 IEEE 58th Annual
  Symposium on}, pages 950--961. IEEE, 2017.

\bibitem[ORRR12]{ORRR12:VC}
Krzysztof Onak, Dana Ron, Michal Rosen, and Ronitt Rubinfeld.
\newblock A near-optimal sublinear-time algorithm for approximating the minimum
  vertex cover size.
\newblock In {\em Proceedings of the twenty-third annual ACM-SIAM symposium on
  Discrete Algorithms}, pages 1123--1131. SIAM, 2012.

\bibitem[PD06]{PatrascuD06}
Mihai Patrascu and Erik~D. Demaine.
\newblock Logarithmic lower bounds in the cell-probe model.
\newblock {\em {SIAM} J. Comput.}, 35(4):932--963, 2006.

\bibitem[PR07]{PR07:sublinear_distributed}
Michal Parnas and Dana Ron.
\newblock Approximating the minimum vertex cover in sublinear time and a
  connection to distributed algorithms.
\newblock {\em Theoretical Computer Science}, 381(1):183--196, 2007.

\bibitem[PS18]{PS18:stream}
Pan Peng and Christian Sohler.
\newblock Estimating graph parameters from random order streams.
\newblock In {\em Proceedings of the Twenty-Ninth Annual ACM-SIAM Symposium on
  Discrete Algorithms}, pages 2449--2466. SIAM, 2018.

\bibitem[Sol16]{Sol16:matching}
Shay Solomon.
\newblock Fully dynamic maximal matching in constant update time.
\newblock In {\em Foundations of Computer Science (FOCS), 2016 IEEE 57th Annual
  Symposium on}, pages 325--334. IEEE, 2016.

\bibitem[Sol18]{solomon2018local}
Shay Solomon.
\newblock Local algorithms for bounded degree sparsifiers in sparse graphs.
\newblock In {\em 9th Innovations in Theoretical Computer Science Conference
  (ITCS 2018)}. Schloss Dagstuhl-Leibniz-Zentrum fuer Informatik, 2018.

\bibitem[SW18]{SW18:coloring}
Shay Solomon and Nicole Wein.
\newblock Improved dynamic graph coloring.
\newblock In {\em 26th Annual European Symposium on Algorithms}, 2018.

\bibitem[WN17]{wulff2017fully}
Christian Wulff-Nilsen.
\newblock Fully-dynamic minimum spanning forest with improved worst-case update
  time.
\newblock In {\em Proceedings of the 49th Annual ACM SIGACT Symposium on Theory
  of Computing}, pages 1130--1143. ACM, 2017.

\bibitem[YYI12]{YYI12:constant}
Yuichi Yoshida, Masaki Yamamoto, and Hiro Ito.
\newblock Improved constant-time approximation algorithms for maximum matchings
  and other optimization problems.
\newblock {\em SIAM Journal on Computing}, 41(4):1074--1093, 2012.

\end{thebibliography}

\newpage
\appendix
\begin{center}\huge\bf Appendix \end{center}

\section{Lower Bound for Dynamic $\Delta$-Colorability Testing}\label{sec:lb_delta_coloring}
In~\cite{PatrascuD06} Patrascu and Demaine construct an $n$-node graph and show that there exists a 
sequence $\mathcal S$ of $T$ edge insertion, edge deletion, and query operations such that any data structure for dynamic connectivity must perform  
$\Omega(T \log n)$ cell probes to process the sequence, where each cell has size $O(\log n)$. This 
shows that the amortized number of cell probes per operation is $\Omega(\log n)$.

We now show  how to use this result to get a lower bound for the following {\em dynamic $\Delta$-colorability testing problem}: An insert($u,v)$ operation inserts the edge $(u,v)$, a delete($u,v$) operation deletes the edge
$(u,v)$, and a query($u,v$) operation returns {\em yes} if the graph is $\Delta$-colorable and {\em no}
otherwise, where $\Delta$ is the maximum degree in the current graph. Specifically we show the lower bound for $\Delta = 2$.

The graph $G$ in the proof of\cite{PatrascuD06} consists of a $\sqrt n \times \sqrt n$ grid, where each node in column $1$ has exactly 1
edge to a node of column 2 and no other edges, each node in column $i$, with $1 < i < \sqrt n$ has exactly 1 edge to a node of column $i-1$ and 1 edge to a node of column $i+1$ and no other edges, and
each node in column $\sqrt n$ has exactly 1 edge to a node of column $\sqrt n -1$ and no other edges.
Thus, the graph consists of $\sqrt n$ paths of length $\sqrt n -1$ and the edges between column $i$ and $i+1$ for any $1 \le i < \sqrt n$ represent a permutation of the $\sqrt n$ rows.
The sequence $\mathcal S$ consists of ``batches'' of $O(\sqrt n)$ edge updates, replacing the permutation of some column $i$ by a new permutation for column $i$. Between the batches of updates are ``batches'' of connectivity queries, each consisting of $\sqrt n$ connectivity queries
and a parameter $1 \le k \le \sqrt{n}$, where the $j$-th query 
for $1 \le j \le \sqrt n$ of each batch tests whether the $j$-th vertex of column 1 is connected with a specific vertex of column $k$. 

Note that the maximum degree $\Delta$ is 2. We now show how to modify each connectivity
query $(u,v)$ such that it consists of a constant number of edge updates and one query whether the resulting graph is $\Delta$-colorable. The answer will be {\em no} iff $u$ and $v$ are connected.
Thus, in the resulting sequence $\mathcal S'$ the number of query operations equals the number of
query operations in $\mathcal S$ and the number of update operations is linear in the number of update and query operations in $\mathcal S$. 
Thus the total number of oerations in $\mathcal S'$ is only a constant factor larger
than the number of operations in $\mathcal S$, which, together with the result of~\cite{PatrascuD06}, implies that the amortized number 
of cell probes per operation is $\Omega(\log n)$.

We now show how to simulate a connectivity query($u,v$), where $u$ is in column 1 and $v$ is in column $k$ for some $1 \le k \sqrt{n}$. 
We assume that $k$ is even and explain below how to deal with the case that $k$ is odd.
The instance for the dynamic $\Delta$-colorability testing consists of $G$ with an additional node $s$ added.
To simulate a connectivity query($u,v$) we (1) remove the edge from $v$ to
its neighbor in column $k+1$ if $k < \sqrt{n}$, (2) add the edges $(u,s)$ and $(v,s)$ and then (3) ask a $\Delta$-colorability query. Note that the resulting graph still has maximum degree 2.
Furthermore, if $u$ and $v$ are connected in $G$ then there exists a unique path of odd length $k -1$ between them. 
Together with the edges $(u,s)$ and $(v,s)$ and the assumption that $k$ is even, this results in an odd length cycle, so that the answer to the  2-colorability query is {\em no}. 
If, however, $u$ and $v$ are not connected in $G$, then adding the edges $(u,s)$ and $(v,s)$ creates a path of length $2 + \sqrt n - 1 + k - 1 = \sqrt n + k$, but no cycle. Thus, the 2-colorability query returns {\em yes}.
Thus $u$ and $v$ are connected in $G$ iff the 2-colorability query in the modified graph returns {\em no}.
Afterwards we remove the edges $(u,s)$ and $(v,s)$.
Finally if $k$ is odd, we do not add a vertex $s$ to $G$ and to simulate the connectivity query($u,v$)
we simply insert the edge $(u,v)$. As before there exists an odd length cycle in the graph iff $u$ and $v$ are connected. The rest of the proof remains unchanged.

This shows the following theorem.
\begin{theorem}
Any data structure for dynamic $\Delta$-colorability testing, where $\Delta$ is the maximum degree in the graph, must perform $\Omega(\log n)$ cell probes, where each cell has size $O(\log n)$.
\end{theorem}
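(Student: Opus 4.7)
The plan is to reduce from the $\Omega(\log n)$ cell-probe lower bound of Patrascu and Demaine for dynamic connectivity. Their hard instance is a graph consisting of $\sqrt n$ disjoint paths arranged as a $\sqrt n \times \sqrt n$ grid (with a permutation of edges between each pair of adjacent columns), undergoing batches of updates that replace the permutation between two columns and batches of connectivity queries between a vertex $u$ in column $1$ and a vertex $v$ in column $k$. Crucially, every intermediate graph in their construction has maximum degree $\Delta = 2$, so the natural target problem for the reduction is dynamic $2$-colorability testing, and I would prove the theorem for $\Delta = 2$ (which trivially gives the bound for the $\Delta$-colorability problem as stated).

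The key structural observation is that a graph of maximum degree $2$ is a disjoint union of paths and cycles, and such a graph fails to be $2$-colorable if and only if it contains an odd cycle. So the plan is to simulate each connectivity query $(u,v)$ by performing a constant number of edge updates that create a cycle through $u$ and $v$ iff they are already connected, arranged to have odd length. Concretely, since the unique would-be $u$--$v$ path inside the Patrascu--Demaine graph has length $k-1$, I would handle the two parities separately: when $k$ is odd, inserting the single edge $(u,v)$ closes an odd cycle of length $k$; when $k$ is even, I instead introduce an auxiliary vertex $s$ (added once, at the start) and insert the edges $(u,s)$ and $(v,s)$, producing a cycle of length $k+1$, which is again odd. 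If $u$ and $v$ are disconnected in the original graph, no cycle is created and the instance stays bipartite.

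To keep the maximum-degree invariant $\Delta = 2$ at every intermediate step, I would, before inserting the extra edges, delete the edge from $v$ to its neighbor in column $k+1$ (when $k < \sqrt n$), and then restore it together with the removal of the extra edges after the $\Delta$-colorability query is answered. A quick case check shows $u$, $v$, and $s$ have degree at most $2$ throughout. Each connectivity query is thereby simulated by $O(1)$ edge updates surrounding exactly one $\Delta$-colorability query, so a sequence of $T$ operations on the Patrascu--Demaine instance transforms into a sequence of length $\Theta(T)$ for dynamic $\Delta$-colorability testing.

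The main obstacle is not conceptual but a careful parity-and-degree bookkeeping, together with confirming that the correspondence between the two problems is one-to-one (each connectivity answer is recoverable from the corresponding single colorability answer). Once that is in place, the $\Omega(T \log n)$ cell-probe lower bound of Patrascu and Demaine immediately transfers, yielding the claimed amortized $\Omega(\log n)$ cell probes per operation lower bound for dynamic $\Delta$-colorability testing under the cell-probe model with cell size $O(\log n)$.
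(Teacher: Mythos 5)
Your proposal matches the paper's proof essentially step for step: reduce from the Patrascu--Demaine dynamic connectivity lower bound on the $\sqrt n \times \sqrt n$ grid of paths ($\Delta = 2$), observe that a max-degree-$2$ graph is non-$2$-colorable iff it has an odd cycle, and simulate each connectivity query $(u,v)$ with $O(1)$ edge updates around a single $\Delta$-colorability query, splitting on the parity of the column index $k$ (auxiliary vertex $s$ with edges $(u,s),(v,s)$ when $k$ is even, the single edge $(u,v)$ when $k$ is odd) and temporarily deleting $v$'s column-$(k+1)$ edge to preserve $\Delta = 2$. This is the same reduction, parity bookkeeping, and degree accounting as in the paper; your presentation, if anything, states the degree-preservation step slightly more uniformly across both parity cases.
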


\section{Deferred Implementations, Proofs and Discussions from Section~\ref{sec:coloring}}\label{sec:appendix_sec_coloring}
\subsection{Updating the Data Structures}\label{sec:appendix_data_structure}
\paragraph{Case I: an edge deletion $(u,v)$.} Whenever an edge $(u,v)$ gets deleted, we update the data structures corresponding to $u$ and $v$ as follows. More precisely, we first update the sets $H_u, L_u, H_v, L_v$ and their sizes trivially in constant time. 
The lists $\ColorH{u},\ColorL{u},\ColorH{v}, \ColorL{v}$ can be updated in constant worst-case time. The hash tables $\mathcal{A}_u^H, \mathcal{A}_v^H$ can also be maintained in constant amortized expected update time. %
More precisely, suppose w.l.o.g., $u\in L_{v}$, then we do the following:
\begin{enumerate}
	\item Delete $(\chi(v),\mu_u^H(\chi(v)))$ from $\mathcal{A}_u^H$; $\mu_u^H(\chi(v))\gets \mu_u^H(\chi(v)) -1$. 
	\item If $\mu_u^H(\chi(v))=0$, then $\ColorH{u} \gets \ColorH{u}\setminus \{\chi(v)\}$, $\ColorL{u} \gets\ColorL{u} \cup \{\chi(v)\} $.
	\item Otherwise, insert $(\chi(v),\mu_u^H(\chi(v)))$ to $\mathcal{A}_u^H$.
\end{enumerate}
\paragraph{Case II: an edge insertion $(u,v)$ such that $\chi(u)\neq \chi(v)$.} In this case, w.l.o.g., suppose that $r(u)<r(v)$, we update the data structures as follows: 
\begin{enumerate}
	\item $\ColorH{u} \gets \ColorH{u}\cup\{\chi(v) \}$, 
	$\ColorL{u}\gets \ColorL{u}\setminus \{\chi(v)\}$, $\mu_u^H(\chi(v))\gets \mu_u^H(\chi(v)) + 1$
	\item Delete $(\chi(v),\mu_u^H(\chi(v))-1)$ from $\mathcal{A}_u^H$ if $\mu_u^H(\chi(v))>1$, insert $(\chi(v),\mu_u^H(\chi(v)))$ to $\mathcal{A}_u^H$.
\end{enumerate}

\paragraph{Case III: procedure $(\divideontimes)$ in the subroutine \textsc{Recolor}($v$).} In the subroutine \textsc{Recolor($v$)}, if the color of $v$ is changed from $c'$ to $c$, then we update the relevant data structure as follows: 

\begin{itemize}%
\item[$(\divideontimes)$] {For every $w\in L_v$: \label{procedure}}
\begin{enumerate}
	\item $\mu_w^H(c')\gets \mu_w^H(c') - 1$
	\item If $\mu_w(c')=0$, then $\ColorH{w}\gets \ColorH{w} \setminus\{c' \}$, $\ColorL{w}\gets \ColorL{w}\cup \{c'\}$, 
	\item $\ColorH{w}\gets \ColorH{w}\cup\{c \}$, $\ColorL{w}\gets \ColorL{w}\setminus\{c\}$, $\mu_w^H(c)\gets \mu_w^H(c)+1$.
	\item Delete $(c,\mu_w^H(c))$ from $\mathcal{A}_w^H$ if $\mu_w^H(c)>1$, and insert $(c,\mu_w^H(c))$ to $\mathcal{A}_w^H$.
\end{enumerate}

\end{itemize}

\subsection{Initialization in $O(n)$ Time}\label{subsec:initialization}

Now we describe how we can reduce the initialization time from $O(n \Delta)$ to $O(n)$. Note that the only part that takes $O(n\Delta)$ time is to initialize $\ColorL{u}$ for each vertex $u$, and the rest part of initialization already only takes $O(n)$ time. 
The main observation is that $\ColorL{u}$ is only needed in the sampling subroutine of \textsc{SetColor}($u)$ and even there only once the degree of
a vertex is at least $\Delta/2$. Since we make the standard assumption that we start with an empty graph, this means that $\Omega(\Delta)$ insertions incident to $u$ must have happened. Thus, we build $\ColorL{u}$ only once this is the case and amortize the cost of building it over these previous $\Omega(\Delta)$ insertions.

To be more precise, we change  the initialization phase as follows:
We do not build $\ColorL{u}$ for any vertex $u$. Note that all other data structure are built as before, but they only have size $O(n)$ and only take time $O(n)$ to build.

When an edge $(u,v)$ is inserted, we check whether one of the endpoints, say $u$, of the newly inserted edge reaches the degree $\Delta/2$ and does not yet have the data structure $\ColorL{u}$. If so, we build $\ColorL{u}$ and its hash table at this point in time $O(\Delta)$. We amortize this cost over the $\Delta/2$ updates that increased the degree of $u$ to $\Delta/2$, adding a constant amortized cost to each of them. (If the other endpoint $v$ also reaches the degree $\Delta/2$, we handle it analogously.)

Note that this does not affect the \textsc{SetColor} algorithm: as long as the degree of a vertex $u$ is less than $\Delta/2$, \textsc{SetColor}($u$) selects a new color by sampling in Step~\ref{alg:recolor_low_degree_2} from 
$\mathcal{B}_u$. To do so $\ColorL{u}$ is not needed:
In time $O(|L_u|)$ time we build the lists and corresponding hash tables for $\ColorM{u}{L} \cup \ColorU{u}{L}$, which together with the maintained list and hash table for $\ColorH{u}$ suffice for us to sample a color from $\mathcal{B}_u$ in $O(1)$ time: We pick a random color from $\cal C$ and test whether it belongs to $\mathcal{B}_u$ by making sure that it does not
belong to $\ColorM{u}{L} \cup \ColorU{u}{L}$ or $\ColorH{u}$. The fact that the degree of $u$ is at most $\Delta/2$ implies that in expectation the second randomly chosen color will belong to $\mathcal{B}_u$.

Once $\ColorL{u}$ and its hash table has been built, it is used in the way as we described before and updated as in Section~\ref{sec:appendix_data_structure}.

\subsection{Deferred Proofs}\label{subsec:app_coloring_proof}

\paragraph{Proof of Lemma \ref{lemma:recolor}.} In the following, we provide the proof of Lemma \ref{lemma:recolor}.
\begin{proof}%
	Note that the first item of the lemma follows from Step \ref{alg:recolor_low_degree_2} of the subroutine \textsc{Recolor}($v$), as the algorithm samples a new color from $\mathcal{B}_v$, whose size is larger than $\Delta+1-\frac{\Delta}{2}= \frac{\Delta}{2}+1$.
	
	Now we consider the second item, i.e., the case that  $|L_v|+|H_v|\geq \frac{\Delta}{2}$. %
	Recall that $\ColorM{v}{L}$ denote the set of colors that have been used by at least two vertices in $L_v$. We first note that
	$$|\ColorM{v}{L}|+ |\mathcal{B}_v| + |\ColorU{v}{L}|=|\ColorM{v}{L}\cup \mathcal{B}_v\cup\ColorU{v}{L}|\geq |\mathcal{C}|-|H_v|= \Delta+1 - |H_v|\geq 1+|L_v|.$$
	Furthermore, by definition of $\ColorM{v}{L}$, it holds that $2|\ColorM{v}{L}|+|\ColorU{v}{L}|\leq |L_v|$. Thus, 
	\begin{eqnarray}
	2|\mathcal{B}_v|+|\ColorU{v}{L}|\geq 2+2|L_v| - |L_v|\geq 2+|L_v|\label{eqn:bvanduv}
	\end{eqnarray}
	
	Now we distinguish two cases. If Step \ref{alg:manynew} happens, i.e., $|\NewL{v}|\geq \frac{1}{10}|L_v|$, then $|\GoodNewL{v}|\geq \frac{1}{20}|L_v|$. This further implies that $|L_v\setminus \GoodNewL{v}|\leq \frac{19}{20}|L_v|$ and thus $|\ColorU{v}{L}\setminus \ColorU{v}{L_{\textrm{new}}^{\textrm{g}}}|\leq |L_v\setminus \GoodNewL{v}|\leq \frac{19}{20}|L_v|$.
	Then if $|\ColorB{v}|\geq \frac{1}{50}|L_v|+1$, then $s=\min\{|\mathcal{B}_v\cup \ColorU{v}{L_{\textrm{new}}^{\textrm{g}}}|, |\GoodNewL{v}|+1\}\geq \frac{1}{50}|L_v|+1$; otherwise (i.e., $|\ColorB{v}|\leq \frac{1}{50}|L_v|)$, by inequality (\ref{eqn:bvanduv}), $|\ColorU{v}{L}|\geq 2 + |L_v| -\frac{2}{50} |L_v|$. Thus
	\begin{eqnarray*}
		|\ColorU{v}{L_{\textrm{new}}^{\textrm{g}}}| = |\ColorU{v}{L}| - |\ColorU{v}{L}\setminus \ColorU{v}{L_{\textrm{new}}^{\textrm{g}}}| \geq 2 + |L_v| -\frac{2}{50} |L_v| - \frac{19}{20}|L_v| = \frac{1}{100}|L_v|+2
	\end{eqnarray*}
	This gives that $s=\min\{|\mathcal{B}_v\cup \ColorU{v}{L_{\textrm{new}}^{\textrm{g}}}|, |\GoodNewL{v}|+1\}\geq \frac{1}{100}|L_v|+1$.

	If Step \ref{alg:manyold} happens, i.e., $|\OldL{v}|> \frac{9}{10}|L_v|$, then $|\GoodOldL{v}|> \frac{9}{20}|L_v|$. Thus $|L_v\setminus \GoodOldL{v}|< \frac{1}{20}|L_v|$ and $|\ColorU{v}{L}\setminus \ColorU{v}{L_{\textrm{old}}^{\textrm{g}}}|\leq |L_v\setminus \GoodOldL{v}|< \frac{1}{20}|L_v|$. Furthermore, 
	\begin{eqnarray*}
		|\mathcal{B}_v\cup \ColorU{v}{L_{\textrm{old}}^{\textrm{g}}}| = |\mathcal{B}_v| + |\ColorU{v}{L_{\textrm{old}}^{\textrm{g}}}| %
		&=&|\mathcal{B}_v| + |\ColorU{v}{L}|-|\ColorU{v}{L}\setminus \ColorU{v}{L_{\textrm{old}}^{\textrm{g}}}|\\
		&\geq& |\mathcal{B}_v| + \frac{|\ColorU{v}{L}|}{2}-|\ColorU{v}{L}\setminus \ColorU{v}{L_{\textrm{old}}^{\textrm{g}}}|\\
		&\geq& 1+ \frac{|L_v|}{2} - \frac{1}{20}|L_v| >1+\frac{1}{20}|L_v|,
	\end{eqnarray*}
	where in the last inequality we use the inequality (\ref{eqn:bvanduv}). Thus  $s=\min\{|\mathcal{B}_v\cup \ColorU{v}{L_{\textrm{old}}^{\textrm{g}}}|, |\GoodOldL{v}|+1\}\geq 1+\frac{1}{20}|L_v|$. This finishes the proof of the Lemma.	
\end{proof}

\subsection{Extension to Work for Changing $\Delta$}\label{sec:chaningdelta}
As we mentioned, we can extend our algorithm to work with changing $\Delta$. (A similar extension was also done in~\cite{BCHN18:coloring}). 
For any time stamp $t\geq 0$, we will maintain a global value $\Delta_t:= \max_{j=1}^t\max_{v\in V} \deg_j(v)$, where $\deg_j(v)$ denotes the degree of $v$ in the graph after $j$ edge updates, that is, $\Delta$ is the maximum degree seen so far (till time $t$). 
Then we have a randomized algorithm for maintaining a $(\Delta_{t}+1)$-coloring. 
More precisely, for any time stamp $j$, for each vertex $v$, we only need to guarantee that the color $\chi(v)$ is chosen from $\{1, \dots,\deg_j(v)+1\}$. 
Then for each vertex $v\in V$, we let $\ColorL{v}\subseteq \mathcal{C}$ consist of all the colors in $\{1, \dots, \deg_j(v)+1 \}$ that have not been assigned to any neighbor $u$ of $v$ for $u\in H_v$. 
It is easy to see that Lemma~\ref{lemma:setcolor}, \ref{lemma:recolor_time} and \ref{lemma:recolor} still hold, and our randomized dynamic coloring algorithm maintains a proper $(\Delta_t+1)$-coloring of the graph $G_t$ at time $t$ with constant amortized update time, for any $t\geq 0$.

Additionally we can keep a variable $\Delta$ such that we rebuild the
data structure every $\Delta n$ operations as follows:
We determine the list of current edges and set $\Delta$ to be the 
maximum degree of the current graph. Then we build the data structure for an empty graph and insert all edges using the insert operation. This increases 
the running time by an amortized constant factor and guarantees that $\Delta$ is the maximum degree in the graph {\em within the last $\Delta n$ updates.}

\section{A Note on Dynamically Estimating the Number of CCs}\label{app:4}
\paragraph{Estimating $\ncc(G)$ with an additive error $\varepsilon n^{O(1)}$.} We note that similar to our previous algorithm (in Section~\ref{subsec:proof_random_ncc}) for estimating $\ncc(G)$ with an additive error $\varepsilon \Thr(G)$, if we simply invoke the static algorithm from Lemma~\ref{lemma:bkm_ncc} to obtain an estimator $\overline{cc}$ for the current graph and re-compute the estimator every $\Theta(\varepsilon n)$ updates, then the corresponding algorithm always maintain an estimator for $\ncc(G)$ with an additive error $\varepsilon n$. That is, we have the following theorem.
\begin{theorem}\label{thm:random_ncc_espn}
	Let $1>\varepsilon >0$ and $0<p<1$. There exists a fully dynamic algorithm that with probability at least $1-p$, maintains an estimator $\overline{cc}$ for the number $\ncc$ of CCs of a graph $G$ s.t.,  $|\overline{cc}-\ncc(G)|\leq \varepsilon\cdot n$. The worst-case time per update operation is $O(\max\{1,\frac{\log(1/\varepsilon)\log(1/p)}{\varepsilon^3 n}\})$.
\end{theorem}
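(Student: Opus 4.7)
The plan is to mirror the phase-based scheme from the proof of Theorem~\ref{thm:random_NCC}, but in a considerably simpler setting: since the additive error is $\varepsilon n$ and $n$ is fixed throughout the update sequence, there is no need to track a changing parameter $\Thr(G)$ and no need to sample non-isolated vertices. In particular the sampling data structure of Section~\ref{subsec:proof_random_ncc} is not required; a trivially maintained adjacency list (plus an array supporting uniform vertex sampling) suffices to support the queries that the static algorithm of Lemma~\ref{lemma:bkm_ncc} uses.

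First I would describe the algorithm. During preprocessing, traverse the initial graph once to obtain $\ncc(G)$ exactly; set the estimator $\overline{cc}$ to this value and start a new phase. Partition the subsequent updates into phases of length $\lfloor \varepsilon n/2 \rfloor$. At the first update of each new phase, run the static algorithm of Lemma~\ref{lemma:bkm_ncc} on the current graph with parameters $\varepsilon/2$ and $p$, obtaining (with probability at least $1-p$) an estimator $\overline{b}$ satisfying $|\overline{b}-\ncc(G)|\leq (\varepsilon/2)n$, and set $\overline{cc}\gets \overline{b}$. Between rebuilds, $\overline{cc}$ is left unchanged; a query simply returns the current value of $\overline{cc}$.

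For correctness, consider any update at which the query is asked. Let $\overline{cc}$ be the value last set at the start of the current phase, for the graph $G^\ast$ at that instant. With probability at least $1-p$, $|\overline{cc}-\ncc(G^\ast)|\leq (\varepsilon/2)n$. Since each edge insertion or deletion changes $\ncc$ by at most $1$, and at most $\varepsilon n/2$ updates have occurred since the beginning of the phase, we get $|\ncc(G)-\ncc(G^\ast)|\leq \varepsilon n/2$, and the triangle inequality yields $|\overline{cc}-\ncc(G)|\leq \varepsilon n$. The algorithm uses fresh independent randomness at the start of each phase and only consults the current graph, so the argument goes through even against an adaptive adversary as in Theorem~\ref{thm:random_NCC}.

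For the running time, the static call at the start of each phase costs $O((1/\varepsilon^2)\log(1/\varepsilon)\log(1/p))$ by Lemma~\ref{lemma:bkm_ncc}, and this cost is incurred once per $\Theta(\varepsilon n)$ updates, giving an amortized per-update cost of $O\bigl(\log(1/\varepsilon)\log(1/p)/(\varepsilon^3 n)\bigr)$. Together with the $O(1)$ cost of maintaining the adjacency-list data structures used by the static algorithm, the amortized worst-case per-update time is $O\bigl(\max\{1,\log(1/\varepsilon)\log(1/p)/(\varepsilon^3 n)\}\bigr)$. The only nonroutine part is the de-amortization: I would use the standard global-rebuilding trick, running the static algorithm incrementally in the background and spreading its work evenly across the $\Theta(\varepsilon n)$ updates of the phase, which preserves the above bound as a worst-case guarantee. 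The main (minor) obstacle is simply verifying that when $\varepsilon n = O(1)$ the ``rebuild every update'' regime is covered by the $O(1)$ term in the $\max$; everything else is essentially a direct specialization of the argument already carried out for Theorem~\ref{thm:random_NCC}.
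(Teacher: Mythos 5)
Your proof is correct and takes essentially the same approach as the paper: the paper explicitly presents Theorem~\ref{thm:random_ncc_espn} as the direct simplification of Theorem~\ref{thm:random_NCC} obtained by invoking Lemma~\ref{lemma:bkm_ncc} every $\Theta(\varepsilon n)$ updates and exploiting that each update changes $\ncc$ by at most one. Your observation that the non-isolated-vertex sampling machinery can be dropped (uniform vertex sampling from an array suffices) matches the paper's intent.
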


The following is a direct corollary of the above theorem.
\begin{corollary}\label{cor:random_NCC}
	Let $\varepsilon >0$ and let $c$ be any constant such that $c\geq 1$. There exists a fully dynamic algorithm that with probability at least $1-\frac{1}{n^c}$, maintains an estimator $\overline{cc}$ for the number $\ncc$ of CCs of a graph $G$ s.t.,  $|\overline{cc}-\ncc(G)|\leq \varepsilon n^{2/3}\log^{2/3}n$. The worst-case time per update operation is $O(\varepsilon^{-3})$. 
\end{corollary}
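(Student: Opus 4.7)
The plan is to obtain Corollary \ref{cor:random_NCC} as a direct parameter instantiation of Theorem \ref{thm:random_ncc_espn}. The target additive error is $\varepsilon n^{2/3} \log^{2/3} n$ and the theorem delivers an additive error of $\varepsilon' n$ at cost $O(\max\{1, \tfrac{\log(1/\varepsilon')\log(1/p)}{\varepsilon'^3 n}\})$ worst-case per update with success probability $1-p$. So the natural choice is to set
\[
\varepsilon' := \varepsilon \cdot n^{-1/3} \log^{2/3} n, \qquad p := 1/n^c.
\]

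First I would verify the additive error: $\varepsilon' n = \varepsilon n^{2/3} \log^{2/3} n$, which matches the target. Next I would bound the running time. A direct computation gives
\[
\varepsilon'^3 \cdot n = \varepsilon^3 \cdot n^{-1} \log^{2} n \cdot n = \varepsilon^3 \log^2 n.
\]
Also $\log(1/\varepsilon') = \tfrac{1}{3}\log n - \log\varepsilon - \tfrac{2}{3}\log\log n = O(\log n)$ (absorbing the $\log(1/\varepsilon)$ term into the $O(\cdot)$ constant under standard assumptions on $\varepsilon$, e.g.~$\varepsilon \ge 1/\mathrm{poly}(n)$, which is the only non-trivial regime anyway since otherwise the additive error exceeds $n$), and $\log(1/p) = c \log n = O(\log n)$. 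Therefore
\[
\frac{\log(1/\varepsilon')\log(1/p)}{\varepsilon'^3 n} = \frac{O(\log n) \cdot O(\log n)}{\varepsilon^3 \log^2 n} = O(\varepsilon^{-3}),
\]
so the worst-case update time guaranteed by Theorem \ref{thm:random_ncc_espn} is $O(\max\{1, \varepsilon^{-3}\}) = O(\varepsilon^{-3})$.

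Putting these together, invoking Theorem \ref{thm:random_ncc_espn} with parameters $\varepsilon'$ and $p$ as above yields an estimator $\overline{cc}$ with $|\overline{cc} - \ncc(G)| \le \varepsilon' n = \varepsilon n^{2/3} \log^{2/3} n$ with probability at least $1 - 1/n^c$, at worst-case update time $O(\varepsilon^{-3})$. There is no real obstacle here beyond the routine arithmetic; the only step that deserves a moment of attention is ensuring $\log(1/\varepsilon') = O(\log n)$, which holds in the meaningful range of $\varepsilon$ (for $\varepsilon$ so small that $\varepsilon n^{2/3}\log^{2/3} n \ge n$ the statement is vacuous).
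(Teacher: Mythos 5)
Your main computation—instantiating Theorem~\ref{thm:random_ncc_espn} with $\varepsilon' = \varepsilon n^{-1/3}\log^{2/3}n$ and $p = 1/n^c$—is exactly the paper's approach, and the arithmetic (verifying $\varepsilon' n = \varepsilon n^{2/3}\log^{2/3}n$ and $\varepsilon'^3 n = \varepsilon^3\log^2 n$) is correct. The bound $\log(1/\varepsilon') = O(\log n)$ does indeed hold whenever $\varepsilon \ge 1/\mathrm{poly}(n)$, and then the update time is $O(\varepsilon^{-3})$ as claimed.

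However, your justification for dismissing the very-small-$\varepsilon$ regime is wrong, and this leaves a genuine gap. You write that for $\varepsilon$ too small ``the additive error exceeds $n$'' so the statement is vacuous. That has the inequality backwards: the additive error $\varepsilon n^{2/3}\log^{2/3}n$ exceeds $n$ only for \emph{large} $\varepsilon$ (specifically $\varepsilon \gtrsim n^{1/3}/\log^{2/3}n$); for small $\varepsilon$ the error bound shrinks, the guarantee becomes \emph{more} demanding, and simultaneously $\log(1/\varepsilon')$ may exceed $O(\log n)$, so your time bound no longer follows. Since the corollary is stated for all $\varepsilon>0$, you cannot skip this regime. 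The paper closes this by a simple case split: if $\varepsilon < n^{-2/3}$, then $\varepsilon^{-3} \ge n^2$, and one can afford to recompute $\ncc(G)$ exactly from scratch by BFS after every update in $O(n^2) = O(\varepsilon^{-3})$ worst-case time (giving zero additive error). If $\varepsilon \ge n^{-2/3}$, then $\log(1/\varepsilon) \le \tfrac{2}{3}\log n$, so $\log(1/\varepsilon') = O(\log n)$ holds unconditionally and your parameter instantiation goes through. Adding that split makes your argument complete.
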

\begin{proof}
	If $\varepsilon < n^{-\frac23}$, then for each update, one can use the naive BFS algorithm to exactly compute $\ncc(G)$, which runs in time $O(n^2)=O(\varepsilon^{-3})$. If $\varepsilon \geq n^{-\frac23}$, we can apply Theorem~\ref{thm:random_ncc_espn} with parameters $p=\frac{1}{n^c}$, $\varepsilon' = \varepsilon n^{-1/3}\cdot\log^{2/3}n$, to obtain an $\overline{cc}$ for $\ncc(G)$ with an additive error $\varepsilon n^{2/3}\log^{2/3}n$. The corresponding dynamic algorithm has update time $O(\varepsilon^{-3}\cdot (1+\frac{\log(1/\varepsilon)}{\log n}))=O(\varepsilon^{-3})$.%
\end{proof}

\textbf{Remark:} We cannot expect to be able to get a constant-time algorithm for maintaining the number of connected components with an additive error of $1$ or a multiplicate error of $2$: Any such algorithm would be able to decide whether the graph is connected or not, contradicting the  $\Omega(\log n)$ lower bound for dynamically maintaining whether a graph is connected~\cite{PatrascuD06}.

\end{document}